\newcolumntype{L}{D{.}{.}{2,3}}
\definecolor{winered}{rgb}{0.5,0,0}
\numberwithin{equation}{section}
\newtheorem{theorem}{Theorem}[section]
\newtheorem{lemma}[theorem]{Lemma}
\newtheorem{proposition}[theorem]{Proposition}
\newtheorem{corollary}[theorem]{Corollary}
\theoremstyle{definition}
\newtheorem{example}{Example}
{\theoremstyle{plain}
	\newtheorem{assumption}{Assumption}}
\definecolor{my}{rgb}{0.05,0.05,0.5}
\definecolor{myBlue}{rgb}{.1,.1,.5}
\definecolor{myGreen}{rgb}{0,.4,0}
\definecolor{myRed}{rgb}{.25,0.15,.5}
\definecolor{my}{rgb}{0.05,0.05,0.5}
\newcommand{\eps}{\varepsilon}
\newcommand{\cond}{\displaystyle \stackrel{d}{\longrightarrow}}
\newcommand{\dista}{\displaystyle \stackrel{a}{\sim}}
\newcommand{\conp}{\stackrel{p}{\longrightarrow}}
\renewcommand{\liminf}{\displaystyle \operatornamewithlimits{\lim\inf \ }}
\renewcommand{\mathbf}[1]{\textbf{\textit{#1}}}
\newcommand{\E}{\operatorname{E}}
\newcommand{\V}{\operatorname{Var}}
\newcommand{\Rmnum}[1]{\expandafter\@slowromancap\romannumeral #1@}
\newcommand{\supp}{\mathrm{supp}}
\newcommand{\sm}{\mathtt{s}_M}
\newcommand{\hsm}{\hat{\mathtt{s}}_M}
\newcommand{\hshm}{\hat{\mathtt{s}}_{\hat{M}}}
\newfont{\bbf}{cmbx12 scaled 1435}
\begin{document}
	
				\title{{Penalized Likelihood Inference with Survey Data}\thanks{
			The authors acknowledge financial support of the Catalyzing Interdisciplinary Research Clusters Initiative, York University. This research is part of the project ``Digital Currencies'', approved and funded by the office of VPRI and participating faculties of York University. The authors acknowledge access to the data provided by RDC and 
		Statistics Canada (Project 21-MAPA YRK-721).}}
	\date{\today}
	\author{
		Joann Jasiak\thanks{York University, \texttt{jasiakj@yorku.ca.}}
		\and 
		Purevdorj Tuvaandorj\thanks{York University, \texttt{tpujee@yorku.ca.}}
}
\maketitle
\begin{abstract}
This paper extends three Lasso inferential methods, Debiased Lasso, $C(\alpha)$ and Selective Inference to a survey environment. 
We establish the asymptotic validity of the inference procedures in generalized linear models with survey weights and/or heteroskedasticity.   
Moreover, we generalize the methods to inference on nonlinear parameter functions e.g. the average marginal effect in survey logit models. 
We illustrate 
the effectiveness of the approach in simulated data and Canadian Internet Use Survey 2020 data. 	
	\begin{description}
		\item[Keywords:]Survey data, Survey weights, Lasso, Logit, Average Marginal Effect, Post-Selection Inference
	\end{description}
\end{abstract}
\section{Introduction}
Survey data are widely used in many disciplines of social sciences. The statistical methodology for survey samples has been well-developed and culminated in a large body of literature \citep[see e.g.][]{Cameron-Trivedi(2009), Wooldridge(2010),Fuller(2011),Thompson(2012)}.\par 

Despite the rapid development of machine learning/high-dimensional econometrics and the increasing availability of big datasets in recent years, the research on how to adapt/apply these high-dimensional statistical methods to survey data has been lagging. This paper aims to fill this gap in the literature by providing extensions of Lasso inference methods to survey environment. Our hope is to enrich the toolbox of practitioners who want to apply the high-dimensional regression methods to survey data.\par 

Most prediction-oriented methods, including the Lasso, trade off bias and variance, and consequently, deliver a biased estimate that is not suitable for making inference on the model coefficients. Several post-Lasso-selection inference methods that mitigate this shortcoming have been proposed in the literature. Among others, \cite{Zhang-Zhang(2014)} and \cite{Javanmard-Montanari(2014)} propose a
debiased Lasso (DB) method which is based on one-step iteration of the initial Lasso estimator. \cite{Belloni-Chernozhukov-Wei(2016)} propose 
double selection and $C(\alpha)$-type methods in a generalized linear model (GLM) that satisfies sparsity assumptions. 
The latter is based on an estimating equation orthogonalized against the nuisance parameter ``score" function.\par

\cite{Lee-Sun-Sun-Taylor(2016)} propose a selective inference (SI) method for the parameters in a linear model selected 
by the Lasso. The method is extended to a homoskedastic GLM by \cite{Taylor-Tibshirani(2018)}. 
In SI, the target parameters are determined from the data as opposed to being fixed before the the selection events. This feature makes the post-selection method conceptually different from the $C(\alpha)$ and DB methods, where the target parameters are the population parameters.\par 

This paper presents two rather straightforward results. We first extend the $C(\alpha)$, DB and SI methods to a GLM estimated by the Lasso to accommodate survey weights and/or heteroskedasticity. The survey framework we adopt is similar to that of \cite{Wooldridge(2001)}. Accounting for survey weights naturally leads to conditional heteroskedasticity which, in turn, brings about an extra challenge because the active and inactive constraints of the Karush-Kuhn-Tucker condition 
for the Lasso problem are no longer asymptotically independent, and conditioning only on the active constraints as considered by \cite{Taylor-Tibshirani(2018)} 
for a homoskedastic GLM may lead to invalid inference.\par  

Second, we establish the asymptotic validity of the above three methods for inference on nonlinear parameter functions 
such as the average marginal effects (AMEs) in a survey logit model.\par 
There exist very few studies on the application of Lasso methods to survey data. \cite{Mcconville-etal(2017)} consider a survey-weighted linear Lasso regression 
and develop a finite population asymptotic theory for Lasso estimators with a fixed number of regressors. In contrast, we consider a survey-weighted GLM and establish the asymptotic validity inference procedures under the usual infinite population framework, see e.g. \cite{Wooldridge(2001),Wooldridge(2010)} and 
\cite{Cameron-Trivedi(2009)} for the latter. Additionally, we allow for a growing number of covariates in the survey extensions of the debiased Lasso and $C(\alpha)$ methods.\par

The paper is organized as follows. Section \ref{sec: model} lays out the model framework. 
We propose extensions of the selective inference, debiased Lasso and $C(\alpha)$/orthogonalization methods in Section \ref{sec: infer}. 
Section \ref{sec: logit} applies the proposed methods to inference on AMEs in a survey logit model.
Section \ref{sec: simul} provides a simulation evidence on the properties of the proposed methods and 
Section \ref{sec: app} presents an empirical application to Canadian Internet Use Survey 2020 data. We conclude in Section \ref{sec: conc}.\par

\paragraph*{Notations and terminology}
Let $1(\cdot)$ denote the indicator function, and $\lambda_{\min}(A)$ and $\lambda_{\max}(A)$ denote the smallest and the largest eigenvalue of a symmetric matrix $A$, respectively. For a $k\times 1$ vector $a=(a_1,\dots, a_k)'$, we define 
$\Vert a\Vert_0\equiv \supp(a)$ (the number of nonzero components of the vector $a$) and 
$\Vert a\Vert_1\equiv \sum_{i=1}^k\vert a_j\vert$. 
For a real matrix 
$A=(a_{ij})$, let $\Vert A\Vert_{\infty}\equiv \max_{i,j}\left|a_{ij}\right|$, and $\Vert A\Vert=\sqrt{\mathrm{tr}(A'A)}$ 
and $\Vert A\Vert_2=\sqrt{\lambda_{\max}(A'A)}$ denote its Frobenius and spectral norms, respectively. 
The sub-Gaussian norm of a random variable $X$ is defined as 
\begin{equation}
\Vert X\Vert_{\psi_2}\equiv \sup_{m\geq 1}m^{-1/2}(\E[\vert X\vert^m])^{1/m}.
\end{equation}
A random variable $X$ is called sub-Gaussian if $\Vert X\Vert_{\psi_2}\leq C<\infty$ for a constant 
$C>0$. 	A random vector $X\in \mathbb{R}^p$ is called sub-Gaussian if the one-dimensional marginals $X'b$ are sub-gaussian random variables for all $b\in\mathbb{R}^p$. The sub-Gaussian norm for the random vector is defined as 
	$\Vert X\Vert_{\psi_2}\equiv \sup_{\Vert b\Vert=1}\Vert X'b\Vert_{\psi_2}$. 
The sub-exponential norm of a random variable $X\in \mathbb{R}$ is 
\begin{equation}  
	\Vert X\Vert_{\psi_1} \equiv \inf \{t>0: \E[\exp(|X|/t)]\leq 2\}.
\end{equation}	 
Moreover, let 
 $\bm{1}_{m}=(1,\dots, 1)'$ and $0_{m}=(0,\dots, 0)'$ denote the $m\times 1$ vector of ones and zeros, respectively, and $e_{jm}$ denote 
the $m\times 1$ unit vector whose $j$-th element is $1$ and the remaining elements are $0$.\par  
Let 
$F(x; \mu, \sigma^2, a,b)$ denote the CDF of a ${N}(\mu,\sigma^2)$ random variable
truncated on the interval $[a,b]$, that is,
\begin{equation*}
	F(x; \mu, \sigma^2, a,b)\equiv \frac{\Phi((x-\mu)/\sigma)-\Phi((a-\mu)/\sigma)}
	{\Phi((b-\mu)/\sigma)-\Phi((a-\mu)/\sigma)},
\end{equation*}
where $\Phi(\cdot)$ is the CDF of a $N(0,1)$ random variable. Also, let $\Lambda(z)\equiv \exp(z)/(1+\exp(z))$ denote the CDF of logistic distribution.\par
We abbreviate central limit theorem and continuous mapping theorem as CLT and CMT, respectively. 
\section{Model}\label{sec: model}
We consider a GLM that specifies 
the conditional density of a scalar outcome variable $y_{i}$ given a $(p+1)\times 1$ vector of covariates $x_{i}$ which includes a constant  
as 
\begin{equation*}
f(y_i\vert x_i,\theta_0)=\exp(y_ix_i'\theta_0-a(x_i'\theta_0))c(y_i),\quad i=1,\dots, n,
\end{equation*}
where $\theta_0$ is the true value of the parameter vector 
$\theta\in\mathbb{R}^{p+1}$, and $a(\cdot)$ and $c(\cdot)$ are known functions.
To each vector of observations $(y_i, x_i')', i=1,\dots, n,$ there corresponds a positive, bounded survey weight denoted as 
$w_{i}, i=1,\dots, n$.\footnote{In our framework, $\{(y_i, x_i', w_i)'\}_{i=1}^n$ actually forms a triangular array $\{\{(y_{ni}, x_{ni}', w_{ni})'\}_{i=1}^n$. We drop the index $n$ for notational simplicity.}
Let $g(y, x'\theta)\equiv -\log f(y, x'\theta)$ and 
define the weighted log-likelihood function as follows:
\begin{equation}\label{eq: WLL}
	L(\theta)\equiv -n^{-1}\sum_{i=1}^nw_i g(y_{i}, x_{i}'\theta).
\end{equation}
As is well known, the weighted likelihood framework is commonly used in survey data analysis \citep{Manski-Lerman(1977),Cameron-Trivedi(2009),Wooldridge(2010)}, and accommodates, among others, the following stratification schemes. 
\begin{example}[Standard stratified sampling]\label{ex: SS}
	Let $\mathcal{Z}$ be the population for $z=(y,{x}')'$ which is assumed to be infinite (or contain a large number of units).  $\mathcal{Z}$ is stratified into $J$, nonempty, mutually exclusive and exhaustive strata such that $\mathcal{Z}=\bigcup_{j=1}^J\mathcal{Z}_j$.  $n_j$ observations $\{z_{ij}\}_{i=1}^{n_j}=\{(y_{ij}, x_{ij}')'\}_{i=1}^{n_j}$ are sampled randomly from each stratum $\mathcal{Z}_j, j=1,\dots, J$. The strata sample sizes, $n_j$s, are non-random, and the population frequencies 
$q_j=P[z\in \mathcal{Z}_j]>0, j=1,\dots, J,$ are assumed to be known. The weights on the observations from the $j$-th stratum are given by 
$w_{n_0+\dots +n_{j-1}+1}=\dots= w_{n_0+\dots +n_{j-1}+n_j}=q_j/(n_j/n), j=1,\dots, J,$ with $n_0=0$ and $n=\sum_{j=1}^Jn_j$. 
Let us re-label the observations as $z_{ij}=z_{n_0+\dots +n_{j-1}+i}, i=1,\dots, n_j, j=1,\dots, J$.  
The corresponding 
likelihood function is then 
\begin{equation}\label{eq: SS loglik}
	L(\theta)=-\sum_{j=1}^Jq_j \left(n_j^{-1}\sum_{i=1}^{n_j}g(y_{ij}, x_{ij}'\theta)\right)= -n^{-1}\sum_{i=1}^nw_i g(y_{i}, x_{i}'\theta).
\end{equation}	
\end{example}
\begin{example}[Exogenous stratification]\label{ex: ES}
Let $\mathcal{Z}=\mathcal{Y}\times \mathcal{X}$, where $\mathcal{Y}$ and $\mathcal{X}$ are the sample spaces for $y$ and $x$. 
	The population is stratified into $J$ strata according to a deterministic function of ${x}_i$:
	$\mathcal{X}=\cup_{j=1}^J\mathcal{X}_j$, where $\mathcal{X}_j, j=1\dots, J,$ are mutually exclusive. 
	The population frequencies 
	$q_j=P[z\in \mathcal{Z}_j]=P[x\in\mathcal{X}_j]>0, j=1,\dots, J,$ are assumed to be known. 
Given $n=\sum_{j=1}^Jn_j$ observations 
$\{z_{ij}\}_{i=1,\dots, n_j,\, j=1,\dots, J}=\{(y_{ij}, x_{ij}')'\}_{i=1,\dots,n_j,\, j=1,\dots, J}$, where $\{z_{ij}\}_{i=1}^{n_j}=\{(y_{ij}, x_{ij}')'\}_{i=1}^{n_j}$ sampled randomly from each stratum $\mathcal{Z}_j, j=1,\dots, J$, the likelihood function can be formulated as in 
	\eqref{eq: SS loglik}.
\end{example}
\noindent \cite{Wooldridge(2001)} established the asymptotic properties of $M$-estimator under the above two sampling schemes. 
We use the same sampling schemes to establish the asymptotic validity
of the Lasso-based inference methods described below.\par  

The score function, the sample information and negative Hessian matrices corresponding to \eqref{eq: WLL} are defined as 
\begin{align}
	S(\theta)&\equiv \frac{\partial L(\theta)}{\partial \theta}
	=-n^{-1}\sum_{i=1}^nw_ix_{i}\dot{g}(y_i,x_i'\theta),\quad 
\dot{g}(y, t) \equiv \frac{\partial g(y, t)}{\partial t},\label{eq: scoref}\\
	\hat{I}(\theta)
	&\equiv n^{-1}\sum_{i=1}^nw_i^2x_ix_i'\dot{g}(y_i,x_i'\theta)^2,\label{eq: info}\\
	\hat{H}(\theta)
	&\equiv -\frac{\partial^2 {L}(\theta)}{\partial \theta\partial \theta'}=n^{-1}\sum_{i=1}^nw_ix_ix_i'\ddot{g}(y_i,x_i'\theta),\quad \ddot{g}(y, t)\equiv \frac{\partial^2 g(y, t)}{\partial t^2}.\label{eq: hessian}
\end{align}
Moreover, we define $H(\theta_0)\equiv \E[\hat{H}(\theta_0)]$ and $I(\theta_0)\equiv \E[\hat{I}(\theta_0)]$.\par 

Let us partition $x_i=(1, \tilde{x}_i')'\in\mathbb{R}^{p+1}$ and $\theta=(\alpha, \beta')'\in\mathbb{R}^{p+1}$, where $\tilde{x}_i=(\tilde{x}_{i1},\dots, \tilde{x}_{ip})'\in\mathbb{R}^p$, $\alpha\in\mathbb{R}$ and $\beta\in\mathbb{R}^p$ so that $x_i'\theta=\alpha+\tilde{x}_i'\beta$.\par

In this paper, the variable selection, estimation and inference are performed using 
a survey-weighted Lasso where the negative of the weighted log-likelihood function \eqref{eq: WLL} is minimized subject to $\ell_1$ penalty on the slope parameters: 
\begin{equation}\label{eq: logitlasso}
	\min_{\theta=(\alpha, \beta')'\in\mathbb{R}^{p+1}} \left(-{L}(\theta)+\lambda \Vert\beta\Vert_1\right),
\end{equation}
where $\lambda\geq 0$ is a tuning parameter. Note here that, as it is standard in the Lasso literature, only the ``slope" parameters in $\beta=(\beta_1,\dots, \beta_p)'$ are penalized. The $j$-th elements of $\theta$ and $\theta_0$ are denoted as $\theta_{(j)}$ and  $\theta_{0(j)}$, respectively.\par 

Hereafter, $M\subseteq\{1,\dots, p+1\}$ denotes the subset of regressors that includes the constant term and non-constant regressors with a vector of (non-zero) Lasso estimates 
 $\hat{\beta}_M\in\mathbb{R}^{|M|-1}$ and 
$\hsm\equiv \mathrm{sign}(\hat{\beta}_M)\in\{-1,1\}^{\vert M\vert-1}$. 
Also, let ${\beta}_M\in\mathbb{R}^{|M|-1}$ be the subvector of $\beta$ corresponding to $M$, ${\theta}_M=({\alpha}, {\beta}_M')'$ and $\hat{\theta}_M=(\hat{\alpha}, \hat{\beta}_M')'$.\par

The Lasso solution in \eqref{eq: logitlasso}, with $\lambda$ fixed, returns a random subset of regressors $\hat{M}\subset\{1,\dots, p+1\}$. Since the intercept $\alpha$ is not penalized, $\hat{M}$ always includes the constant term. 
The target parameter vector in the selective inference considered in Section \ref{subsub: selective} is 
${\theta}_{M0}=({\alpha}_0, {\beta}_{M0}')'$, the true value of $\theta_M$ in the selected model $\hat{M}=M$. In contrast, 
the DB and $C(\alpha)$ in Sections \ref{subsec: db Lasso}--\ref{subsec: Calpha} target the entire vector $\theta_0$. 

Let $\sm\equiv \mathrm{sign}({\beta}_{M0})\in\{-1,1\}^{\vert M\vert-1}$, and denote by $m_0\equiv \Vert\theta_0\Vert_0$ the number of nonzero elements of $\theta_0$. Moreover, let $\theta_{-M}\in\mathbb{R}^{p+1-|M|}$ be the subvector of parameters other than $\theta_M$ and $L_M(\theta_M)$ be the weighted log-likelihood function for the selected model 
with parameters $\theta_M$. It is clear that $L_M(\theta_M)$ can be obtained by evaluating $L(\theta)$ at $\theta^{*}$ whose non-zero elements are $\theta_M$ and 
remaining $p+1-|M|$ elements are 0. We partition \eqref{eq: scoref}-\eqref{eq: hessian} as follows: 
\begin{align*}
	S(\theta)&=[S_M(\theta)', S_{-M}(\theta)']',\ S_M(\theta)\in\mathbb{R}^{|M|},\ S_{-M}(\theta)\in\mathbb{R}^{p+1-|M|},\\
	\hat{H}(\theta)
	&=
	\begin{bmatrix}
		\hat{H}_{M}(\theta_M)& \hat{H}_{M(-M)}(\theta_M)\\
		\hat{H}_{-MM}(\theta_M)& \hat{H}_{-M}(\theta_M)
	\end{bmatrix},\\
	\hat{I}(\theta)
	&=
	\begin{bmatrix}
		\hat{I}_{M}(\theta_M)& \hat{I}_{M(-M)}(\theta_M)\\
		\hat{I}_{-MM}(\theta_M)& \hat{I}_{-M}(\theta_M)
	\end{bmatrix},
\end{align*}
where 
$\hat{H}_M(\theta_M)\in\mathbb{R}^{|M|\times |M|}$ and $S_M(\theta_M)\in\mathbb{R}^{|M|}$ denote the negative Hessian matrix and score functions corresponding to $\theta_M$, respectively.\par  
With the partitioning above, 
\cite{Lee-Sun-Sun-Taylor(2016)} and \cite{Taylor-Tibshirani(2018)} show that 
the event $\{\hat{M}=M, \hshm=\sm\}$ holds if and only if there exist random vectors $\hat{\theta}_M\in\mathbb{R}^{|M|}$ and $\mathtt{u}\in\mathbb{R}^{p+1-|M|}$ in the Karush-Kuhn-Tucker condition for the problem \eqref{eq: logitlasso} such that 
\begin{align}
	\frac{\partial L_M(\hat{\theta}_M)}{\partial \theta_M}-(0, \lambda\, \sm')'
	&=S_M(\hat{\theta}_M)-(0, \lambda\, \sm')'=0,\quad \sm=\mathrm{sign}(\hat{\beta}_M)\in\{-1,1\}^{\vert M\vert-1},\label{eq:KKT1}\\
	\frac{\partial L_M(\hat{\theta}_M)}{\partial \theta_{-M}}-\lambda \mathtt{u}
	&=S_{-M}(\hat{\theta}_M)-\lambda \texttt{u}=0,\quad \mathtt{u}\in\mathbb{R}^{p+1-|M|},\quad \Vert \mathtt{u}\Vert_{\infty}<1.\label{eq:KKT2}
\end{align}
\section{Post-Lasso selection inference}\label{sec: infer}
We establish the asymptotic validity of the three inference methods under the following 
assumptions imposed directly on the loss function $g(y, t)$ which are similar to the assumptions 
employed in \cite{vandeGeer-etal(2014)} and 
\cite{Xia-Nan-Li(2021)}. 
\begin{assumption}[Asymptotic validity]\label{A: AsyValid}
	\leavevmode
	\begin{enumerate}[label={(\alph*)}]
		\item \label{AsyValid max}
		$\{(y_i, x_i')'\}_{i=1}^n$ are independent with $\max_{1\leq i\leq n}a_i<C_u<\infty$ a.s. where  
\begin{equation*}		
		a_i\in\{\Vert x_i\Vert_{\psi_2}, \Vert x_{i}\Vert_{\infty}, \Vert X\theta_0\Vert_{\infty}\}.
\end{equation*}		 
Moreover, $w_i$ is non-random with $0<C_l<w_i<C_u$ for all $n, i$.
		\item \label{AsyValid eval}
		For $A\in \{H(\theta_0), I(\theta_0), \E[n^{-1}X'X]\}$, there exist positive constants $\lambda_l$ and $\lambda_u$ such that 
		$0<\lambda_l\leq \lambda_{\min}(A)\leq \lambda_{\max}(A)\leq \lambda_{u}<\infty$. 
		\item \label{AsyValid Lip} 
		The function $g(y, t)\equiv a(t)-yt-\log c(y)$ is convex in $t\in\mathbb{R}$ for all $y$, and 
	twice differentiable with 
 $\dot{g}(y, t) \equiv {\partial g(y, t)}/{\partial t}$ and $\ddot{g}(y, t)\equiv {\partial^2 g(y, t)} / {\partial t^2} $ for all $(y,t)$. 
There exist a positive definite matrix $H$ and $\eta>0$ such that $\lambda_{\min}(H)>\lambda_l>0$ and 
\begin{equation}\label{A: Strong Con}
n^{-1}\sum_{i=1}^n\E[w_i(g(y_i,x_i'\theta)-g(y_i,x_i'\theta_0))]\geq \Vert H^{1/2}(\theta-\theta_0)\Vert^2
\end{equation}
 for all $\Vert X(\theta-\theta_0)\Vert_{\infty}<\eta$. Furthermore, $\ddot{g}(y,t)$ is Lipschitz with some constant $L_0>0$:
\begin{equation}\label{A: dg lip}
\max_{t_0 \in \{x_i' \theta_0\}}   \sup_{\max(|t-t_0|, |\tilde{t} - t_0|) \leq \eta}  \sup_{y \in \mathcal{Y}}\frac{| \ddot{g}(y,t)-\ddot{g}(y,\tilde{t})|}{|t-\tilde{t}|}\leq L_0,
\end{equation}
and 
\begin{align}
&\max_{t_0 \in \{x_i' \theta_0\}}\sup_{y \in \mathcal{Y}} |\dot{g}(y,t_0)| \leq C_u,\label{A: dg1}\\
&\max_{t_0 \in \{x_i' \theta_0\}} \sup_{|t-t_0|\leq \eta} \sup_{y \in \mathcal{Y}} |\ddot{g}(y,t)| \leq C_u.\label{A: dg2} 
\end{align}
\end{enumerate}
\end{assumption}
The boundedness of the variables stated in Assumption \ref{A: AsyValid}\ref{AsyValid max} is employed frequently in the literature, see \cite{Negahban-etal(2012)}, \cite{vandeGeer-etal(2014)} and 
\cite{Xia-Nan-Li(2021)}. To deal with survey samples, we relax the i.i.d. assumption used in these papers, and although the proofs of validity of the inference procedures considered below are quite standard, much of the effort of the proof goes into verifying that the same results that hold in an i.i.d. setup carries over to independent non-identically distributed (i.n.i.d.) samples.\par

The weight $w_i$ is deterministic but we require that it is bounded from above and below
away from $0$, so it rules out strata that become asymptotically degenerate. Moreover, the weights 
do not need to sum to 1. In the \texttt{R} package \texttt{glmnet}, the weights are rescaled to sum to $n$.\par

Assumption \ref{A: AsyValid}\ref{AsyValid eval} is a mild condition that ensures nonsingularity of the Hessian and information matrices 
in the case of slowly diverging number of covariates considered below.
Assumption \ref{A: AsyValid}\ref{AsyValid Lip} is standard and requires the convexity and boundedness 
of the first two derivatives and Lipschitz continuity of the second derivative of $g(y, t)$ with respect to 
$t$ uniformly in a neighborhood of $x_i'\theta_0$ (see \cite{vandeGeer-etal(2014)} and \cite{Xia-Nan-Li(2021)}).\par

The condition \eqref{A: Strong Con} is essentially the the Quadratic Margin Condition needed for the consistency of the Lasso \citep{Buhlmann-vandeGeer(2011)} and see also \cite{Negahban-etal(2012)} for a related (stochastic) Restricted Strong Convexity condition. A sufficient condition for \eqref{A: Strong Con} is that $\ddot{g}(y, x'\theta)$ is bounded away from zero locally around $x_i'\theta_0$ for all $i=1,\dots,n$. 


\subsection{Selective inference}\label{subsub: selective}
In this section, we extend the selective inference argument of \cite{Taylor-Tibshirani(2018)} for a homoskedastic GLM to a GLM with survey weights and/or heteroskedasticity. In the SI, 
 the target parameters are the coefficients selected by the Lasso. As a result, they are random before the selection, but not so conditional on the Lassso selection events. This feature distinguishes the selective inference method from the $C(\alpha)$ and debiased Lasso inference where the target parameters are the population parameters. See \cite{Lee-Sun-Sun-Taylor(2016)} for further discussions about the difference between the SI and other inference methods.\par 
 As in \cite{Taylor-Tibshirani(2018)}, we fix $\lambda>0$ and consider the following one-step estimator 
\begin{equation}\label{eq: onestep}
	\tilde{\theta}_M\equiv \hat{\theta}_M+\hat{H}_M(\hat{\theta}_M)^{-1}S_M(\hat{\theta}_M),
\end{equation}
where  
$S_M(\hat{\theta}_M)=(0, \lambda\, \sm')'$, from which we obtain the one-step estimator of $\beta_{M0}$:
\begin{equation}\label{eq: one-step2}
	\tilde{\beta}_M=\hat{\beta}_M+[0_{|M|-1}, I_{|M|-1}]\hat{H}_M(\hat{\theta}_M)^{-1}S_M(\hat{\theta}_M).
\end{equation}
The SI is based on the asymptotic distribution of $	\tilde{\beta}_M$ conditional 
on the selection event $\hat{M}=M$ and $\hshm=\sm$.\footnote{\cite{Lee-Sun-Sun-Taylor(2016)} also propose a test statistic which is conditional on $\hat{M}=M$ only 
by taking the union of the events characterized by polyhedral constraints over all possible combinations of the signs 
of the selected coefficients.}
From \eqref{eq:KKT1} and \eqref{eq: one-step2}, it follows that 
\begin{equation*}
\sm=\mathrm{sign}\left(\tilde{\beta}_M-[0_{|M|-1}, I_{|M|-1}]\hat{H}_M(\hat{\theta}_M)^{-1}(0, \lambda\, \sm')'\right),
\end{equation*}
hence 
\begin{equation}\label{eq: activecon}
	\mathrm{diag}(\sm)\left(\tilde{\beta}_M-[0_{|M|-1}, I_{|M|-1}]\hat{H}_M(\hat{\theta}_M)^{-1}(0, \lambda\,  \sm')'\right)\geq 0. 
\end{equation}
As argued by \cite{Taylor-Tibshirani(2018)} (see Equation (21) therein), 
in a homoskedastic GLM, the random quantities appearing in the active and inactive constraints \eqref{eq:KKT1} and \eqref{eq:KKT2} 
are asymptotically independent (after suitable normalizations). However, this no longer holds in our setup because 
the covariance matrix of the limiting Gaussian random variables is not block-diagonal in the presence of survey weights and heteroskedasticity.  
This entails conditioning not only on the active constraints, but also on the inactive constraints. 
In light of this, we next derive an affine constraint corresponding to \eqref{eq:KKT2}. Let 
\begin{equation}
\tilde{S}_{-M}(\hat{\theta}_M)\equiv S_{-M}(\hat{\theta}_{M})-\hat{H}_{-MM}(\hat{\theta}_M)\hat{H}_M(\hat{\theta}_M)^{-1}
S_{M}(\hat{\theta}_{M}). 
\end{equation}
By the fact that $\Vert \texttt{u}\Vert_{\infty}<1$, and after some algebra, we can express the inactive constraints in \eqref{eq:KKT2} as follows:
\begin{align}
\tilde{S}_{-M}(\hat{\theta}_{M})	
&\leq \lambda(\bm{1}_{p+1-|M|}-\hat{H}_{-MM}(\hat{\theta}_M)\hat{H}_M(\hat{\theta}_M)^{-1}(0, \sm')'),\label{eq: inactive con1}\\
-\tilde{S}_{-M}(\hat{\theta}_{M})	
&\leq \lambda(\bm{1}_{p+1-|M|}+\hat{H}_{-MM}(\hat{\theta}_M)\hat{H}_M(\hat{\theta}_M)^{-1}(0, \sm)'),\label{eq: inactive con2}
\end{align}
where the inequalities hold element-wise. The Lasso selection events in \eqref{eq: activecon}, \eqref{eq: inactive con1} and \eqref{eq: inactive con2} can be 
rewritten in a compact form as 
\begin{equation}\label{eq: AZb}
\{AZ\leq b\}, 
\end{equation}
where 
\begin{align}
	A
	&\equiv 
	\begin{bmatrix}	
		-\mathrm{diag}(\sm)&0_{(|M|-1)\times (p+1-|M|)}\\
		0_{(p+1-|M|)\times (|M|-1)}& I_{p+1-|M|}\\
		0_{(p+1-|M|)\times (|M|-1)}& -I_{p+1-|M|}
	\end{bmatrix}\in\mathbb{R}^{(2p+1-|M|)\times p},\quad 
	Z
	\equiv n^{1/2}
	\begin{bmatrix}
	\tilde{\beta}_M\\
   \tilde{S}_{-M}(\hat{\theta}_{M})
\end{bmatrix}\in\mathbb{R}^{p},\notag
	\\
	b
	&\equiv n^{1/2}
	\begin{bmatrix}
		-\mathrm{diag}(\sm)[0_{|M|-1}, I_{|M|-1}]\hat{H}_M(\hat{\theta}_M)^{-1}(0, \lambda\, \sm')'\\
		\lambda(\bm{1}_{p+1-|M|}-\hat{H}_{-MM}(\hat{\theta}_M)\hat{H}_M(\hat{\theta}_M)^{-1}(0, \sm')')\\
		\lambda(\bm{1}_{p+1-|M|}+\hat{H}_{-MM}(\hat{\theta}_M)\hat{H}_M(\hat{\theta}_M)^{-1}(0, \sm')') 
	\end{bmatrix}\in\mathbb{R}^{2p+1-|M|}.\label{def: Ab}
\end{align}
Assuming that the number of non-constant regressors, $p$, is fixed, one can establish the asymptotic normality of the one-step estimators before the Lasso selection (see Section \ref{subsec: proof SI}):
\begin{equation}\label{eq: AN}
\begin{bmatrix}	
n^{1/2}(\tilde{\beta}_M-\beta_{M0})\\
n^{1/2}\tilde{S}_{-M}(\hat{\theta}_{M})	
\end{bmatrix}
\cond {N}(0, \Sigma),
\end{equation}
where $\Sigma$ is a $p\times p$ asymptotic covariance matrix. The estimator of $\Sigma$ is 
\begin{align}
	&\hat{\Sigma}
	=
	\begin{bmatrix}
		\hat{\Sigma}_{\beta\beta}&\hat{\Sigma}_{\beta s}\\
		\hat{\Sigma}_{\beta s}'&\hat{\Sigma}_{ss}
	\end{bmatrix},\label{def: Sigma}
\end{align}
where 
\begin{align}
&\hat{\Sigma}_{\beta\beta}\equiv [0_{|M|-1}, I_{|M|-1}]\hat{H}_M(\hat{\theta}_M)^{-1}\hat{I}_M(\hat{\theta}_M)\hat{H}_M(\hat{\theta}_M)^{-1}[0_{|M|-1}, I_{|M|-1}]',\notag\\
&\hat{\Sigma}_{\beta s}\equiv [0_{|M|-1}, I_{|M|-1}]\left[\hat{H}_M(\hat{\theta}_M)^{-1}\hat{I}_{M(-M)}(\hat{\theta}_M)-\hat{H}_M(\hat{\theta}_M)^{-1}\hat{I}_{M}(\hat{\theta}_M)\hat{H}_M(\hat{\theta}_M)^{-1}\hat{H}_{M(-M)}(\hat{\theta}_M)\right],\notag\\
&\hat{\Sigma}_{ss}\equiv[I_{p+1-|M|}, -\hat{H}_{-MM}(\hat{\theta}_M)\hat{H}_M(\hat{\theta}_M)^{-1}]
	\begin{bmatrix}
		\hat{I}_{-M}(\hat{\theta}_M)& \hat{I}_{-MM}(\hat{\theta}_M)\\
		\hat{I}_{M(-M)}(\hat{\theta}_M)& \hat{I}_{M}(\hat{\theta}_M)
	\end{bmatrix}\notag\\
&\qquad\enspace\ 
[I_{p+1-|M|}, -\hat{H}_{-MM}(\hat{\theta}_M)\hat{H}_M(\hat{\theta}_M)^{-1}]'.
\end{align}
From \eqref{eq: AN}, we have the distributional approximation for $Z$
\begin{equation}\label{eq: approx N}
Z\dista N(\mu, \Sigma),\quad \mu
\equiv n^{1/2}[\beta_{M0}', 0_{p+1-|M|}']'.
\end{equation}
The latter combined with the affine constraints $\{AZ\leq b\}$ in \eqref{eq: AZb}, is now amenable to application of Lemma \ref{lem: Poly} below, which summarizes two key results of \cite{Lee-Sun-Sun-Taylor(2016)} (Lemma 5.1 and Theorem 5.2). To describe the lemma, we define the following quantities
for a general $k\times 1$ random vector $Z$, and $A\in\mathbb{R}^{k\times k}$, $b\in\mathbb{R}^k$
and $\eta\in\mathbb{R}^k$:
\begin{align}
c=c(\Sigma, \eta)
&\equiv \Sigma \eta(\eta'\Sigma\eta)^{-1},\quad r=r(Z, \Sigma, \eta)\equiv (I_k-c\eta')Z,\label{eq:cr}\\ 
	\mathcal{V}^{-}(r)
	&\equiv \max_{j: (Ac)_j<0}\frac{b_j-(Ar)_j}{(Ac)_j},\label{eq:Vminus}\\
	\mathcal{V}^{+}(r)
	&\equiv \min_{j: (Ac)_j>0}\frac{b_j-(Ar)_j}{(Ac)_j}, \label{eq:Vplus}\\
	\mathcal{V}^{0}(r)
	&\equiv \min_{j: (Ac)_j=0}b_j-(Ar)_j\label{eq:Vzero},
\end{align}
where $(Ac)_j$ denotes the $j$-th element of $Ac$. 
\cite{Lee-Sun-Sun-Taylor(2016)} show the following the result.
\begin{lemma}[Polyhedral lemma and truncated Gaussian pivot \citep{Lee-Sun-Sun-Taylor(2016)}]\label{lem: Poly}
Let $Z\sim {N}(\mu, \Sigma)$ and $A\in\mathbb{R}^{k\times k}$, $b\in\mathbb{R}^k$ 
and $\eta\in\mathbb{R}^k$ be fixed quantities. If $c, r, \mathcal{V}^{-}(r), \mathcal{V}^{+}(r)$ and $\mathcal{V}^{0}(r)$ are defined as in \eqref{eq:cr}-\eqref{eq:Vzero}, then 
\begin{enumerate}[label={(\alph*)}]
\item $\eta'z$ is independent of $\mathcal{V}^{-}(r)$, $\mathcal{V}^{+}(r)$ and $\mathcal{V}^{0}(r)$, and 
		the following events are equivalent:
		\begin{equation}\label{eq: PolyConst}
			\{AZ\leq b\}
			=\{\mathcal{V}^{-}(r)\leq \eta'Z\leq 
			\mathcal{V}^{+}(r), \mathcal{V}^{0}(r)\geq 0\}.
		\end{equation}
\item\label{lem: TGP} Furthermore, 
\begin{equation}\label{eq: TNP}
		F(\eta'Z; \eta'\mu, \eta'\Sigma\eta, \mathcal{V}^{-}(r),\mathcal{V}^{+}(r))\vert \{AZ\leq b\}\sim U(0,1).
\end{equation}		
	\end{enumerate}
\end{lemma}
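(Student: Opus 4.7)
The proof rests on a Gaussian orthogonal decomposition of $Z$ tailored to the direction $\eta$, which converts the polyhedral constraint into a two-sided bound on $\eta'Z$ whose endpoints depend only on an independent random element. Write $Z = c\,\eta'Z + r$ with $c = \Sigma\eta(\eta'\Sigma\eta)^{-1}$ and $r = (I_k - c\eta')Z$. Since $\eta'c = 1$, this is a valid decomposition satisfying $\eta'r \equiv 0$ identically.

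The independence claim follows from a direct covariance computation. Because $(\eta'Z, r')'$ is a linear function of the Gaussian vector $Z$, it is jointly Gaussian, and
\begin{equation*}
\operatorname{Cov}(\eta'Z, r) = \eta'\Sigma(I_k - \eta c') = \eta'\Sigma - (\eta'\Sigma\eta)\,c' = 0,
\end{equation*}
using $c' = (\eta'\Sigma\eta)^{-1}\eta'\Sigma$. Zero covariance plus joint Gaussianity yields $\eta'Z \perp r$, and since $\mathcal{V}^-(r), \mathcal{V}^+(r), \mathcal{V}^0(r)$ are deterministic functions of $r$, each is independent of $\eta'Z$.

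For the event equivalence, substitute the decomposition into the constraint to obtain, coordinatewise, $(Ac)_j\,\eta'Z \leq b_j - (Ar)_j$. Splitting the index set by the sign of $(Ac)_j$: positive coordinates yield upper bounds of the form $\eta'Z \leq (b_j - (Ar)_j)/(Ac)_j$; negative coordinates (after flipping the inequality) yield lower bounds of the form $\eta'Z \geq (b_j - (Ar)_j)/(Ac)_j$; and zero coordinates collapse to the $r$-only requirement $b_j - (Ar)_j \geq 0$. Taking the tightest constraint in each class reproduces exactly $\mathcal{V}^-(r) \leq \eta'Z \leq \mathcal{V}^+(r)$ together with $\mathcal{V}^0(r) \geq 0$, which is \eqref{eq: PolyConst}.

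Finally, for the truncated Gaussian pivot, condition on $r$. By the independence just established, this fixes the triple $(\mathcal{V}^-(r), \mathcal{V}^+(r), \mathcal{V}^0(r))$ without disturbing the $N(\eta'\mu, \eta'\Sigma\eta)$ law of $\eta'Z$; by the event equivalence, $\{AZ\leq b\}$ reduces on the $r$-slice where $\mathcal{V}^0(r)\geq 0$ to the bracket $\{\mathcal{V}^-(r) \leq \eta'Z \leq \mathcal{V}^+(r)\}$. Hence $\eta'Z$ conditional on $r$ and $\{AZ\leq b\}$ is exactly a $N(\eta'\mu, \eta'\Sigma\eta)$ random variable truncated to $[\mathcal{V}^-(r),\mathcal{V}^+(r)]$, and the probability integral transform applied to $F$ gives conditional uniformity, which averages up to \eqref{eq: TNP}. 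The main obstacle is the sign-based case analysis in the event equivalence: one must invert the inequality for negative $(Ac)_j$ and recognize that the resulting maximum of lower bounds and minimum of upper bounds are precisely $\mathcal{V}^-(r)$ and $\mathcal{V}^+(r)$. Everything else is routine Gaussian conditioning.
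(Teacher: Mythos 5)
Your proof is correct. Note that the paper does not actually prove this lemma; it is quoted verbatim from Lee, Sun, Sun and Taylor (2016) (their Lemma 5.1 and Theorem 5.2) and used as an external input. Your argument --- the decomposition $Z = c\,\eta'Z + r$ with $\eta'c=1$, the zero-covariance/joint-Gaussianity independence of $\eta'Z$ and $r$, the sign-split of $(Ac)_j$ to rewrite the polyhedron as a bracket on $\eta'Z$, and the conditional-on-$r$ probability integral transform --- is exactly the standard proof given in that reference, so there is nothing to add.
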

\noindent In our setup, $b$ defined in \eqref{def: Ab} 
is random whereas Lemma \ref{lem: Poly} assumes constant $b$. In addition, we have an approximate normality in \eqref{eq: approx N} instead of the exact normality assumed in Lemma \ref{lem: Poly}. 
These lead to an asymptotic version of \eqref{eq: TNP}, namely, as $n\to\infty$ 
\begin{equation}\label{TM2}
	F(e_{jp}'Z, e_{jp}'\mu, e_{jp}'\hat{\Sigma} e_{jp}, \mathcal{V}^{-}(r),\mathcal{V}^{+}(r))\vert \{AZ\leq b\}\cond U(0,1). 
\end{equation}
\eqref{TM2} can be established using the results of \cite{Markovic-etal(2017)}. Although 
we do not directly use \eqref{TM2}, it provides the basis of the inference procedures described below.
\par

Suppose we wish to make inference on the $j$-th element of ${\beta}_{M0}$, $j=1,\dots, |M|-1$ (conditional on the Lasso selection event $\hat{M}=M$ and $\hshm=\sm$). 
Let $Z, A$ and $b$ be as in \eqref{def: Ab}, $\mu$ be as in \eqref{eq: approx N}, 
and set $\eta= e_{jp}\in\mathbb{R}^{p}$, $c=c(Z, \eta)$ and $r=r(Z, \hat{\Sigma}, e_{jp})$ in \eqref{eq:cr}, where $\hat{\Sigma}$ is defined in \eqref{def: Sigma}. 
Fix $\zeta\in(0,1)$.  
The SI confidence interval (CI) of level $1-\zeta$ is of the form $\mathrm{CI}_{\hat{M}j}\equiv [\tilde{q}_l, \tilde{q}_u]$, where 
$\tilde{q}_l$ and $\tilde{q}_u$ are the solutions to the following equations
\begin{align}
&F(n^{1/2}q, e_{jp}'Z, e_{jp}'\hat{\Sigma} e_{jp}, \mathcal{V}^{-}(r),\mathcal{V}^{+}(r))=\frac{\zeta}{2},\label{eq: SI CI1}\\
&F(n^{1/2}q, e_{jp}'Z, e_{jp}'\hat{\Sigma} e_{jp}, \mathcal{V}^{-}(r),\mathcal{V}^{+}(r))=1-\frac{\zeta}{2}. \label{eq: SI CI2}
\end{align}
The asymptotic validity of the above CI is established in the following proposition.
\begin{proposition}\label{prop1}
Let Assumption \ref{A: AsyValid} hold with $p$ fixed, $\lambda=Cn^{-1/2}$, where $C=O(1)$, and 
$H(\theta_0)$ and $I(\theta_0)$ converge to nonsingular matrices. 
Then, it holds that for $\zeta\in(0,1)$
$$\liminf_{n\to\infty}P[e_{j\hat{M}}'\beta_{\hat{M}0}\in \mathrm{CI}_{\hat{M}j}\vert \hat{M}=M, \hsm=\sm]=1-\zeta.$$
\end{proposition}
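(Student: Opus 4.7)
The plan is to verify the conditions that license an asymptotic version of the polyhedral pivot result (Lemma \ref{lem: Poly}) and then invert it. The first step is to establish the joint asymptotic normality \eqref{eq: AN}. With $\lambda=Cn^{-1/2}$, the penalty term $\lambda\sm$ in the first-order condition \eqref{eq:KKT1} is $O(n^{-1/2})$, so on the selected model $M$ the penalized estimator $\hat{\theta}_M$ is asymptotically equivalent to the unrestricted weighted MLE. Assumption \ref{A: AsyValid}\ref{AsyValid max} (independence and uniform boundedness of $x_i$, $w_i$) combined with \ref{AsyValid Lip} gives $\sqrt{n}$-consistency of $\hat{\theta}_M$ in the fixed-$p$ regime by a convex-$M$-estimator argument, and the Lindeberg-Feller CLT applied to the i.n.i.d. array $\{w_i x_i\dot{g}(y_i,x_i'\theta_0)\}$ yields $n^{1/2}S(\theta_0)\cond N(0,I(\theta_0))$. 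A mean-value expansion of $S_M(\hat{\theta}_M)$ and $S_{-M}(\hat{\theta}_M)$ around $\theta_{M0}$, combined with the one-step construction \eqref{eq: onestep}--\eqref{eq: one-step2}, then delivers \eqref{eq: AN} with $\Sigma$ exhibiting the sandwich block structure mirrored by $\hat{\Sigma}$ in \eqref{def: Sigma}.

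Next I would argue that the plug-in quantities in \eqref{def: Sigma} and in $b$ converge to deterministic limits. The Lipschitz property \eqref{A: dg lip} and the bounds \eqref{A: dg1}--\eqref{A: dg2}, together with $\hat{\theta}_M\conp\theta_{M0}$, give $\hat{H}(\hat{\theta}_M)\conp H(\theta_0)$ and $\hat{I}(\hat{\theta}_M)\conp I(\theta_0)$ by a uniform law on a shrinking neighborhood of $\theta_0$; Slutsky yields $\hat{\Sigma}\conp\Sigma$. Because $\lambda n^{1/2}=C=O(1)$, the random vector $b$ in \eqref{def: Ab} converges in probability to a deterministic limit $b^{\ast}$, while the matrix $A$ is fully determined by the conditioned-upon sign pattern $\sm$.

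I would then transfer Lemma \ref{lem: Poly}, which assumes exact normality and deterministic $b$, to the present asymptotic setup. Since $Z\dista N(\mu,\Sigma)$, $(b,\hat{\Sigma})\conp(b^{\ast},\Sigma)$, and the truncated normal CDF $F(\cdot;\cdot,\cdot,\cdot,\cdot)$ is jointly continuous in its arguments wherever the denominator $\Phi((\mathcal{V}^{+}-\mu)/\sigma)-\Phi((\mathcal{V}^{-}-\mu)/\sigma)$ is bounded away from zero, the conditional-convergence argument of \cite{Markovic-etal(2017)} applies and yields \eqref{TM2}, i.e.\ the pivot is asymptotically $U(0,1)$ on $\{AZ\leq b\}$. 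The conditioning event $\{\hat{M}=M,\hshm=\sm\}$ has positive limiting probability: under $\lambda=O(n^{-1/2})$ and Assumption \ref{A: AsyValid}, the true $\beta_{M0}$ lies in the strict interior of the sign cell determined by $\sm$, so the conditioning is non-degenerate. Coverage then follows by inversion: since $F(x;\mu,\sigma^2,a,b)$ is monotone in its first argument with the other arguments fixed, the equations \eqref{eq: SI CI1}--\eqref{eq: SI CI2} define $[\tilde q_l,\tilde q_u]$ such that on the selection event $\{e_{j\hat{M}}'\beta_{\hat{M}0}\in[\tilde q_l,\tilde q_u]\}$ coincides with $\{\zeta/2\leq F(e_{jp}'Z;n^{1/2}e_{jp}'\beta_{M0},e_{jp}'\hat{\Sigma} e_{jp},\mathcal{V}^{-}(r),\mathcal{V}^{+}(r))\leq 1-\zeta/2\}$, whose conditional probability converges to $1-\zeta$ by \eqref{TM2}.

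The main obstacle I expect is the transfer step: even with asymptotic normality of $Z$ and consistent $\hat{\Sigma}$, the truncated Gaussian CDF is delicate when the truncation interval $[\mathcal{V}^{-},\mathcal{V}^{+}]$ is short, so uniform control over realizations arising under the selection event is needed, which is exactly where the \cite{Markovic-etal(2017)} machinery is indispensable. A related subtlety, and the reason the argument is not a direct quotation of \cite{Taylor-Tibshirani(2018)}, is that in our survey/heteroskedastic setup $\Sigma$ is not block-diagonal between the $\beta$- and $s$-components, so conditioning on the inactive KKT constraints \eqref{eq: inactive con1}--\eqref{eq: inactive con2} genuinely matters and cannot be dropped as it effectively can in the homoskedastic GLM. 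Verifying the i.n.i.d.\ Lindeberg condition for $\{w_i x_i\dot{g}(y_i,x_i'\theta_0)\}$ and the uniform consistency of $\hat{H}$, $\hat{I}$ over the survey-weighted sample is routine given the boundedness in Assumption \ref{A: AsyValid}, but is where most of the bookkeeping resides.
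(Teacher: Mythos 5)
Your proposal is correct in substance and verifies exactly the same primitive ingredients as the paper: the pre-selection joint asymptotic normality of $(n^{1/2}(\tilde{\beta}_M-\beta_{M0}), n^{1/2}\tilde{S}_{-M}(\hat{\theta}_M))$ via mean-value expansions and an i.n.i.d.\ CLT for the weighted score (the paper uses a Lyapunov CLT in Lemma \ref{lem: AD1}), consistency of $\hat{\Sigma}$ and stochastic boundedness of $b$, and non-degeneracy of the selection event, which the paper obtains by noting that the active-constraint block of the centering $\mu_{n,q}$ equals $-n^{1/2}|\beta_{M0}|+O_p(1)\to-\infty$ --- the same observation you phrase as $\beta_{M0}$ lying in the interior of its sign cell. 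Where you diverge is the final transfer-and-inversion step: you route the argument through the asymptotic truncated-Gaussian pivot \eqref{TM2} and the machinery of \cite{Markovic-etal(2017)}, then invert the pivot by monotonicity of $F$ in its mean argument, whereas the paper explicitly states that it does \emph{not} directly use \eqref{TM2} and instead verifies Assumptions (A1)--(A4) of \cite{KKK(2022)} (affine representability of the selection event, divergence of $\mu_{n,q}$, the joint CLT for the target coordinate together with $D_{n,q}-\mu_{n,q}$, and $\hat{\Sigma}\conp\Sigma$) and invokes their Algorithm~2 result, which delivers conditional coverage of the inverted interval in one package. Both routes defer the genuinely delicate step --- upgrading an unconditional CLT to conditional uniformity of the pivot, including control of short truncation intervals --- to external machinery, and you correctly identify that this is where the difficulty sits; the KKK framework has the advantage of bundling the inversion and the degenerate-truncation issues into its verified assumptions, so the paper's appendix never has to argue continuity or monotonicity of the truncated CDF explicitly. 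Your additional observation that $b$ converges in probability to a deterministic limit (rather than merely $b=O_p(1)$, which is all the paper uses) is correct under the stated convergence of $H(\theta_0)$, and your remark on the non-block-diagonality of $\Sigma$ forcing conditioning on the inactive constraints matches the paper's motivation for including \eqref{eq: inactive con1}--\eqref{eq: inactive con2} in the polyhedron.
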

See Appendix \ref{subsec: proof SI} for a proof. The assumption of fixed $p$ is commonly used in the literature on SI 
\citep[see e.g.][]{Lee-Sun-Sun-Taylor(2016),Tian-Taylor(2017),Taylor-Tibshirani(2018),KKK(2022)}. 
\cite{Taylor-Tibshirani(2018)} provide a heuristic argument for the validity of the selective inference in a homoskedastic GLM. 
Proposition \ref{prop1} extends their argument to i.n.i.d. and possibly heteroskedastic survey samples. 
The asymptotic validity of the SI procedures typically entails showing that CLTs that hold before selection extend to selective inference under suitable assumptions \citep{Tian-Taylor(2017),KKK(2022)}.  
We establish the asymptotic validity of the selective inference procedure
by verifying the conditions given in \cite{KKK(2022)}. 
\paragraph*{Inference on a nonlinear parameter function.}
Next we consider inference on a scalar nonlinear parameter function $\rho_M(\theta_{M0})$ (which may depend on $n$) in the selected model 
with coefficients $\beta_M$ on the active variables. 
Such results are especially useful in the context of logit and probit models because 
the AMEs are often the objects of interest therein.  
Analogously to \eqref{eq: one-step2}, consider the one-step estimator
\begin{equation}\label{eq: si one-step}
	\tilde{\rho}_M
	\equiv \rho(\hat{\theta}_M)
	+\dot{\rho}_M(\hat{\theta}_M)\hat{H}_M(\hat{\theta}_M)^{-1}S_M(\hat{\theta}_M),\quad \dot{\rho}_M({\theta}_M)\equiv \frac{\partial \rho_M({\theta}_M)'}{\partial \theta_M}.
\end{equation}
Standard arguments yield the distributional approximation 
\begin{equation}\label{eq: one-stepb}
n^{1/2}\tilde{\rho}_M
\dista N\left(n^{1/2}\rho_M(\theta_{M0}), \dot{\rho}_M({\theta}_{M0})'\,\Sigma\, \dot{\rho}_M({\theta}_{M0})\right).
\end{equation} 
Again, an approach similar to those applied to the elements of $\beta$ allows us to define the augmented variables: 

\begin{align}
	A_\rho
&\equiv  
\begin{bmatrix}	
	0& 0_{p}'\\
	 0_{2p+1-|M|}& A
\end{bmatrix}\in\mathbb{R}^{(2p+2-|M|)\times (p+1)},\quad
Z_{\rho}
\equiv 
\begin{bmatrix}
n^{1/2}\tilde{\rho}_M\\
Z
\end{bmatrix}\in\mathbb{R}^{p+1},\label{def: Ab rho1}\\
b_\rho 
&\equiv 
	\begin{bmatrix}
	0\\
	b
	\end{bmatrix}\in\mathbb{R}^{2p+2-|M|},\quad 
	\hat{\Sigma}_\rho
\equiv 
	\begin{bmatrix}\
		\hat{\Sigma}_{\rho\rho}& \hat{\Sigma}_{\rho\beta}& \hat{\Sigma}_{\rho s}\\
		\hat{\Sigma}_{\beta\rho}& \hat{\Sigma}_{\beta\beta}& \hat{\Sigma}_{\beta s}\\
		\hat{\Sigma}_{s\rho}& \hat{\Sigma}_{s\beta}& \hat{\Sigma}_{ss}
\end{bmatrix}\in\mathbb{R}^{(p+1)\times (p+1)},\label{def: Ab rho2}
\end{align}
where $Z$, $A$ and $b$ are as defined in \eqref{def: Ab}, and 
\begin{align*}
\hat{\Sigma}_{\rho\rho}
	&\equiv \dot{\rho}_M(\hat{\theta}_{M})'\hat{H}_M(\hat{\theta}_M)^{-1}\hat{I}_M(\hat{\theta}_M)\hat{H}_M(\hat{\theta}_M)^{-1}\dot{\rho}_M(\hat{\theta}_{M}),\\
\hat{\Sigma}_{\rho\beta}
	&\equiv \dot{\rho}_M(\hat{\theta}_{M})'\hat{H}_M(\hat{\theta}_M)^{-1}\hat{I}_M(\hat{\theta}_M)\hat{H}_M(\hat{\theta}_M)^{-1}[0_{|M|-1}, I_{|M|-1}]',\\
\hat{\Sigma}_{\rho s}
	&\equiv \dot{\rho}_M(\hat{\theta}_{M})'\hat{H}_M(\hat{\theta}_M)^{-1}\hat{I}_{M(-M)}(\hat{\theta}_M)-\dot{\rho}_M(\hat{\theta}_{M})'\hat{H}_M(\hat{\theta}_M)^{-1}\hat{I}_{M}(\hat{\theta}_M)\hat{H}_M(\hat{\theta}_M)^{-1}\hat{H}_{M(-M)}(\hat{\theta}_M).
\end{align*}
Then, 
for $\zeta\in(0,1)$, the level $1-\zeta$ CI for $\rho_M(\theta_{M0})$ can be constructed as in \eqref{eq: SI CI1} and \eqref{eq: SI CI2}  
by replacing $A$, $Z$, $b$ and $e_{jp}$ by $A_\rho$, $Z_\rho$, $b_\rho$ and $e_{j(p+1)}$, respectively, and 
letting $r_\rho=r(Z_\rho, \hat{\Sigma}_\rho, e_{j(p+1)})$ in \eqref{eq:cr}.\par 
 We can also infer the parameter $\rho_M(\theta_{M0})$ by 
conditioning on the sign of the estimated parameter $\rho_{\hat{M}}(\hat{\theta}_{\hat{M}})$ in addition to the event $\{AZ\leq b\}$ considered previously in \eqref{eq: AZb}. To this end, let 
$\sm^\rho\equiv \mathrm{sign}(\rho_M(\theta_{M0}))$ and $\hsm^\rho\equiv \mathrm{sign}(\rho_M(\hat{\theta}_M))$ and redefine 
\begin{align*}
	A_\rho
	\equiv  
\begin{bmatrix}	
	-\sm^\rho& 0_{p}'\\
	 0_{2p+1-|M|}& A
\end{bmatrix}\in\mathbb{R}^{(2p+2-|M|)\times (p+1)},\,
	b_\rho 
	\equiv 
	\begin{bmatrix}
	-\sm^\rho\,\dot{\rho}_M(\hat{\theta}_{M})'\hat{H}_M(\hat{\theta}_M)^{-1}(0, \lambda\, \sm')'\\
	b
	\end{bmatrix}\in\mathbb{R}^{2p+2-|M|},
\end{align*}
and keep $Z_\rho$ and $\hat{\Sigma}_\rho$ as defined 
in \eqref{def: Ab rho1} and \eqref{def: Ab rho2}. Then, we can rewrite the event $\{\sm^\rho=\hsm^\rho\}=\{\sm^\rho= \mathrm{sign}(\rho_M(\hat{\theta}_M))\}$ as
\begin{align}
	\{\sm^\rho=\mathrm{sign}(\rho_M(\hat{\theta}_M))\}
	&=\{\sm^\rho\,\rho_M(\hat{\theta}_M)>0\}\notag\\
	&=\{\sm^\rho\,(\tilde{\rho}_M-\dot{\rho}_M(\hat{\theta}_M)'\hat{H}_M(\hat{\theta}_M)^{-1}(0, \lambda\sm')')>0\}\notag\\
	&=\{-\sm^\rho\,\tilde{\rho}_M<-\lambda\,\sm^\rho\,\dot{\rho}_M(\hat{\theta}_M)'\hat{H}_M(\hat{\theta}_M)^{-1}(0, \sm')'\}.
\end{align}
Therefore, the event $\{\hat{M}=M, \hshm=\sm, \hshm^\rho=\sm^\rho\}$ is equivalent to the affine 
constraint $A_\rho Z_\rho\leq b_\rho$. We proceed similarly to the subvector case considered previously to obtain the SI CI for $\rho_M(\theta_{M0})$. Let $r_\rho=r(Z_\rho, \hat{\Sigma}_\rho, e_{j(p+1)})$ as in \eqref{eq:cr}, and fix $\zeta\in(0,1)$. The SI CI of level $1-\zeta$ for $\rho_M(\theta_{M0})$ is given by $\mathrm{CI}_{\hat{M}}^\rho\equiv [\tilde{q}_l^\rho, \tilde{q}_u^\rho]$, where 
$\tilde{q}_l^\rho$ and $\tilde{q}_u^\rho$ are the solutions respectively to the following equations
\begin{align}
	&F(n^{1/2}q, n^{1/2}\tilde{\rho}_M, e_{j(p+1)}'\hat{\Sigma}_\rho e_{j(p+1)}, \mathcal{V}^{-}(r_\rho),\mathcal{V}^{+}(r_\rho))=\frac{\zeta}{2},\label{eq: SI CI3}\\
	&F(n^{1/2}q, n^{1/2}\tilde{\rho}_M, e_{j(p+1)}'\hat{\Sigma}_\rho e_{j(p+1)}, \mathcal{V}^{-}(r_\rho),\mathcal{V}^{+}(r_\rho))=1-\frac{\zeta}{2}.\label{eq: SI CI4} 
\end{align}
We summarize the asymptotic validity of the above CI in the next corollary which follows from the arguments similar to the proof of Proposition \ref{prop1}. 
\begin{corollary}\label{corr1}
Suppose that the conditions of Proposition \ref{prop1} hold, and 
the scalar nonlinear parameter function $\rho_M(\theta_M)$ is continuously differentiable in a neighborhood of $\theta_{M0}$ with 
$\dot{\rho}_M({\theta}_{M0})'\dot{\rho}_M({\theta}_{M0})>\lambda_l>0,$
	Then, it holds that for $\zeta\in(0,1)$
	$$\liminf_{n\to\infty}P[\rho_{\hat{M}}\in \mathrm{CI}_{\hat{M}}^\rho \vert \hat{M}=M, \hshm=\sm, \hshm^\rho=\sm^\rho]=1-\zeta.$$
\end{corollary}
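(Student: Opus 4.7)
}

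The plan is to mimic the proof of Proposition \ref{prop1}, but now applied to the augmented random vector $Z_\rho$ in \eqref{def: Ab rho1}, which adjoins $n^{1/2}\tilde{\rho}_M$ to the vector $Z$ of \eqref{def: Ab}, under the augmented affine constraint $\{A_\rho Z_\rho \leq b_\rho\}$. First I would establish joint asymptotic normality of $Z_\rho$ before selection. The first coordinate is handled by applying the delta method to the one-step expansion \eqref{eq: si one-step}: continuous differentiability of $\rho_M$ in a neighborhood of $\theta_{M0}$, together with the consistency of $\hat{\theta}_M$ and the CLT behind \eqref{eq: AN} (already used in Proposition \ref{prop1}), yields \eqref{eq: one-stepb}. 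Joint convergence with the remaining $p$ coordinates of $Z_\rho$ follows by a single CLT applied to the stacked vector, since all coordinates are smooth linear-plus-Taylor transformations of the same score and Hessian building blocks, and the lower bound $\dot{\rho}_M(\theta_{M0})'\dot{\rho}_M(\theta_{M0})>\lambda_l>0$ combined with Assumption \ref{A: AsyValid}\ref{AsyValid eval} guarantees that the limiting covariance $\Sigma_\rho$ is positive definite.

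Second, I would show that $\hat{\Sigma}_\rho \conp \Sigma_\rho$. The $p\times p$ lower-right block is already treated in the proof of Proposition \ref{prop1}, so only the blocks $\hat{\Sigma}_{\rho\rho},\hat{\Sigma}_{\rho\beta},\hat{\Sigma}_{\rho s}$ need new arguments; each is a smooth function of $\hat{\theta}_M$ and of sample averages that converge uniformly on a neighborhood of $\theta_{M0}$ under Assumption \ref{A: AsyValid}, so consistency follows by the CMT. Third, I would verify the identity $\{\hat{M}=M,\hshm=\sm,\hshm^\rho=\sm^\rho\}=\{A_\rho Z_\rho\leq b_\rho\}$. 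The algebra for the lower $2p+1-|M|$ rows is already carried out in the derivation of \eqref{eq: AZb}; the first row follows from the chain of equivalences displayed immediately before the statement of the corollary, which rewrites $\{\sm^\rho\rho_M(\hat{\theta}_M)>0\}$ as the scalar affine inequality $-\sm^\rho \tilde{\rho}_M < -\lambda\sm^\rho\dot{\rho}_M(\hat{\theta}_M)'\hat{H}_M(\hat{\theta}_M)^{-1}(0,\sm')'$.

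Fourth, with $\eta=e_{j(p+1)}$, the quantities $c_\rho=\Sigma_\rho\eta(\eta'\Sigma_\rho\eta)^{-1}$ and $r_\rho=(I_{p+1}-c_\rho\eta')Z_\rho$ are well-defined because $\eta'\Sigma_\rho\eta\geq \lambda_{\min}(\Sigma_\rho)>0$. Lemma \ref{lem: Poly}, applied to a genuine Gaussian with mean $\mu_\rho\equiv n^{1/2}[\rho_M(\theta_{M0}),\beta_{M0}',0_{p+1-|M|}']'$ and covariance $\Sigma_\rho$ and with the fixed analogue of $(A_\rho,b_\rho)$, gives a uniform truncated-Gaussian pivot conditional on the polyhedral event. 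To pass from the exact-Gaussian statement to our setting, I would invoke the same asymptotic extension of the polyhedral pivot used in the proof of Proposition \ref{prop1}, namely the argument of \cite{KKK(2022)} applied to the triple $(Z_\rho,A_\rho,b_\rho)$: joint weak convergence of $Z_\rho$ to its Gaussian limit, together with convergence in probability of the random entries of $b_\rho$ (which depend only on $\hat{H}_M(\hat{\theta}_M)$, $\hat{H}_{-MM}(\hat{\theta}_M)$ and $\dot{\rho}_M(\hat{\theta}_M)$) and of $\hat{\Sigma}_\rho$, delivers the conditional distributional convergence needed so that the truncated-Gaussian pivot evaluated at $n^{1/2}\rho_M(\theta_{M0})$ converges in distribution to $U(0,1)$. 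Inverting \eqref{eq: SI CI3}--\eqref{eq: SI CI4} then gives $\mathrm{CI}_{\hat{M}}^\rho$ with the claimed asymptotic conditional coverage.

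The main obstacle is the asymptotic extension of the polyhedral pivot with the extra sign constraint on $\rho_M(\hat{\theta}_M)$. One must check that the conditioning event $\{\hat{M}=M,\hshm=\sm,\hshm^\rho=\sm^\rho\}$ has asymptotically non-vanishing probability (otherwise the conditional statement is vacuous) and that the additional row in $A_\rho$ does not create a degenerate truncation region in the limit. The former follows because the population analogue of the event has positive probability under Assumption \ref{A: AsyValid} and the conditions of Proposition \ref{prop1}; the latter is controlled by the assumed lower bound $\dot{\rho}_M(\theta_{M0})'\dot{\rho}_M(\theta_{M0})>\lambda_l$, which prevents the first column of $A_\rho$ from collapsing and keeps $(A_\rho c_\rho)_1$ bounded away from zero in the limit. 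Once these two points are secured, the extension from Proposition \ref{prop1} to Corollary \ref{corr1} is essentially mechanical.
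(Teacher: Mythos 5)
Your plan is correct and follows exactly the route the paper intends: the paper gives no separate proof of Corollary \ref{corr1}, stating only that it ``follows from the arguments similar to the proof of Proposition \ref{prop1}'', i.e.\ re-running the verification of Assumptions (A1)--(A4) of \cite{KKK(2022)} for the augmented triple $(Z_\rho, A_\rho, b_\rho)$ with the extra sign constraint, which is precisely what you propose. Your additional remarks on the delta-method step, the consistency of $\hat{\Sigma}_\rho$, and the non-degeneracy of the conditioning event are faithful elaborations of that same argument rather than a different approach.
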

\subsection{Debiased Lasso inference}\label{subsec: db Lasso}
The debiased Lasso method of \cite{Zhang-Zhang(2014)} and \cite{Javanmard-Montanari(2014)} is based 
on the one-step estimator constructed from the initial Lasso estimator $\hat{\theta}$:
\begin{equation}\label{eq: one-step0}
	\tilde{\theta}
	=\hat{\theta}+\hat{H}(\hat{\theta})^{-1}S(\hat{\theta}).
\end{equation}
This particular variant of the debiased Lasso that employs the standard Hessian is 
proposed by \cite{Xia-Nan-Li(2021)} for a homoskedastic GLM. Similarly, we use 
$\hat{I}(\hat{\theta})$ to estimate the asymptotic variance of $n^{1/2}S(\theta_0)$ and  
$I(\theta_0)$. To show the consistency of $\hat{H}(\hat{\theta})$ and $\hat{I}(\hat{\theta})$, we 
first extend Corollary 5.50 of \cite{Vershynin(2010)} to random matrices i.n.i.d. 
rows with non-identical second moment matrices in the following lemma. 
\begin{lemma}[Covariance matrix consistency for i.n.i.d. random vectors.]\label{lem: covm}
	Let $A$ be an $n\times p$ matrix whose rows $A_i'$ are
	independent sub-Gaussian random vectors in $\mathbb{R}^p$ with $\E[A_i]=\mu_i$, $\E[A_iA_i']=\Sigma_i$ and $0<\lambda_{l}< \lambda_{\min}(\bar{\Sigma}_n)<\infty,$ where $\bar{\Sigma}_n\equiv n^{-1}\sum_{i=1}\Sigma_i$. Then for every $t\geq 0$, with
	probability at least $1-2\exp(-t^2)$ it holds that  
	\begin{equation}\label{eq: HD LLN}
		\Vert n^{-1}A'A-\bar{\Sigma}_n\Vert_2\leq C_K\max(\delta, \delta^2)\Vert\bar{\Sigma}_n\Vert_2,\quad \delta\equiv c\left(\sqrt{\frac{p}{n}}+\frac{t}{\sqrt{n}}\right),
	\end{equation}
	where $c$ is an absolute constant and $C_K>0$ is a constant that depend only on the sub-Gaussian norm $K=\max_{i}\Vert A_i\Vert_{\psi_2}<\infty$ of the rows and $\lambda_l$.	
\end{lemma}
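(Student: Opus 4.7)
The plan is to adapt the $\epsilon$-net plus Bernstein argument used by Vershynin (2010) in his proof of Corollary 5.50, replacing every appeal to identical distribution with the i.n.i.d. analogue. The only place where the i.i.d. structure enters Vershynin's proof is in the concentration step, and Bernstein's inequality for sums of independent sub-exponential random variables does not require identical distribution; this is what makes the extension essentially mechanical.

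\textbf{Step 1. Reduction to a quadratic form over an $\epsilon$-net.}
Since $\bar{\Sigma}_n = \E[n^{-1}A'A]$, we have
\begin{equation*}
\Vert n^{-1}A'A - \bar{\Sigma}_n\Vert_2 = \sup_{\Vert b\Vert=1} \left\vert\frac{1}{n}\sum_{i=1}^n \left((A_i'b)^2 - \E[(A_i'b)^2]\right)\right\vert .
\end{equation*}
Let $\mathcal{N}$ be a $1/4$-net of the unit sphere in $\mathbb{R}^p$ with $\vert\mathcal{N}\vert \le 9^p$. By the standard net argument (Lemma 5.4 of Vershynin (2010)), the left-hand side is bounded by twice the supremum over $b\in\mathcal{N}$, so it suffices to control $\vert b'(n^{-1}A'A - \bar{\Sigma}_n)b\vert$ uniformly over $\mathcal{N}$.

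\textbf{Step 2. Sub-exponential concentration at a fixed direction.}
For fixed $b\in\mathcal{N}$, the definition of the sub-Gaussian norm of a random vector gives $\Vert A_i'b\Vert_{\psi_2}\le K$ uniformly in $i$, whence $(A_i'b)^2$ is sub-exponential with $\Vert (A_i'b)^2\Vert_{\psi_1}\le 2K^2$. The centered summands $X_i(b)\equiv(A_i'b)^2-\E[(A_i'b)^2]$ are independent, mean-zero, and sub-exponential with norms bounded by $4K^2$. Bernstein's inequality for independent (not necessarily identically distributed) sub-exponential summands yields, for every $u>0$,
\begin{equation*}
P\!\left[\left\vert\frac{1}{n}\sum_{i=1}^n X_i(b)\right\vert \ge u\right] \le 2\exp\!\left(-c_0\, n\min\!\left(\frac{u^2}{K^4},\frac{u}{K^2}\right)\right),
\end{equation*}
with $c_0$ a universal constant; crucially, the individual means $\E[(A_i'b)^2]$ are subtracted term-by-term, so heterogeneity of $\Sigma_i$ plays no role.

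\textbf{Step 3. Union bound and calibration.}
Apply a union bound over $\mathcal{N}$ (cost $9^p = \exp(p\log 9)$) and choose $u = C_K\max(\delta,\delta^2)\Vert\bar{\Sigma}_n\Vert_2/2$, where $\delta = c(\sqrt{p/n}+t/\sqrt n)$. Because $\Vert\bar{\Sigma}_n\Vert_2\ge \lambda_l$ by hypothesis, the minimum inside the exponent is bounded below by a quantity of order $(p + t^2)$ whenever $c$ and $C_K$ are chosen large enough relative to $K$, $c_0$ and $\lambda_l$. This absorbs the entropy factor $\exp(p\log 9)$ and leaves a tail of $2\exp(-t^2)$. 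Combining with the factor 2 from the net reduction in Step 1 gives the stated bound.

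\textbf{Main obstacle.}
Nothing in the argument is genuinely difficult, but the only delicate step is the calibration in Step 3: the constant $C_K$ must be chosen large enough so that both the quadratic and linear branches of the Bernstein exponent dominate $p\log 9 + t^2$, and the dependence on $\lambda_l$ enters only through the lower bound on $\Vert\bar{\Sigma}_n\Vert_2$ that converts absolute deviations into the multiplicative form in \eqref{eq: HD LLN}. Once one writes Bernstein's inequality for i.n.i.d. sub-exponentials and checks that the constants can be chosen to depend only on $K$ and $\lambda_l$, the proof is complete.
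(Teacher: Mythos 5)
Your proof is correct, and it reaches the conclusion by a genuinely different reduction than the paper's. The paper follows Vershynin's Corollary 5.50 template: it first whitens the rows, bounding the sub-Gaussian norm of $\bar{\Sigma}_n^{-1/2}A_i$ (its Step 1, the most technical part, which handles the non-centered means via Minkowski and Jensen and is where the invertibility hypothesis $\lambda_{\min}(\bar{\Sigma}_n)>\lambda_l$ is consumed), proves the isotropic bound $\Vert n^{-1}\bar{\Sigma}_n^{-1/2}A'A\bar{\Sigma}_n^{-1/2}-I_p\Vert_2\leq \max(\delta,\delta^2)$ by the same net--Bernstein--union-bound machinery you use, and finally recovers the multiplicative form through $n^{-1}A'A-\bar{\Sigma}_n=\bar{\Sigma}_n^{1/2}(\cdot)\bar{\Sigma}_n^{1/2}$ and $\Vert\bar{\Sigma}_n^{1/2}\Vert_2^2=\Vert\bar{\Sigma}_n\Vert_2$. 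You instead apply the net and Bernstein directly to the raw quadratic form $b'(n^{-1}A'A-\bar{\Sigma}_n)b$, which is legitimate because the hypothesis $\Vert A_i'b\Vert_{\psi_2}\leq K$ already controls the unwhitened summands and Bernstein only requires centering term by term; the eigenvalue condition then enters only at the calibration stage, via $\Vert\bar{\Sigma}_n\Vert_2\geq\lambda_{\min}(\bar{\Sigma}_n)>\lambda_l$, to lower-bound the deviation threshold $u$. Your calibration does go through: $\min(u^2/K^4,u/K^2)\geq \min(a,a^2)\delta^2$ with $a=C_K\lambda_l/(2K^2)$, and $n\delta^2\geq c^2(p+t^2)$ absorbs the entropy term, with $c$ absolute and $C_K$ depending on $K$ and $\lambda_l$ exactly as the statement permits. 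What each approach buys: yours is shorter and in fact establishes the stronger absolute bound $\Vert n^{-1}A'A-\bar{\Sigma}_n\Vert_2\leq C_K'\max(\delta,\delta^2)$ with no invertibility assumption at all (the eigenvalue condition is used only to restate it in the multiplicative form of \eqref{eq: HD LLN}); the paper's whitened version keeps the constant in front of $\Vert\bar{\Sigma}_n\Vert_2$ tied to the sub-Gaussian norm of the \emph{standardized} rows, which is the sharper statement when $\Vert\bar{\Sigma}_n\Vert_2$ is large, at the cost of the extra Step 1.
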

See Appendix \ref{proof covm} for a proof. Relative to Theorem 5.39 and Corollary 5.50 of \cite{Vershynin(2010)}, the invertibility of $\bar{\Sigma}_n$ is required in Lemma \ref{lem: covm}, but the rows of the matrix $A$ can be heterogeneous with non-identical second moment matrices $\Sigma_i, i=1,\dots, n$.  Lemma \ref{lem: covm} together 
with Lemma S2 of \cite{Xia-Nan-Li(2021)} yields the following result. 

\begin{lemma}[The rate of convergence of the Hessian and information matrices]\label{lem: HI rate}
Under Assumption \ref{A: AsyValid},
\begin{align}
&\Vert\hat{H}(\hat{\theta})-{H}(\theta_0)\Vert_2=O_p\left(
\sqrt{\frac{p}{n}}+m_0\lambda
\right),\label{eq: Hhat consistency}\\
&\Vert\hat{H}(\hat{\theta})^{-1}-H(\theta_0)^{-1}\Vert_2
=O_p\left(
\sqrt{\frac{p}{n}}+m_0\lambda
\right),\label{eq: Hhat inverse consistency}
\\
&\Vert\hat{I}(\hat{\theta})-{I}(\theta_0)\Vert_2=O_p\left(
\sqrt{\frac{p}{n}}+m_0\lambda
\right),\label{eq: Ihat consistency}\\
&\Vert\hat{I}(\hat{\theta})^{-1}-{I}(\theta_0)^{-1}\Vert_2=O_p\left(
\sqrt{\frac{p}{n}}+m_0\lambda
\right).\label{eq: Ihat inverse consistency}
\end{align}
\end{lemma}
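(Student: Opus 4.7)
The plan is to treat each of the four bounds by the same two-ingredient recipe: (i) concentration of $\hat{H}(\theta_0)$ or $\hat{I}(\theta_0)$ around its expectation via Lemma \ref{lem: covm}, and (ii) a Lipschitz-type plug-in perturbation that transfers the Lasso $\ell_1$-rate $\|\hat{\theta}-\theta_0\|_1 = O_p(m_0\lambda)$ (available from Lemma S2 of \cite{Xia-Nan-Li(2021)} under Assumption \ref{A: AsyValid}) into the matrix discrepancy. For \eqref{eq: Hhat consistency} I split
\begin{equation*}
\hat{H}(\hat{\theta}) - H(\theta_0) = [\hat{H}(\hat{\theta}) - \hat{H}(\theta_0)] + [\hat{H}(\theta_0) - H(\theta_0)]
\end{equation*}
and handle each summand in turn; the argument for $\hat{I}$ is entirely parallel.

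For the concentration summand I apply Lemma \ref{lem: covm} to the matrix with rows $A_i' = \sqrt{w_i\, \ddot{g}(y_i, x_i'\theta_0)}\, x_i'$, which is well-defined because $g$ is convex and $0 \le \ddot{g} \le C_u$ by \eqref{A: dg2}. Since $\sqrt{w_i \ddot{g}(y_i, x_i'\theta_0)} \le C_u$ almost surely, we have $\|A_i\|_{\psi_2} \le C_u \|x_i\|_{\psi_2} \le C_u^2$; moreover $\bar{\Sigma}_n = H(\theta_0)$ satisfies $\lambda_l \le \lambda_{\min}(\bar{\Sigma}_n)$ and $\|\bar{\Sigma}_n\|_2 \le \lambda_u$ by Assumption \ref{A: AsyValid}\ref{AsyValid eval}. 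Taking $t$ of constant order in \eqref{eq: HD LLN} then gives $\|\hat{H}(\theta_0) - H(\theta_0)\|_2 = O_p(\sqrt{p/n})$. The same Lemma applied to the rows $\sqrt{w_i}\, x_i$ yields $\|n^{-1}\sum_i w_i x_i x_i'\|_2 = O_p(1)$. For the perturbation summand I use the bound $\max_i |x_i'(\hat{\theta}-\theta_0)| \le \|x_i\|_\infty \|\hat{\theta}-\theta_0\|_1 \le C_u \cdot O_p(m_0 \lambda)$ from Assumption \ref{A: AsyValid}\ref{AsyValid max}, which stays below $\eta$ with probability tending to one; on that event, \eqref{A: dg lip} gives
\begin{equation*}
\|\hat{H}(\hat{\theta}) - \hat{H}(\theta_0)\|_2 \le L_0 \max_i|x_i'(\hat{\theta}-\theta_0)| \cdot \Bigl\| n^{-1}\sum_i w_i x_i x_i' \Bigr\|_2 = O_p(m_0 \lambda),
\end{equation*}
which establishes \eqref{eq: Hhat consistency}.

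For the inverse estimate \eqref{eq: Hhat inverse consistency} I use the matrix identity $A^{-1}-B^{-1} = A^{-1}(B-A)B^{-1}$. Assumption \ref{A: AsyValid}\ref{AsyValid eval} gives $\|H(\theta_0)^{-1}\|_2 \le 1/\lambda_l$, and on the event $\{\|\hat{H}(\hat{\theta})-H(\theta_0)\|_2 \le \lambda_l/2\}$ Weyl's inequality yields $\lambda_{\min}(\hat{H}(\hat{\theta})) \ge \lambda_l/2$, so $\|\hat{H}(\hat{\theta})^{-1}\|_2 \le 2/\lambda_l$ and the rate of \eqref{eq: Hhat consistency} transfers verbatim. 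The results \eqref{eq: Ihat consistency}--\eqref{eq: Ihat inverse consistency} for the information matrix follow the same recipe: the concentration step uses rows $w_i |\dot{g}(y_i, x_i'\theta_0)|\, x_i'$, which are sub-Gaussian thanks to \eqref{A: dg1} and Assumption \ref{A: AsyValid}\ref{AsyValid max}, and the perturbation step uses the factorization
\begin{equation*}
\dot{g}(y, x_i'\hat{\theta})^2 - \dot{g}(y, x_i'\theta_0)^2 = \bigl[\dot{g}(y, x_i'\hat{\theta}) - \dot{g}(y, x_i'\theta_0)\bigr] \bigl[\dot{g}(y, x_i'\hat{\theta}) + \dot{g}(y, x_i'\theta_0)\bigr],
\end{equation*}
whose first bracket is $O(|x_i'(\hat{\theta}-\theta_0)|)$ by the mean value theorem and \eqref{A: dg2}, and whose second bracket is $O(1)$ uniformly in $i$ on the same local event.

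The main obstacle is simply the bookkeeping in the perturbation step: the Lipschitz and boundedness hypotheses \eqref{A: dg lip}--\eqref{A: dg2} are only \emph{local} around $x_i'\theta_0$, so one must guarantee $|x_i'(\hat{\theta}-\theta_0)| \le \eta$ \emph{uniformly in $i$}. This is where the almost-sure row bound $\|x_i\|_\infty < C_u$ in Assumption \ref{A: AsyValid}\ref{AsyValid max} is indispensable: combined with the H\"older step $|x_i'(\hat{\theta}-\theta_0)| \le \|x_i\|_\infty \|\hat{\theta}-\theta_0\|_1$ and the Lasso $\ell_1$-rate, it yields the required uniform-in-$i$ containment under the slow-divergence regime $m_0 \lambda + \sqrt{p/n} \to 0$; without it, a pointwise $\ell_1$-rate on $\hat{\theta}$ would not translate into a uniform argument-of-$\dot g$ and $\ddot g$ control.
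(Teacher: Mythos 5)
Your proposal is correct and follows essentially the same route as the paper's proof: the same triangle-inequality split into a concentration term handled by Lemma \ref{lem: covm} with rows $\sqrt{w_i\ddot g(y_i,x_i'\theta_0)}\,x_i'$ (resp.\ $w_i\dot g(y_i,x_i'\theta_0)\,x_i'$), the same Lipschitz/H\"older perturbation bound driven by the Lasso $\ell_1$-rate, and the same resolvent identity plus Weyl's inequality for the inverses. The only differences are cosmetic (the paper factors out $\max_i w_i$ and uses $\lambda_{\max}(n^{-1}X'X)=O_p(1)$ where you keep the weights inside, and you spell out the difference-of-squares step for $\hat I$ that the paper dispatches with ``similarly'').
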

The proof is provided in Appendix \ref{subsec: proof HI rate} which 
essentially verifies that the argument of \cite{Xia-Nan-Li(2021)} goes through with i.n.i.d. data.\par  
For inference on a $r\times 1$ vector nonlinear parameter function 
$\rho(\theta)$ (which may depend on $n$), we define a debiased Lasso (one-step estimator) as 
\begin{equation}\label{eq: db one-step}
\tilde{\rho}
\equiv \rho(\hat{\theta})
+\dot{\rho}(\hat{\theta})'\hat{H}(\hat{\theta})^{-1}S(\hat{\theta}),\quad \dot{\rho}({\theta})\equiv \frac{\partial \rho({\theta})'}{\partial \theta}.
\end{equation}
We establish the asymptotic validity of Wald-type inference based on the debiased Lasso estimator 
 above in the proposition below. 
\begin{proposition}[Asymptotic validity of Survey Debiased Lasso test]\label{prop: DB}
Let Assumption \ref{A: AsyValid} hold and 
assume that $\lambda=C\sqrt{\frac{\log p}{n}}$ with $C=O(1)$ and $p\geq 1$, $p^2/n\to 0$ and $m_0 \log p \sqrt{\frac{p}{n}}\to 0$ as $n\to \infty$. 
If the $r\times 1$ function $\rho(\theta)$ is differentiable in 
a neighborhood of $\theta_0$ with a locally Lipschitz Jacobian $\dot{\rho}(\theta)$ and 
$\lambda_{\min}\left(\dot{\rho}({\theta}_0)'\dot{\rho}({\theta}_0)\right)>\lambda_l>0,$ where $r(<(p+1))$ is fixed, then
\begin{equation}\label{eq: db psi}
\left(\dot{\rho}(\hat{\theta})'
\hat{H}(\hat{\theta})^{-1}\hat{I}(\hat{\theta}) \hat{H}(\hat{\theta})^{-1}\dot{\rho}(\hat{\theta})\right)^{-1/2}n^{1/2}(\tilde{\rho}-\rho({\theta}_0))
\cond N(0, I_r).
\end{equation}
\end{proposition}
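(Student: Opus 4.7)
The plan is to decompose $n^{1/2}(\tilde{\rho}-\rho(\theta_0))$ into a linear term obeying a multivariate CLT plus a remainder that vanishes under the rate conditions $p^2/n\to 0$ and $m_0\log p\sqrt{p/n}\to 0$; consistency of the sandwich variance estimator via Lemma \ref{lem: HI rate} and Slutsky will then deliver \eqref{eq: db psi}.

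\textbf{Step 1 (decomposition).} I would Taylor-expand $\rho$ around $\theta_0$ and use the integral-form identity $S(\hat{\theta}) = S(\theta_0) - \bar{H}(\hat{\theta}-\theta_0)$ with $\bar{H}\equiv \int_0^1 \hat{H}(\theta_0+t(\hat{\theta}-\theta_0))dt$. Substituting into \eqref{eq: db one-step} and collecting terms gives
\begin{equation*}
\tilde{\rho}-\rho(\theta_0) = \dot{\rho}(\hat{\theta})'\hat{H}(\hat{\theta})^{-1}S(\theta_0) + R_{n,1} + R_{n,2},
\end{equation*}
where $R_{n,1} = [\dot{\rho}(\theta_0)-\dot{\rho}(\hat{\theta})]'(\hat{\theta}-\theta_0) + O_p(\Vert\hat{\theta}-\theta_0\Vert_2^2)$ captures the Jacobian-mismatch and second-order error in the expansion of $\rho$, and $R_{n,2} = \dot{\rho}(\hat{\theta})'\hat{H}(\hat{\theta})^{-1}[\hat{H}(\hat{\theta})-\bar{H}](\hat{\theta}-\theta_0)$ is the score-Hessian mismatch. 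The leading term is in turn linearized as $\dot{\rho}(\theta_0)'H(\theta_0)^{-1}\cdot n^{1/2}S(\theta_0)$, modulo a further error that I will also need to control.

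\textbf{Step 2 (remainder control -- the main obstacle).} I would invoke the standard Lasso rates $\Vert\hat{\theta}-\theta_0\Vert_2 = O_p(\sqrt{m_0}\lambda)$ and $\Vert\hat{\theta}-\theta_0\Vert_1 = O_p(m_0\lambda)$ with $\lambda = C\sqrt{\log p/n}$, which hold under Assumption \ref{A: AsyValid} via the quadratic margin condition \eqref{A: Strong Con} together with a sub-Gaussian tail bound on $\Vert S(\theta_0)\Vert_\infty$ for i.n.i.d. bounded-weight summands. Then $n^{1/2}\Vert R_{n,1}\Vert = O_p(m_0\log p/\sqrt{n}) = o_p(1)$ since $p\geq 1$ bounds this above by $m_0\log p\sqrt{p/n}$. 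For $R_{n,2}$, the Lipschitz bound \eqref{A: dg lip} yields the entry-wise inequality $\vert[\hat{H}(\hat{\theta})-\bar{H}]_{jk}\vert\leq L_0\,n^{-1}\sum_i w_i\vert x_{ij}x_{ik}\vert\,\vert x_i'(\hat{\theta}-\theta_0)\vert$; Cauchy--Schwarz combined with $\Vert X(\hat{\theta}-\theta_0)\Vert_\infty = O_p(m_0\lambda)$ and $\lambda_{\max}(n^{-1}\sum_i w_i x_ix_i') = O_p(1)$ (from Lemma \ref{lem: covm} applied with sub-Gaussian $x_i$) gives $n^{1/2}\Vert R_{n,2}\Vert = O_p(m_0^{3/2}\log p/\sqrt{n})$, which is $o_p(1)$ since $m_0^{1/2}\leq \sqrt{p}$. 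The linearization error is bounded by $\Vert \dot{\rho}(\hat{\theta})'\hat{H}(\hat{\theta})^{-1}-\dot{\rho}(\theta_0)'H(\theta_0)^{-1}\Vert_2\cdot \Vert n^{1/2}S(\theta_0)\Vert_2$, whose first factor is $O_p(\sqrt{p/n}+m_0\lambda)$ by Lemma \ref{lem: HI rate} and local Lipschitzness of $\dot{\rho}$, and whose second factor is $O_p(\sqrt{p})$ by computing its expected squared norm using \eqref{A: dg1} entry-wise; the product is $o_p(1)$ precisely under $p^2/n\to 0$ and $m_0\log p\sqrt{p/n}\to 0$.

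\textbf{Step 3 (CLT and conclusion).} With the remainders absorbed, $n^{1/2}(\tilde{\rho}-\rho(\theta_0)) = n^{-1/2}\sum_{i=1}^n u_i + o_p(1)$ with $u_i \equiv -\dot{\rho}(\theta_0)'H(\theta_0)^{-1}w_ix_i\dot{g}(y_i,x_i'\theta_0)\in\mathbb{R}^r$. The $u_i$ are independent and mean-zero by the GLM information equality, and $n^{-1}\sum_i \V(u_i) = V\equiv \dot{\rho}(\theta_0)'H(\theta_0)^{-1}I(\theta_0)H(\theta_0)^{-1}\dot{\rho}(\theta_0)$ has eigenvalues bounded away from $0$ and $\infty$ by Assumption \ref{A: AsyValid}\ref{AsyValid eval} and the assumed lower bound on $\dot{\rho}(\theta_0)'\dot{\rho}(\theta_0)$. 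Using \eqref{A: dg1} and $\Vert x_i\Vert_2\leq \sqrt{p+1}\,C_u$, I would verify $n^{-2}\sum_i\E\Vert u_i\Vert_2^4 = O(p^2/n)\to 0$, which is Lyapunov's condition with $\delta=2$; multivariate Lyapunov CLT then gives $V^{-1/2}n^{-1/2}\sum_i u_i\cond N(0,I_r)$. Finally, Lemma \ref{lem: HI rate} and continuity of $\dot{\rho}$ imply $\dot{\rho}(\hat{\theta})'\hat{H}(\hat{\theta})^{-1}\hat{I}(\hat{\theta})\hat{H}(\hat{\theta})^{-1}\dot{\rho}(\hat{\theta})\conp V$, and Slutsky's theorem delivers \eqref{eq: db psi}.
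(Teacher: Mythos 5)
Your proposal is correct and follows essentially the same route as the paper's proof: the same one-step linearization of $\tilde{\rho}$ around $S(\theta_0)$ with a Jacobian-mismatch remainder and a score--Hessian remainder, the same invocation of the Lasso rates and of Lemma \ref{lem: HI rate} to kill those remainders under $p^2/n\to 0$ and $m_0\log p\sqrt{p/n}\to 0$, and the same Lyapunov CLT plus Slutsky conclusion. The only cosmetic differences are your use of the integral-form Taylor identity in place of the mean-value form and a marginally sharper $O_p(m_0^{3/2}\lambda^2)$ bound on the Hessian-mismatch term where the paper uses $O_p(\sqrt{p}\,m_0\lambda^2)$; both suffice under the stated rate conditions.
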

The proof is given in Appendix \ref{subsec: proof DB}. $\lambda=C\sqrt{\frac{\log p}{n}}$ is a standard assumption 
in the literature \citep[see e.g.][]{Buhlmann-vandeGeer(2011), 
Negahban-etal(2012),vandeGeer-etal(2014), HTW(2015)}. The assumptions imposed on the number of covariates $p$, and  
the model sparsity $m_0$ are the same as those in \cite{Xia-Nan-Li(2021)}. In particular, while the condition $m_0 \log p \sqrt{\frac{p}{n}}\to 0$ is stronger 
than the condition $m_0 \frac{\log p}{\sqrt{n}}\to 0$ assumed by \cite{vandeGeer-etal(2014)}, 
no assumption is imposed directly on the sparsity of the inverse Hessian (and information matrix) i.e. $\max_{j}m_j=o(n/\log p)$, where 
$m_j\equiv |\{k\neq j: (H(\theta_0)^{-1})_{jk}\neq 0\}|$ is the number of non-zero elements of the $j$-th row of $H(\theta_0)^{-1}$, as in \cite{vandeGeer-etal(2014)}. As noted by \cite{Xia-Nan-Li(2021)}, the condition $p^2/n\to 0$ is weaker than 
the condition $\max_{j}m_j=o(n/\log p)$, when $m_j$ is of the order $p$.\par

The assumption of locally Lipschitz Jacobian $\dot{\rho}(\theta)$ is slightly 
stronger than the usual continuous differentiability assumption required for testing nonlinear hypotheses
(see e.g. Section 9 of \cite{Newey-McFadden(1994)} and \cite{Hansen(2022b), Hansen(2022)}). Under this assumption, an error term $n^{1/2}(\dot{\rho}(\hat{\theta})-\dot{\rho}(\bar{\theta}))'(\hat{\theta}-\theta_0)$, where $\bar{\theta}$ is a mean-value between $\hat{\theta}$ and ${\theta}_0$, 
that results from the estimation of $\theta_0$ and $\rho(\theta_0)$ becomes negligible.\par

Using Proposition \ref{prop: DB}, we obtain confidence intervals 
for the elements of $\theta_0$ as well as the vector nonlinear parameter function $\rho(\theta_0)$. 
One can also consider a plug-in estimator $\rho(\tilde{\theta})$, where $\tilde{\theta}$ is the one-step estimator defined in \eqref{eq: one-step0}. This estimator is asymptotically equivalent to 
the one-step estimator $\tilde{\rho}$ in \eqref{eq: db one-step}. The proof is actually similar to that of Proposition \ref{prop: DB}, thus is omitted.  In addition, multi-step estimators 
of $\theta_0$ and $\rho(\theta_0)$ can also be considered. 
\subsection{$C(\alpha)$/Orthogonalization inference}\label{subsec: Calpha}
\cite{Belloni-Chernozhukov-Wei(2016)} develop subvector inference procedure 
in a high-dimensional GLM that satisfies sparsity assumptions. 
They construct an estimating equation orthogonalized against the direction of the nuisance parameter estimation which also underlies the \cite{Neyman(1959)}'s $C(\alpha)$ test. Here, we consider a survey version of the  
$C(\alpha)$-type statistic for the $r\times 1$ nonlinear parameter functon $\rho(\theta)$ defined as 
\begin{equation}
C_\alpha(\rho_0)
\equiv n\, S(\tilde{\theta}^{*})'\hat{H}(\tilde{\theta}^{*})^{-1}
\dot{\rho}(\tilde{\theta}^{*})
\left(\dot{\rho}(\tilde{\theta}^{*})'
\hat{H}(\tilde{\theta}^{*})^{-1}\hat{I}(\tilde{\theta}^{*})\hat{H}(\tilde{\theta}^{*})^{-1}
\dot{\rho}(\tilde{\theta}^{*})\right)^{-1}
\dot{\rho}(\tilde{\theta}^{*})'
\hat{H}(\tilde{\theta}^{*})^{-1}
S(\tilde{\theta}^{*}),
\end{equation}
where $\tilde{\theta}^{*}$ is an auxiliary estimate that satisfies $\rho(\tilde{\theta}^{*})=\rho_0$. 
This test statistic is proposed, in a regular likelihood context, by \cite{Smith(1987c)} and 
studied further by \cite{Dufour-Trognon-Tuvaandorj(2016)} among others. 
\begin{proposition}[Asymptotic validity of Survey $C(\alpha)$ test]\label{prop: Ca}
Let Assumption \ref{A: AsyValid} hold and 
assume that $\lambda=C \sqrt{\frac{\log p}{n}}$ with $C=O(1)$, $p^2/n\to 0$ and $m_0 \log p \sqrt{\frac{p}{n}}\to 0$ as $n\to \infty$. 
Let $\tilde{\theta}^{*}$ be an auxiliary estimator that satisfies 
$\Vert\tilde{\theta}^{*}-\theta_0\Vert^2=O_p(m_0\lambda^2)$ and $\rho(\tilde{\theta}^{*})=\rho_0$, where 
the nonlinear parameter function $\rho(\theta)\in\mathbb{R}^r$ satisfies the conditions given in 
Proposition \ref{prop: DB}. 
Then, under $H_0:\rho(\theta_0)=\rho_0$ 
\begin{equation}
C_\alpha(\rho_0)\cond \chi^2_{r}.
\end{equation}
\end{proposition}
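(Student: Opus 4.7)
The plan is to show that $C_\alpha(\rho_0)$ equals a quadratic form in $\sqrt{n}\,\dot\rho(\theta_0)' H(\theta_0)^{-1} S(\theta_0)$ plus a $o_p(1)$ remainder, whose normalizing matrix converges in probability to the asymptotic covariance $V \equiv \dot\rho(\theta_0)' H(\theta_0)^{-1} I(\theta_0) H(\theta_0)^{-1} \dot\rho(\theta_0)$. Once that reduction is in place, the Lindeberg--Feller CLT applied to the weighted score, together with Slutsky's theorem and the continuous mapping theorem, delivers the $\chi^2_r$ limit. The overall template parallels the proof of Proposition~\ref{prop: DB}, but here the score is taken at the constrained auxiliary estimator $\tilde\theta^*$ rather than at the Lasso estimator, and the null hypothesis $\rho(\theta_0)=\rho(\tilde\theta^*)=\rho_0$ plays the crucial role of killing the leading bias.

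First I would perform a second-order Taylor expansion of the score around $\theta_0$, writing $S(\tilde\theta^*)=S(\theta_0)-H(\theta_0)(\tilde\theta^*-\theta_0)+R_n^{(1)}$, where, using the Lipschitz condition \eqref{A: dg lip} on $\ddot g$ together with the rate $\Vert\tilde\theta^*-\theta_0\Vert=O_p(\sqrt{m_0}\lambda)$, one gets $\Vert R_n^{(1)}\Vert=O_p(m_0 \lambda^2)$. Pre-multiplying by $\dot\rho(\tilde\theta^*)'\hat H(\tilde\theta^*)^{-1}$ and swapping $\hat H(\tilde\theta^*)^{-1}H(\theta_0)$ for $I_{p+1}$ via Lemma~\ref{lem: HI rate}, I obtain
\begin{equation*}
\dot\rho(\tilde\theta^*)'\hat H(\tilde\theta^*)^{-1}S(\tilde\theta^*)
=\dot\rho(\tilde\theta^*)'\hat H(\tilde\theta^*)^{-1}S(\theta_0)
-\dot\rho(\tilde\theta^*)'(\tilde\theta^*-\theta_0)+o_p(n^{-1/2}).
\end{equation*}
Next, I Taylor-expand $\rho$ and use $\rho(\tilde\theta^*)=\rho(\theta_0)$ to conclude $\dot\rho(\theta_0)'(\tilde\theta^*-\theta_0)=O(\Vert\tilde\theta^*-\theta_0\Vert^2)=O_p(m_0\lambda^2)$; the locally Lipschitz Jacobian assumption then lets me replace $\dot\rho(\tilde\theta^*)$ by $\dot\rho(\theta_0)$ with an additional error of order $\Vert\tilde\theta^*-\theta_0\Vert\cdot\Vert H(\theta_0)^{-1}S(\theta_0)\Vert=O_p(\sqrt{m_0\log p/n}\cdot\sqrt{p/n})$, which is $o_p(n^{-1/2})$ under $m_0 \log p\sqrt{p/n}\to 0$ and $p^2/n\to 0$.

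After these reductions the leading term is $\dot\rho(\theta_0)'H(\theta_0)^{-1}S(\theta_0)=-n^{-1}\sum_i w_i\dot\rho(\theta_0)'H(\theta_0)^{-1}x_i\dot g(y_i,x_i'\theta_0)$, a sum of independent mean-zero summands with total variance $V/n$ (by the information-type identity implicit in Assumption~\ref{A: AsyValid}\ref{AsyValid eval}). Applying the Lindeberg--Feller CLT, with the Lindeberg condition verified via the boundedness of $w_i$, the sub-Gaussian bound on $x_i$, and the uniform bound \eqref{A: dg1} on $\dot g(y_i,x_i'\theta_0)$, yields $\sqrt n\,\dot\rho(\theta_0)'H(\theta_0)^{-1}S(\theta_0)\cond N(0,V)$. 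For the normalizing matrix, Lemma~\ref{lem: HI rate} gives $\hat H(\tilde\theta^*)^{-1}\to_p H(\theta_0)^{-1}$ and $\hat I(\tilde\theta^*)\to_p I(\theta_0)$ in spectral norm (using $\Vert\tilde\theta^*-\theta_0\Vert^2=O_p(m_0\lambda^2)$ to control the expansions about $\hat\theta$ used there), and combined with the Lipschitz $\dot\rho$ this gives $\dot\rho(\tilde\theta^*)'\hat H(\tilde\theta^*)^{-1}\hat I(\tilde\theta^*)\hat H(\tilde\theta^*)^{-1}\dot\rho(\tilde\theta^*)\to_p V$. Slutsky and the continuous mapping theorem then give $C_\alpha(\rho_0)\cond \chi^2_r$.

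The main obstacle is the bookkeeping of the $o_p(n^{-1/2})$ remainder: one has to track the terms $\sqrt n\,\dot\rho(\tilde\theta^*)'(\hat H(\tilde\theta^*)^{-1}-H(\theta_0)^{-1})S(\theta_0)$, $\sqrt n\,\dot\rho(\tilde\theta^*)'\hat H(\tilde\theta^*)^{-1}R_n^{(1)}$, and $\sqrt n\,(\dot\rho(\tilde\theta^*)-\dot\rho(\theta_0))'(\tilde\theta^*-\theta_0)$ and verify each is $o_p(1)$ under the stated rate conditions, using $\Vert S(\theta_0)\Vert=O_p(\sqrt{p/n})$, the operator-norm rates from Lemma~\ref{lem: HI rate}, and the fact that $\Vert\dot\rho(\theta_0)\Vert_2$ stays bounded because $r$ is fixed and $\dot\rho(\theta_0)'\dot\rho(\theta_0)$ has eigenvalues bounded away from zero. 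The rate requirements $p^2/n\to 0$ and $m_0\log p\sqrt{p/n}\to 0$ are exactly what make these remainders negligible.
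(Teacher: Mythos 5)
Your proposal is correct and follows essentially the same route as the paper: a mean-value/Taylor expansion of the score at $\tilde{\theta}^{*}$ around $\theta_0$ with a Lipschitz-controlled remainder, use of the constraint $\rho(\tilde{\theta}^{*})=\rho_0$ together with the locally Lipschitz Jacobian to show $n^{1/2}\dot{\rho}(\tilde{\theta}^{*})'(\tilde{\theta}^{*}-\theta_0)=O_p(n^{1/2}m_0\lambda^2)=o_p(1)$, consistency of the sandwich normalizer via Lemma \ref{lem: HI rate}, the Lindeberg--Feller CLT for the studentized score (the paper's Lemma \ref{lem: AD1}), and Slutsky plus the CMT. The paper likewise organizes the remainder bounds by invoking Steps 1--3 of the proof of Proposition \ref{prop: DB}, exactly as you propose.
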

\noindent The proof is given in Appendix \ref{subsec: Proof Calpha}. In general, determining an auxiliary estimator 
that satisfies the constraint $\rho(\tilde{\theta}^{*})=\rho_0$ may be difficult. However, 
as we show in the next section, when testing a restriction on the AME of a binary regressor in the logit model, 
such an estimator can be readily obtained. 
\section{Survey logit}\label{sec: logit}
This section applies the results established in the previous sections to inference on the logit model estimated by the Lasso from survey data.
The standard logit specification for a dependent variable $y_i, i=1,\dots,n,$ is 
\begin{equation}
	P[y_i=1\vert x_i]
	=\Lambda(x_i'\theta),
\end{equation}
where $x_i=(1, \tilde{x}_i')'\in\mathbb{R}^{p+1}$, $\tilde{x}_i=(\tilde{x}_{i1},\dots, \tilde{x}_{ip})'\in\mathbb{R}^p$, and $\theta=(\alpha, \beta')'\in\mathbb{R}^{p+1}$, $\alpha\in\mathbb{R}$, $\beta\in\mathbb{R}^p$. 
Given the survey weights $\{w_i\}_{i=1}^n$ on the observations $\{(y_i, x_i')'\}_{i=1}^n,$ 
the weighted log-likelihood function is 
\begin{equation}\label{eq: WLL logit}
	L(\theta) =n^{-1}\sum_{i=1}^{n} w_i(y_{i} x_i'\theta - \log(1 + \exp(x_{i}'\theta))).
\end{equation}
The score function, the sample information and negative Hessian functions are given by 
\begin{align}
	S(\theta)
	&=\frac{\partial {L}(\theta)}{\partial \theta}=n^{-1}\sum_{i=1}^nw_ix_i(y_i-\Lambda(x_i'\theta)),\label{eq: score logit}\\
	\hat{I}(\theta)
		&=n^{-1}\sum_{i=1}^nw_i^2x_ix_i'\Lambda(x_i'\theta)(1-\Lambda(x_i'\theta)),\label{eq: info logit}\\
	\hat{H}(\theta)
	&=-\frac{\partial^2 {L}(\theta)}{\partial \theta\partial \theta'}=n^{-1}\sum_{i=1}^nw_i	x_ix_i'\Lambda(x_i'\theta)(1-\Lambda(x_i'\theta)).\label{eq: hessian logit}
\end{align}
\subsection{Inference on average marginal effects}
In the context of the logit model, a key parameter of interest 
is the AME which is a nonlinear function of the model parameters.
As such, this section focuses on the inference on AMEs. The marginal effect (ME) of a binary regressor $\tilde{x}_{ij}, j=1,\dots, p, i=1,\dots, n,$ with a coefficient $\theta_{(j)}$ is calculated by the change in $P[y_i=1\vert x_{i}]$ when the regressor $\tilde{x}_{ij}$ is switched from 0 to 1 holding all other variables constant: 
\begin{align}
\text{ME}_{ij}(\theta)\equiv\Lambda({x}_{i}'\theta)\vert_{\tilde{x}_{ij}=1}-\Lambda({x}_{i}'\theta)\vert_{\tilde{x}_{ij}=0}.
\end{align} 
The AME of the $j$-th regressor is defined as 
\begin{equation*}
	\text{AME}_j=\text{AME}_j(\theta_0)
	\equiv \E\left[\frac{1}{\sum_{i=1}^nw_i}\sum_{i=1}^nw_i\text{ME}_{ij}(\theta_0)\right],
\end{equation*}
where $\theta_0$ denotes the true value of $\theta$ and the expectation is taken with respect to the distribution of the regressors.\par 
Let us first consider the debiased Lasso inference for the AMEs. 
A natural estimator of $\text{AME}_j(\theta_0)$ is 
\begin{equation*}
	\widehat{\text{AME}}_j(\hat{\theta})
	\equiv \frac{1}{\sum_{i=1}^nw_i}
	\sum_{i=1}^nw_i\left(\Lambda({x}_{i}'\hat{\theta})\vert_{\tilde{x}_{ij}=1}-\Lambda({x}_{i}'\hat{\theta})\vert_{\tilde{x}_{ij}=0}\right),
\end{equation*}
where $\hat{\theta}=(\hat{\alpha}, \hat{\beta}')'$ is an estimator of $\theta_0$ e.g. the survey-weighted Lasso estimator.
In the current context, the one-step estimator defined in \eqref{eq: one-step0} specializes to 
\begin{equation*}
	\widetilde{\text{AME}}_j
	=\widehat{\text{AME}}_j(\hat{\theta})
	+\frac{\partial \widehat{\text{AME}}_j(\hat{\theta})}{\partial \theta'}\hat{H}(\hat{\theta})^{-1}S(\hat{\theta}),
\end{equation*}
where 
\begin{equation*}
	\frac{\partial \widehat{\text{AME}}_j(\hat{\theta})}
	{\partial\theta}
	\equiv \frac{1}{\sum_{i=1}^nw_i}
	\sum_{i=1}^nw_i\left\{\left[x_i\Lambda({x}_{i}'\hat{\theta})(1-\Lambda({x}_{i}'\hat{\theta}))\right]\vert_{\tilde{x}_{ij}=1}-\left[x_i\Lambda(x_{i}'\hat{\theta})(1-\Lambda({x}_{i}'\hat{\theta}))\right]\vert_{\tilde{x}_{ij}=0}\right\}.
\end{equation*}
To obtain a confidence interval for $\text{AME}_j(\theta_0), j=2,\dots, p+1,$ we can then use 
\begin{align*}
\left(\frac{\partial \widehat{\mathrm{AME}}_j(\hat{\theta})}{\partial\theta'}\hat{H}(\hat{\theta})^{-1}\hat{I}(\hat{\theta}) \hat{H}(\hat{\theta})^{-1}\frac{\partial \widehat{\mathrm{AME}}_j(\hat{\theta})}{\partial\theta}\right)^{-1/2}n^{1/2}(\widetilde{\mathrm{AME}}_j-\mathrm{AME}_j)
&\cond N(0, 1).
\end{align*}
Next, we turn to the SI. Let $\text{AME}_M(\theta_M)=[\text{AME}_{M2}(\theta_M),\dots, \text{AME}_{MM}(\theta_M)]'\in\mathbb{R}^{|M|-1}$ denote 
the AMEs for the active variables selected by the survey-weighted Lasso with coefficients $\beta_M$. 
Then, from \eqref{eq: si one-step} the SI for the AMEs in the selected model is based on the one-step estimator
\begin{equation}\label{eq: SI AME one-step}
	\widetilde{\text{AME}}_M
	=\widehat{\text{AME}}_M(\hat{\theta}_M)
	+\frac{\partial \widehat{\text{AME}}_M(\hat{\theta}_M)}{\partial \theta_M'}\hat{H}_M(\hat{\theta}_M)^{-1}S_M(\hat{\theta}_M).
\end{equation}
Finally, we consider the $C(\alpha)$ statistic. Let $C_\alpha(\mathrm{AME}_{0j})$ denote the $C(\alpha)$ statistic for testing $H_0:\mathrm{AME}_{j}=\mathrm{AME}_{0j}$. To obtain an auxiliary estimate that satisfies 
$\mathrm{AME}_{j}(\tilde{\theta}^{*})=\mathrm{AME}_{0j}$, we only need to solve for a scalar ${\theta}_{(j)}$ in the following equation:
\begin{equation*}
\frac{1}{\sum_{i=1}^nw_i}
\sum_{i=1}^nw_i\left(\Lambda\left({\theta}_{(j)}+{x}_{i(-j)}'\hat{\theta}_{(-j)}\right)
-\Lambda\left({x}_{i(-j)}'\hat{\theta}_{(-j)}\right)\right)=\mathrm{AME}_{0j}.
\end{equation*} 
Testing the zero restriction $H_0:\mathrm{AME}_{j}=0$ is particularly simple. First, note that 
$\mathrm{AME}_{j}=0$ if $\theta_{(j)}=0$. Furthermore, the Jacobian used in the $C_\alpha(\mathrm{AME}_{0j})$ statistic is 
\begin{align*}
	\frac{\partial \widehat{\text{AME}}_j(\theta)}
	{\partial\theta}
	&\equiv \frac{1}{\sum_{i=1}^nw_i}
	\sum_{i=1}^nw_i\left[0,\dots,\left(\Lambda({x}_{i}'\theta)(1-\Lambda({x}_{i}'\theta))\right)\vert_{\tilde{x}_{ij}=1}, \dots, 0\right]'\\
	&=\frac{1}{\sum_{i=1}^nw_i}
	\sum_{i=1}^nw_i\Lambda({x}_{i}'\theta)(1-\Lambda({x}_{i}'\theta))\vert_{\tilde{x}_{ij}=1}e_{j(p+1)}. 
\end{align*}
 Let $\tilde{\theta}^{*}$ denote the estimator when the $j$-th element of the Lasso estimator $\hat{\theta}$ is replaced by $0$. 	
Then, we have 
\begin{align*}
&C_\alpha(\mathrm{AME}_{0j})\\
&= n\, S(\tilde{\theta}^{*})'\hat{H}(\tilde{\theta}^{*})^{-1}
e_{j(p+1)}
	\left(e_{j(p+1)}'
	\hat{H}(\tilde{\theta}^{*})^{-1}\hat{I}(\tilde{\theta}^{*})\hat{H}(\tilde{\theta}^{*})^{-1}
	e_{j(p+1)}\right)^{-1}
e_{j(p+1)}'
	\hat{H}(\tilde{\theta}^{*})^{-1}
	S(\tilde{\theta}^{*})\\
&=C_\alpha(\theta_{0(j)}),	
\end{align*}
where $C_\alpha(\theta_{0(j)})$ is the $C(\alpha)$ statistic for testing the coefficient $H_0:\theta_{(j)}=0$. We summarize 
this simple observation in the following lemma. 
\begin{lemma}\label{lem: Ca equiv.}
The $C(\alpha)$ statistic for testing the coefficient $H_0:\theta_{(j)}=0$ on a binary regressor $\tilde{x}_{ij}$ based on the auxiliary estimator $\tilde{\theta}^{*}$ is equivalent 
to the $C(\alpha)$ statistic for testing the corresponding AME $H_0:\mathrm{AME}_{j}=0$ based on $\tilde{\theta}^{*}$. 
\end{lemma}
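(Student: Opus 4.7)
The plan is to exploit two algebraic simplifications that occur specifically at an auxiliary estimator $\tilde{\theta}^{*}$ whose $j$-th coordinate has been set to zero. First I will show that this choice automatically satisfies both constraints $\rho(\tilde{\theta}^{*})=0$, for $\rho(\theta)=\theta_{(j)}$ and for $\rho(\theta)=\widehat{\mathrm{AME}}_j(\theta)$, so the two $C(\alpha)$ statistics are indeed evaluated at a common point; then I will verify that at this point the Jacobian of $\widehat{\mathrm{AME}}_j$ reduces to a scalar multiple of $e_{j(p+1)}$, and finally check that this scalar cancels throughout the sandwich form.

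First I would observe that when the $j$-th coordinate of $\tilde{\theta}^{*}$ equals $0$, the linear index $x_i'\tilde{\theta}^{*}$ does not depend on $\tilde{x}_{ij}$, because the regressor $\tilde{x}_{ij}$ enters $x_i'\theta$ only through the product $\tilde{x}_{ij}\theta_{(j)}$. Consequently
$\Lambda(x_i'\tilde{\theta}^{*})|_{\tilde{x}_{ij}=1}=\Lambda(x_i'\tilde{\theta}^{*})|_{\tilde{x}_{ij}=0}=\Lambda(x_i'\tilde{\theta}^{*})$, so $\widehat{\mathrm{AME}}_j(\tilde{\theta}^{*})=0$ and the auxiliary constraint for the AME test is met. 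The constraint for the coefficient test is obvious.

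Next I would compute the Jacobian. Starting from
\begin{equation*}
\frac{\partial \widehat{\mathrm{AME}}_j(\theta)}{\partial \theta}
= \frac{1}{\sum_{i=1}^n w_i}\sum_{i=1}^n w_i \Bigl\{ x_i\Lambda(x_i'\theta)(1-\Lambda(x_i'\theta))\big|_{\tilde{x}_{ij}=1} - x_i\Lambda(x_i'\theta)(1-\Lambda(x_i'\theta))\big|_{\tilde{x}_{ij}=0}\Bigr\},
\end{equation*}
and evaluating at $\tilde{\theta}^{*}$, the scalar factor $\Lambda(1-\Lambda)$ is the same under the two settings of $\tilde{x}_{ij}$, while the vector difference $x_i|_{\tilde{x}_{ij}=1}-x_i|_{\tilde{x}_{ij}=0}=e_{j(p+1)}$. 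Hence
\begin{equation*}
\frac{\partial \widehat{\mathrm{AME}}_j(\tilde{\theta}^{*})}{\partial \theta}=\kappa\, e_{j(p+1)},\qquad \kappa\equiv \frac{1}{\sum_{i=1}^n w_i}\sum_{i=1}^n w_i \Lambda(x_i'\tilde{\theta}^{*})\bigl(1-\Lambda(x_i'\tilde{\theta}^{*})\bigr),
\end{equation*}
which is the key structural observation.

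Finally I would substitute this Jacobian into the definition of $C_\alpha(\mathrm{AME}_{0j})$. Because $\kappa$ is a nonzero scalar that appears twice on the ``outer'' side of the sandwich and squared inside the inverted middle matrix, the three factors of $\kappa$ cancel against $\kappa^{-2}$, leaving exactly the expression defining $C_\alpha(\theta_{0(j)})$ at the same $\tilde{\theta}^{*}$. Nonvanishing of $\kappa$ is guaranteed by the fact that $\Lambda(1-\Lambda)$ is strictly positive and $w_i>C_l>0$ under Assumption \ref{A: AsyValid}\ref{AsyValid max}. The main obstacle, though essentially routine, is simply to make explicit the cancellation of the conditional-probability weights in the Jacobian when the $j$-th coordinate is zero; once that is in place, the statistics coincide deterministically, not merely asymptotically, which is what the lemma claims.
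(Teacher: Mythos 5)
Your proposal is correct and follows essentially the same route as the paper: it notes that setting the $j$-th coordinate of $\tilde{\theta}^{*}$ to zero makes both constraints hold, that the Jacobian of $\widehat{\mathrm{AME}}_j$ at $\tilde{\theta}^{*}$ collapses to a scalar multiple $\kappa\, e_{j(p+1)}$, and that $\kappa$ cancels in the sandwich form of $C_\alpha$. If anything, you are slightly more careful than the paper in making explicit that the Jacobian simplification is valid precisely because $x_i'\tilde{\theta}^{*}$ does not depend on $\tilde{x}_{ij}$, and in checking $\kappa>0$.
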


\section{Simulations}\label{sec: simul}
This section presents a simulation evidence on the performance of the proposed procedures. 
We consider a logit model where the regressors and the dependent variables are generated as follows:
\begin{equation}
	{y}_i\sim \mathrm{Bernoulli}(\pi_i),
\end{equation}
where $\theta_0=(1,1,1, 0_{1\times (p-2)})'$, $\tilde{x}_{ij}\sim i.i.d.\, \mathrm{Bernoulli}(\mathrm{prob})$, $j=1,\dots, p,$ $i=1,\dots, N$, 
$x_i=(1, \tilde{x}_i')'$ and $\pi_i=x_i'\theta_0$. We set the size of the population equal to $N=10,000$. 
Two sampling schemes are considered: standard stratified sampling and exogenous stratification with $\mathrm{prob}=0.5$ and 
$\mathrm{prob}=0.4$, respectively. For each scheme, we create 4 strata and consider two cases: 
$(n_s, n)\in\{(50, 200), (100, 400)\}$, where $n_s$ observations are drawn from each stratum with replacement yielding a stratified sample of size $n$.\par  
In the standard stratified sampling with $\mathrm{prob}=0.5$, the population is stratified into 4 strata of sizes 
$N_1=1000, N_2=2000, N_3=3000$ and $N_4=4000$, respectively. As a result, the weights on the observations are $w_i=0.1, 0.2, 0.3, 0.4$ corresponding to the four strata. In the exogenous stratification with $\mathrm{prob}=0.4$, the population is stratified according to the 
values of the first two non-constant regressors: $(\tilde{x}_{i1}, \tilde{x}_{i2})
\in\{(0,0),(0,1), (1,0), (1,1)\}$. The weights on the observations in the above four strata are $w_i=0.36,0.24, 0.24, 0.16$, respectively.\par  
 To assess the effect of the dimension of the regressors, the values for $p$ are set such that 
$\frac{p}{n}\in\{0.01, 0.025, 0.05, 0.1, 0.25, 0.5\}$ for each $n\in\{200,400\}$. The true value of AME corresponding to the coefficient $\theta_{(2)}=\beta_1$ is $0.11$. 
The empirical size of the tests is examined by testing the following two restrictions separately:
\begin{equation}\label{eq: H0 sim}
H_0: \theta_{(2)}=1,\quad H_0:\mathrm{AME}_2=0.11.
\end{equation} 
To test the hypothesis on AME, we implement the two SI approaches, labeled as SI and SI2, with or without conditioning on the sign of the estimated AME, respectively, described in Section \ref{subsub: selective}. For the auxiliary estimate $\tilde{\theta}^{*}$ in the $C(\alpha)$ statistic, 
we used the one-step iteration of $(1,\hat{\theta}_{(-2)}')'$, where $1$ corresponds to the tested value and 
$\hat{\theta}_{(-2)}$ is the (weighted) logistic Lasso estimate of $\theta_{(-2)}$, the model coefficients other than $\theta_{(2)}$. 
Moreover, whenever the sample Hessian evaluated at $\tilde{\theta}^{*}$ in the $C(\alpha)$ statistic is found to be singular, we 
used the Moore-Penrose inverse. There was no such issue in the other test statistics.\par 
The model \eqref{eq: logitlasso} is fit using the \texttt{R} package \texttt{glmnet}. 
For the tuning parameter $\lambda$, we use the default value of the package which is chosen  
by 10-fold cross validation with loss function 
``\texttt{auc}'' (area under the ROC curve).\par  

Tables \ref{tab:Level0} and \ref{tab:LevelES} report the empirical sizes of the tests under standard stratified sampling and exogenous stratification, respectively.  
The results under both sampling schemes are qualitatively similar. 
All tests show reasonable size control when the number of regressors is moderate i.e. $p/n=0.01, 0.025, 0.05, 0.1, 0.25$ for both hypotheses. We can also see that the SI tests tend to underreject in most cases while the $C(\alpha)$ test does so when $p/n=0.5$. The size distortions of the SI method could potentially be alleviated by considering an appropriate form of bootstrap.
When $p/n=0.5$, that is, the number of covariates is large relative to the sample size,  
all tests tend to underreject.  
This may be attributed to the conditions imposed on the growth rate of $p$ relative to the degree of sparsity, the tuning parameter and the sample size 
which are needed for the asymptotic validity of the DB and $C(\alpha)$ tests given in 
Propositions \ref{prop: DB} and \ref{prop: Ca}.\par 

Moreover, when $p/n=0.5$, the $C(\alpha)$ test exhibits a substantial size distortion, while the DB and SI tests 
show somewhat better performance despite the fact that, in this case, the number of covariates are too high relative to the sample size for our
results to hold. 
It is also clear that the rejection rate of the survey $t$-test, denoted as $t_{\mathrm{svy}}$, 
deteriorates 
as the ratio $p/n$ grows, which is expected as the test is not robust to increasing number of covariates.\par 
\begin{table}[htbp]
	\caption{Empirical rejection frequencies of the tests for $H_0: \theta_{(2)}=1$ and $H_0:\mathrm{AME}_2=0.11$ at $5\%$ level. Standard stratified sampling.}
	\label{tab:Level0}%
	\begin{center}
	\begin{tabular}{lcccccc}
		\toprule
\multicolumn{1}{l}{Tests} &	\multicolumn{1}{c}{$p=2$} &\multicolumn{1}{c}{$p=5$} & \multicolumn{1}{c}{$p=10$} & \multicolumn{1}{c}{$p=20$} & 	\multicolumn{1}{c}{$p=50$}& 	\multicolumn{1}{c}{$p=100$}\\
\midrule
	\multicolumn{7}{c}{$H_0:\theta_{(2)}=1$, $n=200$}\\
	\midrule
DB & 5.0 & 4.4 & 3.7 & 3.1 & 4.5 & 3.3 \\ 		
$C(\alpha)$ & 5.5 & 4.1 & 3.1 & 2.8 & 4.2 & 0.3 \\ 		
SI & 3.9 & 2.5 & 2.3 & 2.6 & 3.6 & 3.6 \\ 		
$t_{\mathrm{svy}}$ & 6.2 & 6.4 & 8.0 & 8.7 & 36.0 & 94.9 \\ 		
  	\midrule
		\multicolumn{7}{c}{$H_0:\mathrm{AME}_{2}=0.11$, $n=200$}\\
		\midrule 
  DB & 5.4 & 5.3 & 4.6 & 3.7 & 3.5 & 1.4 \\ 		
  $C(\alpha)$ & 6.1 & 6.4 & 4.9 & 5.4 & 4.2 & 1.3 \\ 		
  SI & 4.2 & 2.6 & 2.2 & 2.8 & 3.6 & 4.7 \\ 		
  SI2 & 4.2 & 2.6 & 2.4 & 2.8 & 3.5 & 4.3 \\ 		
  $t_{\mathrm{svy}}$ & 5.7 & 7.7 & 7.4 & 8.2 & 50.9 & 93.3 \\ 		
     \bottomrule
\end{tabular}
\end{center}

	\begin{center}
	\begin{tabular}{lcccccc}
		\toprule
			\multicolumn{1}{l}{Tests} &	\multicolumn{1}{c}{$p=4$} & \multicolumn{1}{c}{$p=10$} & \multicolumn{1}{c}{$p=20$}  & \multicolumn{1}{c}{$p=40$} & 	\multicolumn{1}{c}{$p=100$} & \multicolumn{1}{c}{$p=200$}\\
		\midrule
	\multicolumn{7}{c}{$H_0:\theta_{(2)}=1$, $n=400$}\\
			\midrule 
DB & 4.8 & 4.4 & 6.0 & 3.7 & 5.6 & 3.9 \\ 
$C(\alpha)$& 4.7 & 4.2 & 4.2 & 4.2 & 5.7 & 0.5 \\ 
 SI & 5.5 & 3.6 & 3.7 & 3.0 & 3.9 & 2.9 \\ 
$t_{\mathrm{svy}}$ & 5.0 & 5.1 & 6.3 & 15.9 & 40.4 & 98.3 \\ 
			\midrule
		\multicolumn{7}{c}{$H_0:\mathrm{AME}_{2}=0.11$, $n=400$}\\
		\midrule 
DB & 4.5 & 4.9 & 5.8 & 5.0 & 4.6 & 3.3 \\ 
$C(\alpha)$ & 5.2 & 6.9 & 8.5 & 3.1 & 3.4 & 1.0 \\ 
SI & 5.1 & 4.4 & 4.6 & 2.8 & 3.9 & 3.2 \\ 
SI2 & 5.1 & 4.4 & 4.6 & 2.9 & 3.7 & 2.8 \\ 
$t_{\mathrm{svy}}$ & 5.3 & 6.9 & 9.1 & 10.8 & 46.8 & 93.7 \\ 
		\bottomrule
	\end{tabular}
\end{center}
\footnotesize{Notes: $n=200, 400$ and 1000 simulation replications. DB, $C(\alpha)$, SI and $t_{\mathrm{svy}}$ denote 
	the debiased Lasso, $C(\alpha)$, selective inference and standard survey-weighted $t$ tests respectively. For the restriction $H_0:\mathrm{AME}_2=0.11$, SI is conditional on the sign of 
	the estimated AME in addition to $\hat{M}=M, \hshm=\sm$ while SI2 is conditional on the latter only.}
\end{table}
\begin{table}[htbp]
	\caption{Empirical rejection frequencies of the tests for $H_0: \theta_{(2)}=1$ and $H_0:\mathrm{AME}_2=0.11$ at $5\%$ level. Exogenous stratification.}
	\label{tab:LevelES}%
	\begin{center}
	\begin{tabular}{lcccccc}
		\toprule
\multicolumn{1}{l}{Tests} &	\multicolumn{1}{c}{$p=2$} &\multicolumn{1}{c}{$p=5$} & \multicolumn{1}{c}{$p=10$} & \multicolumn{1}{c}{$p=20$} & 	\multicolumn{1}{c}{$p=50$}& 	\multicolumn{1}{c}{$p=100$}\\
\midrule
	\multicolumn{7}{c}{$H_0:\theta_{(2)}=1$, $n=200$}\\
	\midrule
DB & 4.9 & 4.8 & 3.1 & 4.1 & 7.3 & 4.6 \\ 
$C(\alpha)$ & 6.4 & 5.3 & 4.0 & 4.0 & 8.4 & 1.8 \\ 
 SI & 4.4 & 2.1 & 2.7 & 2.2 & 2.9 & 4.1 \\ 
  $t_{\mathrm{svy}}$ & 5.1 & 5.1 & 6.6 & 6.1 & 31.8 & 95.1 \\ 
  	\midrule
		\multicolumn{7}{c}{$H_0:\mathrm{AME}_{2}=0.11$, $n=200$}\\
		\midrule 
DB & 5.4 & 4.9 & 5.3 & 3.8 & 5.6 & 4.3 \\ 
$C(\alpha)$ & 6.3 & 5.5 & 3.9 & 4.5 & 6.3 & 1.7 \\ 
 SI & 4.1 & 1.9 & 3.0 & 2.4 & 2.8 & 5.2 \\ 
SI2 & 4.1 & 2.0 & 3.1 & 2.6 & 2.7 & 5.0 \\ 
  $t_{\mathrm{svy}}$& 5.9 & 6.1 & 8.8 & 7.6 & 43.4 & 93.6 \\ 
   \bottomrule
\end{tabular}
\end{center}
	\begin{center}
	\begin{tabular}{lcccccc}
		\toprule
		\multicolumn{1}{l}{Tests} &	\multicolumn{1}{c}{$p=4$} & \multicolumn{1}{c}{$p=10$} & \multicolumn{1}{c}{$p=20$}  & \multicolumn{1}{c}{$p=40$} & 	\multicolumn{1}{c}{$p=100$} & \multicolumn{1}{c}{$p=200$}\\
		\midrule
		\multicolumn{7}{c}{$H_0:\theta_{(2)}=1$, $n=400$}\\
		\midrule 
		DB & 5.7 & 4.9 & 8.4 & 4.7 & 7.1 & 4.0 \\ 
		$C(\alpha)$ & 5.1 & 4.6 & 5.7 & 4.7 & 9.1 & 7.0 \\ 
		SI & 7.0 & 5.7 & 4.1 & 4.2 & 3.1 & 2.9 \\ 
	  $t_{\mathrm{svy}}$ & 4.8 & 5.2 & 6.0 & 9.2 & 29.9 & 98.4 \\ 
					\midrule
		\multicolumn{7}{c}{$H_0:\mathrm{AME}_{2}=0.11$, $n=400$}\\
		\midrule 
		DB & 4.7 & 5.7 & 4.3 & 5.1 & 4.8 & 4.6 \\ 
		$C(\alpha)$ & 4.6 & 5.3 & 5.6 & 4.5 & 4.3 & 0.3 \\ 
		SI & 3.8 & 4.2 & 3.3 & 3.0 & 3.3 & 5.1 \\ 
		SI2 & 3.8 & 4.2 & 3.3 & 3.0 & 3.2 & 4.8 \\ 
	  $t_{\mathrm{svy}}$ & 6.0 & 6.9 & 5.3 & 8.9 & 43.9 & 91.0 \\ 
		\bottomrule
\end{tabular}
\end{center}
\footnotesize{Notes: $n=200, 400$ and 1000 simulation replications. DB, $C(\alpha)$, SI and $t_{\mathrm{svy}}$ denote 
	the debiased Lasso, $C(\alpha)$, selective inference and standard survey-weighted $t$ tests respectively. For the restriction $H_0:\mathrm{AME}_2=0.11$, SI is conditional on the sign of 
	the estimated AME in addition to $\hat{M}=M, \hshm=\sm$ while SI2 is conditional on the latter only.}
\end{table}
\section{Empirical application}\label{sec: app}
This section applies the proposed methods to Canadian Internet Use Survey (CIUS) 2020 data, 
and examines what demographic factors affect a person's access to a government program or service.\footnote{Available at \url{https://www150.statcan.gc.ca/n1/daily-quotidien/210622/dq210622b-eng.htm}} The dependent variable is a binary variable where respondents answered 
1) \emph{yes}; 2) \emph{no}; 3) \emph{not stated} to the question ``During the past 12 months, what activities did you perform on the Internet to interact
with the government in Canada? Was it: Accessed an account for a government program or service?"\par 
The covariates in this analysis are income, education, employment status, aboriginal identity, visible minority status, immigration status,  gender, type of household, language spoken at home, and province. All have two or more categories. 
There are $n = 17,031$ observations in the survey.\par 

The collection of CIUS 2020 is based on a stratified design employing probability sampling; the stratification is done at the province/census metropolitan area (CMA) and census agglomeration (CA) level where each of the ten Canadian provinces were divided into strata/geographic areas.\footnote{There are 151 strata with the largest stratum, Toronto, having 2,235,145 private dwellings and the smallest stratum, Elliot Lake, having 6,259 private dwellings as of 2016, see \url{https://www12.statcan.gc.ca/census-recensement/2016/dp-pd/hlt-fst/pd-pl/Table.cfm?Lang=Eng\&T=201\&SR=1\&S=3\&O=D\&RPP=9999\&PR=0}} Each record on a sampling frame used in CIUS 2020 is a group of one or several telephone numbers associated with the same address from the Census and various administrative sources with Statistics Canada's dwelling frame. The records\textemdash the groups of telephone numbers\textemdash were sampled independently without replacement from each stratum. \par 

The initial weight on observations is the inverse of an adjusted version of the probability of selection equal to 
the number of records sampled in the stratum divided by the number of records in the stratum from the survey frame.
The final person weight $w_i$ is 
an adjusted version of the initial weight that takes into account the household size and survey-response among others.\footnote{Further details of the weighting procedure can be found in Section 10 of Microdata user Guide, CIUS 2020 at \url{https://www23.statcan.gc.ca/imdb/p2SV.pl?Function=getSurvey$\&$SDDS=4432$\#$a2}} 

The base categories are omitted in each model as the comparison category for the logit model. 
The representative individual in the base category has the following characteristics \textendash\, \emph{male, non-aboriginal, neither English nor French} (e.g. English and non-official language) speaker, \emph{not employed, some post-secondary education, not a visible minority, family household with children under 18,} income of $\$44,120$-$\$75,321$, \emph{landed immigrant} (recent immigrant), and from the province Alberta.\par  

Table \ref{tab: coefs} reports the inference results for the logit coefficients.  
The survey logit Lasso selects \emph{French, Employed, High school or less, University degree, Visible Minority, Family household with no children under 18},  and \emph{Single}. The magnitude of the Lasso estimates are in line with 
the survey logit estimates, and the signs of the estimates also appear reasonable. All inference methods 
indicate that the coefficients on \emph{Employed, University degree} and \emph{Visible minority} are highly significant. The variable \emph{French} is selected by the Lasso but the inference results show that its coefficient is far from being significant.\par 
 It is interesting 
to note that although New Brunswick (NB) is not selected by the Lasso, its debiased Lasso estimate $-0.32$ is almost identical to the survey GLM estimate $-0.33$ and highly significant. \par 
Table \ref{tab: AME} displays the inference results for the AMEs.
The employed are about $8$-$11$ percentage points more likely to use 
the government online service than those not employed. Moreover,  the use of government online service in NB appears to be 6-7 percentage points lower  
than the level of Alberta (AB). 

\par 
The debiased Lasso and $C(\alpha)$ test results in Table \ref{tab: AME} show that the family household without children under age 18 is less likely to use the government service than those with children under age 18. ~Moreover, low educational attainment and high income negatively affect the likelihood of an individual using the government online services. The variables with the largest (in absolute value) AMEs on whether a person uses government online services are whether or not a person is employed, whether or not a person is single, and if their educational attainment was a \emph{High school or less} or a \emph{University degree}.\par 

\begin{table}[htbp]
	\small
	\caption{Point estimates of $\theta_0$ and test p-values for $H_0:\theta_{0(j)}=0$ in Lasso Logistic Regression for Government Online Service Access.}
	\begin{center}
	\label{tab: coefs}
			\begin{tabular}{lLLLLLLLL}
		\toprule
			\multicolumn{1}{l}{} &	\multicolumn{4}{c}{Estimator} & \multicolumn{4}{c}{p-values}\\
			\cmidrule(lr){2-5}\cmidrule(lr){6-9}
			\multicolumn{1}{l}{Variable} & 	\multicolumn{1}{l}{GLM} & 	\multicolumn{1}{l}{Lasso} & 	\multicolumn{1}{l}{DB} & 
				\multicolumn{1}{l}{SI} & 	\multicolumn{1}{l}{$t_{\mathrm{svy}}$} & 	\multicolumn{1}{l}{DB} & 	\multicolumn{1}{l}{$C(\alpha)$} & 
					\multicolumn{1}{l}{SI} \\ 
		\midrule
		Intercept & -0.53 & -0.51 & -0.52 & -0.37 & 0.23 & 0.18 & 0.18 & 0.00 \\ 
		Female & -0.02 &  0.00 & -0.02 & - & 0.61 & 0.61 & 0.61 & - \\ 
		Aboriginal &  0.11 &  0.00 & 0.10 & - & 0.43 & 0.45 & 0.45 & - \\ 
		Aboriginal n.s. &  0.85 &  0.00 & 0.71 & - & 0.11 & 0.16 & 0.16 & - \\ 
		English &  0.30 &  0.00 & 0.28 & - & 0.49 & 0.44 & 0.44 & - \\ 
		French & -0.18 & -0.22 & -0.15 & -0.59 & 0.68 & 0.70 & 0.78 & 1.00 \\ 
		English and French &  0.48 &  0.00 & 0.46 & - & 0.27 & 0.22 & 0.22 & - \\ 
		Language n.s. & -0.15 &  0.00 & -0.14 & - & 0.80 & 0.79 & 0.79 & - \\ 
		Employed &  0.36 &  0.30 & 0.36 & 0.34 & 0.00 & 0.00 & 0.00 & 0.00 \\ 
		Employment n.s. &  0.35 &  0.00 & 0.32 & - & 0.29 & 0.33 & 0.33 & - \\ 
		High school or less & -0.53 & -0.41 & -0.51 & -0.51 & 0.00 & 0.00 & 0.00 & 1.00 \\ 
		University degree &  0.37 &  0.32 & 0.37 & 0.35 & 0.00 & 0.00 & 0.00 & 0.00 \\ 
		Education n.s. & -0.99 &  0.00 & -0.78 & - & 0.01 & 0.06 & 0.06 & - \\ 
		Visible minority &  0.27 &  0.17 & 0.27 & 0.25 & 0.00 & 0.00 & 0.00 & 0.00 \\ 
		Visible minority n.s. &  0.40 &  0.00 & 0.33 & - & 0.33 & 0.41 & 0.41 & - \\ 
		Family household w.o.c.u 18 & -0.29 & -0.08 & -0.28 & -0.28 & 0.00 & 0.00 & 0.00 & 1.00 \\ 
		Single & -0.72 & -0.31 & -0.70 & -0.65 & 0.00 & 0.00 & 0.00 & 1.00 \\ 
		Other household type & -0.07 &  0.00 & -0.07 & - & 0.62 & 0.64 & 0.64 & - \\ 
		Family n.s. & -0.17 &  0.00 & -0.17 & - & 0.48 & 0.48 & 0.48 & - \\ 
		\$44,119 and less & -0.01 &  0.00 & -0.01 & - & 0.92 & 0.94 & 0.94 & - \\ 
		\$75,322--\$109,431 & -0.04 &  0.00 & -0.04 & - & 0.62 & 0.62 & 0.62 & - \\ 
		\$109,432--\$162,799 & -0.02 &  0.00 & -0.02 & - & 0.80 & 0.80 & 0.80 & - \\ 
		\$162,800 and higher & -0.22 &  0.00 & -0.21 & - & 0.01 & 0.01 & 0.01 & - \\ 
		Non-landed immigrant &  0.01 &  0.00 & 0.01 & - & 0.86 & 0.87 & 0.87 & - \\ 
		Immigration n.s. &  0.90 &  0.00 & 0.91 & - & 0.29 & 0.31 & 0.31 & - \\ 
		NL &  0.14 &  0.00 & 0.14 & - & 0.20 & 0.20 & 0.20 & - \\ 
		PEI & -0.03 &  0.00 & -0.03 & - & 0.78 & 0.78 & 0.78 & - \\ 
		NS & -0.21 &  0.00 & -0.21 & - & 0.05 & 0.06 & 0.06 & - \\ 
		NB & -0.33 &  0.00 & -0.32 & - & 0.01 & 0.00 & 0.00 & - \\ 
		QC & -0.25 &  0.00 & -0.25 & - & 0.02 & 0.02 & 0.02 & - \\ 
		ON & -0.16 &  0.00 & -0.16 & - & 0.06 & 0.06 & 0.06 & - \\ 
		MB & -0.21 &  0.00 & -0.20 & - & 0.07 & 0.07 & 0.07 & - \\ 
		SK & -0.01 &  0.00 & -0.01 & - & 0.90 & 0.90 & 0.90 & - \\ 
		BC &  0.04 &  0.00 & 0.04 & - & 0.65 & 0.65 & 0.65 & - \\ 
		\bottomrule
	\end{tabular}
\end{center}
\footnotesize{Notes: $n = 17,031$. GLM, Lasso, DB and SI in the columns 2-5 denote the survey GLM, survey Lasso, debiased Lasso and SI one-step estimates of $\theta_{0(j)}$, respectively. The columns 6-9 report the p-values of the survey GLM, DB, $C(\alpha)$ and SI tests for $\theta_{0(j)}=0$, respectively. $``-"$ means ``not computed". n.s. and w.o.c.u. abbreviate ``not stated'' and ``without children under".} 
\end{table}
\begin{table}[htbp]
	\small
	\caption{Point estimates of $\mathrm{AME}$ and test p-values for $H_0:\mathrm{AME}_{j}=0$ in Lasso Logistic Regression for Government Online Service Access.}
	\begin{center}
			\label{tab: AME}
			\begin{tabular}{lLLLLLLL}
				\toprule
				\multicolumn{1}{l}{} &	\multicolumn{3}{c}{Estimator} & \multicolumn{4}{c}{p-values}\\
				\cmidrule(lr){2-4}\cmidrule(lr){5-8}
				\multicolumn{1}{l}{Variable} & 	\multicolumn{1}{l}{GLM} & 	\multicolumn{1}{l}{DB} & 
				\multicolumn{1}{l}{SI} & 	\multicolumn{1}{l}{$t_{\mathrm{svy}}$} & 	\multicolumn{1}{l}{DB} & 	\multicolumn{1}{l}{$C(\alpha)$} & 
				\multicolumn{1}{l}{SI} \\
			\midrule
			Female &  0.02 & -0.01 & - & 0.43 & 0.63 & 0.61 & - \\ 
			Aboriginal &  0.19 & 0.02 & - & 0.11 & 0.46 & 0.45 & - \\ 
			Aboriginal n.s. &  0.01 & 0.16 & - & 0.65 & 0.16 & 0.16 & - \\ 
			English & -0.22 & 0.06 & - & 0.01 & 0.46 & 0.44 & - \\ 
			French &  0.08 & -0.03 & -0.13 & 0.29 & 0.70 & 0.78 & 1.00 \\ 
			English and French &  0.08 & 0.10 & - & 0.00 & 0.23 & 0.22 & - \\ 
			Language n.s. &  0.07 & -0.03 & - & 0.49 & 0.80 & 0.79 & - \\ 
			Employed &  0.11 & 0.08 & 0.08 & 0.27 & 0.00 & 0.00 & 0.00 \\ 
			Employment n.s. & -0.04 & 0.07 & - & 0.48 & 0.34 & 0.33 & - \\ 
			High school or less & -0.06 & -0.11 & -0.11 & 0.00 & 0.00 & 0.00 & 1.00 \\ 
			University degree & -0.01 & 0.08 & 0.08 & 0.61 & 0.00 & 0.00 & 0.00 \\ 
			Education n.s. & -0.04 & -0.16 & - & 0.68 & 0.10 & 0.06 & - \\ 
			Visible minority & -0.12 & 0.06 & 0.06 & 0.00 & 0.00 & 0.00 & 0.00 \\ 
			Visible minority n.s. &  0.20 & 0.08 & - & 0.29 & 0.42 & 0.41 & - \\ 
			Family household w.o.c.u 18 &  0.00 & -0.06 & -0.06 & 0.92 & 0.00 & 0.00 & 1.00 \\ 
			Single & -0.01 & -0.15 & -0.14 & 0.62 & 0.00 & 0.00 & 1.00 \\ 
			Other household type &  0.00 & -0.02 & - & 0.80 & 0.65 & 0.64 & - \\ 
			Family n.s. & -0.05 & -0.04 & - & 0.01 & 0.50 & 0.48 & - \\ 
			\$44,119 and less & -0.03 & 0.00 & - & 0.80 & 0.94 & 0.94 & - \\ 
			\$75,322--\$109,431 & -0.05 & -0.01 & - & 0.07 & 0.63 & 0.62 & - \\ 
			\$109,432--\$162,799 & -0.07 & 0.00 & - & 0.01 & 0.81 & 0.80 & - \\ 
			\$162,800 and higher &  0.03 & -0.05 & - & 0.20 & 0.01 & 0.01 & - \\ 
			Non-landed immigrant &  0.00 & 0.00 & - & 0.86 & 0.88 & 0.87 & - \\ 
			Immigration n.s. & -0.05 & 0.21 & - & 0.05 & 0.31 & 0.31 & - \\ 
			NL & -0.04 & 0.03 & - & 0.06 & 0.21 & 0.20 & - \\ 
			PEI & -0.02 & -0.01 & - & 0.62 & 0.79 & 0.78 & - \\ 
			NS & -0.01 & -0.05 & - & 0.78 & 0.07 & 0.06 & - \\ 
			NB & -0.06 & -0.07 & - & 0.02 & 0.01 & 0.00 & - \\ 
			QC & -0.16 & -0.05 & - & 0.00 & 0.03 & 0.02 & - \\ 
			ON &  0.00 & -0.04 & - & 0.90 & 0.07 & 0.06 & - \\ 
			MB &  0.08 & -0.04 & - & 0.00 & 0.08 & 0.07 & - \\ 
			SK &  0.06 & 0.00 & - & 0.00 & 0.90 & 0.90 & - \\ 
			BC &  0.09 & 0.01 & - & 0.33 & 0.66 & 0.65 & - \\ 
			\bottomrule
	\end{tabular}
\end{center}
\footnotesize{Notes: $n = 17,031$. GLM, DB and SI in the columns 2-4 denote the survey GLM, debiased Lasso and SI one-step estimates of $\mathrm{AME}_{j}$. 
The columns 5-8 report the p-values of the survey GLM, DB, $C(\alpha)$ and SI tests for $\mathrm{AME}_{j}=0$, respectively. 
The $C(\alpha)$ test p-values are identical to those reported in Table \ref{tab: coefs} (Lemma \ref{lem: Ca equiv.}). 
The p-values of SI2 were identical to those of SI, thus not shown. $``-"$ means ``not computed".} 
\end{table}
\section{Conclusion}\label{sec: conc}
This paper has provided two main results. First, we have extended Lasso inference methods 
to a GLM with survey weights and/or heteroskedasticity, and established 
their asymptotic validity. Second, we have considered inference on nonlinear parameter functions. The proposed extended inference methods were applied to the logit model and remain reliable when $p/n$ increases as illustrated in a simulation study with standard stratified sampling and exogenous stratification. An empirical illustration based on the CIUS 2020 data also confirms the relevance of the proposed approach.
\newpage
\bibliographystyle{myagsm}
\bibliography{DigitalDivide}

\newpage
\appendix
\section{Proofs}\label{App: proofs}

\subsection{Proposition \ref{prop1}}\label{subsec: proof SI}

We will verify the assumptions of Algorithm 2 of 
\cite{KKK(2022)}. Let $\hat{\beta}_{Mj}$ and ${\beta}_{M0j}$ denote the 
$j$-th elements of $\hat{\beta}_{M}$ and ${\beta}_{M0}$, respectively.
Set in Assumptions (A1)--(A4) and Algorithm 2 of \cite{KKK(2022)} that $\hat{\theta}_q=\hat{\beta}_{Mj}$, ${\theta}_q={\beta}_{M0j}$, $D_{n,q}=AZ-b$, where $A$, $Z$ and $b$ are as defined in \eqref{def: Ab}, and 
\begin{equation*}
\mu_{n,q}=An^{1/2}(\beta_{M0}', 0_{p+1-|M|}')'-b
=[n^{1/2}(-\mathrm{diag}(\sm)\beta_{M0})', 0_{2p+2-2|M|}']'-b.
\end{equation*} 
Rewrite	\eqref{eq:KKT1} as 
\begin{align}
\{\mathtt{s}_M=\mathrm{sign}(\hat{\beta}_M)\}
&=\{\mathrm{diag}(\sm)\hat{\beta}_M>0\}\notag\\
&=\{\mathrm{diag}(\sm)(\tilde{\beta}_M-\hat{H}_M(\hat{\theta}_M)^{-1}(0, \lambda\sm')')>0\}\notag\\
&=\{-\mathrm{diag}(\sm)\tilde{\beta}_M<-\lambda\,\mathrm{diag}(\sm)\hat{H}_M(\hat{\theta}_M)^{-1}(0, \sm')'\}.
\end{align}
The constraint \eqref{eq:KKT2} can be rewritten as 
\begin{align}
\{\Vert\mathtt{u}\Vert_{\infty}<1\}
&=\{\Vert \lambda^{-1}{S}_{-M}(\hat{\theta}_M)\Vert_{\infty}<1\}\notag\\
&=\{\Vert \lambda^{-1}\left(\tilde{S}_{-M}(\hat{\theta}_M)+\hat{H}_{-MM}(\hat{\theta}_M)\hat{H}_M(\hat{\theta}_M)^{-1}S_M(\hat{\theta}_{M})\right)\Vert_\infty<1\}\notag\\
&=\{-\bm{1}_{p+1-|M|}\leq  \lambda^{-1}\left(\tilde{S}_{-M}(\hat{\theta}_M)+\hat{H}_{-MM}(\hat{\theta}_M)\hat{H}_M(\hat{\theta}_M)^{-1}S_M(\hat{\theta}_{M})\right)\leq \bm{1}_{p+1-|M|}\}\notag\\
&=\{\tilde{S}_{-M}(\hat{\theta}_{M})\leq \lambda(\bm{1}_{p+1-|M|}-\hat{H}_{-MM}(\hat{\theta}_M)\hat{H}_M(\hat{\theta}_M)^{-1}(0, \sm')'),\notag\\
&\quad-\tilde{S}_{-M}(\hat{\theta}_{M})	
\leq \lambda(\bm{1}_{p+1-|M|}+\hat{H}_{-MM}(\hat{\theta}_M)\hat{H}_M(\hat{\theta}_M)^{-1}(0, \sm')')\},
\end{align}
where the fourth equality uses \eqref{eq:KKT1}. Thus, $\{\hat{M}=M, \mathrm{sign}(\hat{\beta}_M)=\mathtt{s}_M\}
=\{AZ\leq b\}$ and Assumption (A1) of \cite{KKK(2022)} is satisfied.\par 
Assumption (A2) therein is verified as follows. Consider the first $(|M|-1)\times 1$ nonzero subvector of $\mu_{n,q}=[n^{1/2}(-\mathrm{diag}(\sm)\beta_{M0})', 0_{2p+2-2|M|}']'-b.$ 
Clearly, $-n^{1/2}\mathrm{diag}(\sm)\beta_{M0}=-n^{1/2}|\beta_{M0}|\to -\infty$ as $n\to\infty$. Furthermore, 
$b=O_p(1)$ because 
$n^{-1/2}\lambda=C=O(1)$ and $\hat{H}_{-MM}(\hat{\theta}_M)-{H}_{-MM}({\theta}_{M0})\conp 0$
and $\hat{H}_{M}(\hat{\theta}_M)^{-1}-{H}_{M}({\theta}_{M0})^{-1}\conp 0$ by and Assumption \ref{A: AsyValid} and Lemma \ref{lem: HI rate}. 
Thus, the first $(|M|-1)\times 1$ nonzero subvector of $\mu_{n,q}$ diverges to $-\infty$ in probability. 
Combined with \eqref{eq: SI pre con} shown below, by Slutsky's lemma (Corollary 11.2.3 and Problem 11.36 of \cite{Lehmann-Romano(2005)}) we have \begin{align*}
	\liminf_{n\to\infty} P[D_{n,q}\leq 0]
	&=\liminf_{n\to\infty} P[D_{n,q}-\mu_{n,q}\leq-\mu_{n,q}]>0.
\end{align*}
This verifies Assumption (A2) of \cite{KKK(2022)}.\par
Assumption (A3) of \cite{KKK(2022)} holds as follows. 
From Lemma \ref{lem: consistency} and the fact $p$ is fixed, $n^{1/2}(\hat{\theta}_M-\theta_{M0})=O_p(1)$. 
By the mean-value expansion, 
\begin{align}
S_{M}(\hat{\theta}_M)
&=S_M(\theta_{M0})-\hat{H}_M(\theta_M^{*})(\hat{\theta}_M-\theta_{M0}),\label{eq: SI mve1}\\
S_{-M}(\hat{\theta}_M)
&=S_{-M}({\theta}_{M0})-\hat{H}_{-MM}(\bar{\theta}_M)(\hat{\theta}_M-\theta_{M0}),\label{eq: SI mve2}
\end{align} 
where $\theta_M^{*}$ and $\bar{\theta}_M$ are the mean-value between $\theta_{M0}$ and $\hat{\theta}_M$. Hence, 
\begin{align*}
n^{1/2}(\tilde{\theta}_M-\theta_{M0})
&=n^{1/2}(\hat{\theta}_M-\theta_{M0})
-\hat{H}_M(\hat{\theta}_M)^{-1}\hat{H}_M(\theta_M^{*})n^{1/2}(\hat{\theta}_M-\theta_{M0})
+n^{1/2}
\hat{H}_M(\hat{\theta}_M)^{-1}S_M(\theta_{M0}),\\
&=n^{1/2}{H}_M({\theta}_{M0})^{-1}S_M(\theta_{M0})+o_p(1),
\end{align*}
where we used $\hat{H}_M(\hat{\theta}_M)^{-1}\hat{H}_M(\theta_M^{*})\conp I_{|M|}$ 
and $\hat{H}_M(\hat{\theta}_M)^{-1}-H_M(\theta_{M0})^{-1}\conp 0$ which follow from 
Lemma \ref{lem: HI rate}, the convergence of the Hessian assumption, $n^{1/2}S_M(\theta_{M0})=O_p(1)$ and 
the CMT. 
Moreover, using \eqref{eq: SI mve1} and \eqref{eq: SI mve2}
\begin{align*}
n^{1/2}\tilde{S}_{-M}(\hat{\theta}_M)
&=n^{1/2}S_{-M}(\hat{\theta}_M)-\hat{H}_{-MM}(\hat{\theta}_M)\hat{H}_M(\hat{\theta}_M)^{-1}n^{1/2}S_M(\hat{\theta}_{M})\\
&=n^{1/2}S_{-M}({\theta}_{M0})-\hat{H}_{-MM}(\bar{\theta}_M)(\hat{\theta}_M-\theta_{M0})\\
&\quad-\hat{H}_{-MM}(\hat{\theta}_M)\hat{H}_M(\hat{\theta}_M)^{-1}n^{1/2}
[S_M(\theta_{M0})-\hat{H}_M(\theta_M^{*})(\hat{\theta}_M-\theta_{M0})]\\
&=n^{1/2}S_{-M}({\theta}_{M0})
-{H}_{-MM}({\theta}_{M0}){H}_M({\theta}_{M0})^{-1}n^{1/2}
S_M(\theta_{M0})+o_p(1)\\
&=n^{1/2}\tilde{S}_{-M}(\theta_{M0})+o_p(1),
\end{align*}
where the second equality uses $\hat{H}_{-MM}(\hat{\theta}_M)^{-1}-H_{-MM}(\theta_{M0})^{-1}\conp 0$  which follows from 
Lemma \ref{lem: HI rate}, and and the convergence of the Hessian above.. 
Therefore, by the Lyapunov's CLT applied to $[n^{1/2}{S}_{M}(\theta_{M0})', n^{1/2}\tilde{S}_{-M}(\theta_{M0})']'$ (see the proof of Lemma \ref{lem: AD1}) and Slutsky's lemma
\begin{equation}\label{eq: pre CLT}
	\begin{bmatrix}
		n^{1/2}(\tilde{\beta}_M-\beta_{M0})\\
		n^{1/2} \tilde{S}_{-M}(\hat{\theta}_M)
	\end{bmatrix}
=	
	\begin{bmatrix}
[0_{|M|-1}, I_{|M|-1}]{H}_M({\theta}_{M0})^{-1}n^{1/2}S_M(\theta_{M0})\\
n^{1/2}\tilde{S}_M(\theta_{M0})
\end{bmatrix}
+o_p(1)
	\cond 
	N(0, \Sigma).
\end{equation} 
Then, by Slutsky's lemma for $j=1,\dots, |M|-1$
\begin{align}
\begin{bmatrix}	
n^{1/2}e_{j(|M|-1)}'(\hat{\beta}_{M}-{\beta}_{M0})\\	
D_{n,q}-\mu_{n,q}
\end{bmatrix}
&=
\begin{pmatrix}
	e_{jp}'\\
	A
\end{pmatrix}
\begin{bmatrix}
	n^{1/2}(\tilde{\beta}_M-\beta_{M0})\\
	n^{1/2} \tilde{S}_{-M}(\hat{\theta}_M)
\end{bmatrix}
\cond 
N\left[0, 
\begin{pmatrix}
	e_{jp}'\\
	A
\end{pmatrix}
\Sigma 
\begin{pmatrix}
	e_{jp}'\\
	A
\end{pmatrix}'
\right].\label{eq: SI pre con}
\end{align}
Assumption (A3) of \cite{KKK(2022)} thus holds. 
Finally, by the CMT and Lemma \ref{lem: HI rate}  
\begin{equation}
\hat{\Sigma}\conp {\Sigma}.	\label{A:Info con}
\end{equation}
This verifies Assumption (A4) of \cite{KKK(2022)} and the result follows. 

\subsection{Lemma \ref{lem: covm}}\label{proof covm}

We prove the result in 4 steps. In the first step, we show that $\bar{\Sigma}_n^{-1/2}A_i$ is sub-Gaussian. 
	The second step reduces to the problem into bounding a sum of zero mean, independent sub-exponential random variables. The third step applies Bernstein's inequality to the average 
	determined in the second step. Finally, the fourth step completes the proof.  
	
	\paragraph*{Step 1: Sub-Gaussian norm bound for $\bar{\Sigma}_n^{-1/2}A_i$.}~\\
	We first verify that $\bar{\Sigma}_n^{-1/2}A_i$ is sub-Gaussian. 
	Because $A_i$ is sub-Gaussian with $\Vert A_i\Vert_{\psi_2}\leq K$, by Remark 5.18 of \cite{Vershynin(2010)}
	$\Vert A_i-\mu_i\Vert_{\psi_2}\leq 2K$. Hence, there exists an absolute constant $C>0$ such that 
	for all $t\in\mathbb{R}^p$
	\begin{equation}
	\E[\exp(t'(A_i-\mu_i))]\leq \exp(C\Vert t\Vert^2\Vert A_i-\mu_i\Vert_{\psi_2})\leq 
	\exp(2CK\Vert t\Vert^2),
	\end{equation}
see Section 5.2.3 of \cite{Vershynin(2010)} and \cite{JNGKJ(2019)}. Hence, 

	\begin{align}
		\E[\exp(t'\bar{\Sigma}_n^{-1/2}(A_i-\mu_i))]
		&\leq \exp(2CK\Vert \bar{\Sigma}_n^{-1/2}t\Vert^2)\notag\\
		&\leq \exp(2CK\Vert t\Vert^2\lambda_{\max}(\bar{\Sigma}_n^{-1}))\notag\\
		&=\exp(2CK\Vert t\Vert^2/\lambda_{\min}(\bar{\Sigma}_n))\notag\\
		&<\exp(2CK\Vert t\Vert^2/\lambda_l).
	\end{align} 
	It follows that for some absolute constant $C_1>0$
	\begin{equation}\label{eq: psi2 bound0}
		\Vert\bar{\Sigma}_n^{-1/2}(A_i-\mu_i)\Vert_{\psi_2}\leq C_1K\equiv K_1.
	\end{equation}
	Let $S^{p-1}\equiv\{x\in\mathbb{R}^p, \Vert x\Vert^2=1\}$. Next we will bound 
	\begin{equation}\label{eq: Sigma-1/2Ai norm}
		\Vert \bar{\Sigma}_n^{-1/2}A_i\Vert_{\psi_2}
		\equiv \sup_{x\in S^{p-1}}\Vert A_i'\bar{\Sigma}_n^{-1/2}x\Vert_{\psi_2}
		=\sup_{x\in S^{p-1}}\sup_{m\geq 1}m^{-1/2}
		\left(\E[\vert A_i'\bar{\Sigma}_n^{-1/2}x\vert^m]\right)^{1/m}.
	\end{equation}
	For $m\geq 1$, it holds that 
	\begin{align}
		\left(\E[\vert A_i'\bar{\Sigma}_n^{-1/2}x\vert^m]\right)^{1/m}
		&=\left(\E[\vert (A_i-\mu_i)'\bar{\Sigma}_n^{-1/2}x+\mu_i'\bar{\Sigma}_n^{-1/2}x\vert^m]\right)^{1/m}\notag\\
		&\leq \left(\E[\vert (A_i-\mu_i)'\bar{\Sigma}_n^{-1/2}x\vert^m]\right)^{1/m}
		+\left(\E[\vert \mu_i'\bar{\Sigma}_n^{-1/2}x\vert^m]\right)^{1/m}\notag\\
		&\leq  \left(\E[\vert (A_i-\mu_i)'\bar{\Sigma}_n^{-1/2}x\vert^m]\right)^{1/m}
		+\E[\vert A_i'\bar{\Sigma}_n^{-1/2}x\vert].\label{eq: psi2 bound}
	\end{align}
where the first inequality is by Minkowski's inequality and the second inequality is by Jensen's inequality 
on noting that $\left(\E[\vert \mu_i'\bar{\Sigma}_n^{-1/2}x\vert^m]\right)^{1/m}=\vert \mu_i'\bar{\Sigma}_n^{-1/2}x\vert=\vert \E[A_i'\bar{\Sigma}_n^{-1/2}x]\vert$. 
	Consider the second term in \eqref{eq: psi2 bound}. Since $A_i$ is sub-Gaussian with $\sup_{x\in S^{p-1}}\frac{\sqrt{\E[(A_i'x)^2]}}{\sqrt{2}}\leq K$, 
	\begin{equation}\label{eq: AA eval}
	\lambda_{\max}(\E[A_iA_i'])=\sup_{x\in S^{p-1}}x'\E[A_iA_i']x=\sup_{x\in S^{p-1}}\E[(A_i'x)^2]\leq 
	(\sqrt{2}K)^2. 
	\end{equation}
	Then, 
	\begin{align}
		\sup_{x\in S^{p-1}}\E[\vert A_i'\bar{\Sigma}_n^{-1/2}x\vert]
		&\leq \sup_{x\in S^{p-1}}\left(\E[\vert A_i'\bar{\Sigma}_n^{-1/2}x\vert^2]\right)^{1/2}\notag\\
		&\leq \sup_{x\in S^{p-1}}\left(\lambda_{\max}(\E[A_iA_i'])\lambda_{\max}(\bar{\Sigma}_n^{-1})\Vert x\Vert^2\right)^{1/2}\notag\\
		&=\sup_{x\in S^{p-1}}\left(\lambda_{\max}(\E[A_iA_i'])/\lambda_{\min}(\bar{\Sigma}_n)\right)^{1/2}\notag\\
		&\leq \sqrt{\frac{2}{\lambda_l}}K.\label{eq: psi2 bound2}
	\end{align}
where the first inequality is by Jensen's inequality, the second inequality is the extremal property of the maximum eigenvalue and 
the eigenvalue product inequality (see \cite{Hansen(2022b)}, Appendix B), and the third is by $\lambda_{\min}(\bar{\Sigma}_n)>\lambda_l$ and 
\eqref{eq: AA eval}.  
	Finally, 
	\begin{align}
		\Vert \bar{\Sigma}_n^{-1/2}A_i\Vert_{\psi_2}
		&\leq  \sup_{x\in S^{p-1}}\sup_{m\geq 1}\,m^{-1/2}\left[\left(\E[\vert (A_i-\mu_i)'\bar{\Sigma}_n^{-1/2}x\vert^m]\right)^{1/m}
		+\E[\vert A_i'\bar{\Sigma}_n^{-1/2}x\vert]\right]\notag\\
		&\leq K_1+\sqrt{(2/\lambda_l)}K\sup_{m\geq 1}m^{-1/2}
		\leq C_1K+\sqrt{(2/\lambda_l)}K\equiv K_2,\label{eq: psi2 bound3}
	\end{align}
	where the first inequality follows from \eqref{eq: Sigma-1/2Ai norm} and \eqref{eq: psi2 bound}, and the second inequality is by 
	\eqref{eq: psi2 bound0} and \eqref{eq: psi2 bound2}.\par 
	\paragraph*{Step 2: Reduction to an average of sub-exponential random variables.}
	Given $K_2$ defined in \eqref{eq: psi2 bound3}, let 
	\begin{equation}
	{\epsilon}\equiv {8K_2^2}\max(\delta, \delta^2),\label{eq: def eps}
	\end{equation} 
	Below, we will show that with
	probability at least $1-2\exp(-t^2)$
	\begin{equation}\label{eq: sv bound}
		\Vert n^{-1}\bar{\Sigma}_n^{-1/2}A'A\bar{\Sigma}_n^{-1/2}-I_p \Vert_2 \leq \max(\delta, \delta^2)
		= \frac{\epsilon}{8K_2^2},
	\end{equation}
Let $\mathcal{N}$ denote the $1/4$-net of $S^{p-1}$.  
	Since $n^{-1}\bar{\Sigma}_n^{-1/2}A'A\bar{\Sigma}_n^{-1/2}-I_p=n^{-1}\sum_{i=1}^n\bar{\Sigma}_n^{-1/2}(A_iA_i'-\E[A_iA_i'])\bar{\Sigma}_n^{-1/2}$,  
	by Lemma 5.4 of \cite{Vershynin(2010)} 
\begin{align}
\Vert n^{-1}\bar{\Sigma}_n^{-1/2}A'A\bar{\Sigma}_n^{-1/2}-I_p \Vert_2
&\leq 2\max_{x\in\mathcal{N}}|n^{-1}\sum_{i=1}^nx'\bar{\Sigma}_n^{-1/2}(A_iA_i'-\E[A_iA_i'])\bar{\Sigma}_n^{-1/2}x|\notag\\
		&=2\max_{x\in\mathcal{N}}\vert n^{-1}\sum_{i=1}^n(Z_i^2-\E[Z_i^2])\vert,\label{eq: ineq1}
	\end{align}
	where $Z_i\equiv x'\bar{\Sigma}_n^{-1/2}A_i$. To show \eqref{eq: sv bound}, for $\epsilon>0$ defined in \eqref{eq: sv bound}
we will upper bound the probability 
	$$P\left[\max_{x\in\mathcal{N}}\vert n^{-1}\sum_{i=1}^n(Z_i^2-\E[Z_i^2])\vert\geq \epsilon/2\right].$$
	\paragraph*{Step 3: Concentration.}
	Fix $x \in S^{n-1}$.  It is clear that $\{Z_i^2 - \E[Z_i^2]\}_{i=1}^n$ are centered and independent.
	In addition, by Remark 5.18 and Lemma 5.14 of \cite{Vershynin(2010)}, $\{Z_i^2 - \E[Z_i^2]\}_{i=1}^n$ are
	 sub-exponential random variables with 
	$\|Z_i^2 - \E[Z_i^2]\|_{\psi_1} \le 2\|Z_i^2\|_{\psi_1} \le 4 \Vert Z_i\Vert_{\psi_2}^2 \le 4 K_2^2$, 
	where the last inequality is due to \eqref{eq: psi2 bound3}.
	By Bernstein's inequality (Corollary 5.17 of \cite{Vershynin(2010)}, Corollary 2.8.3 of \cite{Vershynin(2018)}),  for an absolute constant $c_1>0$
	\begin{align}
		P \left[\left| {n}^{-1}\sum_{i=1}^n (Z_i^2 - \E[Z_i^2]) \right| \ge \frac{\epsilon}{2} \right]
		&\le 2 \exp \left[-{c_1} \min\left(\frac{\epsilon^2}{{64K_2^4}}, \frac{\epsilon}{8K_2^2}\right) n \right]\notag\\
		&= 2 \exp (-c_1\delta^2 n)\notag\\
		&= 2 \exp \left[-c_1c^2(\sqrt{p}+t)^2\right]\notag\\
		&\leq 2 \exp(-{c_1}c^2(p + t^2))\label{eq: B ineq}
	\end{align}
	where the first equality holds by the definition of $\epsilon$ in \eqref{eq: HD LLN}, the second equality is 
	by the definition of $\delta$, 
	and the last inequality is due to the fact that $(a+b)^2 \ge a^2 + b^2$ for $a,b \ge 0$.
	
	\paragraph*{Step 4: Union bound.}
	By Corollary 4.2.13 of \cite{Vershynin(2018)}, there exists a 1/4-net $\mathcal{N}$ of $S^{p-1}$ with cardinality
	$|\mathcal{N}|\leq 9p$. Taking the union bound and using \eqref{eq: B ineq} give
	\begin{equation}
		P \left[ \max_{x \in \mathcal{N}}\left\vert {n}^{-1}\sum_{i=1}^n (Z_i^2 - \E[Z_i^2])\right\vert \ge \frac{\epsilon}{2} \right]
		\leq 9^p\, 2 \exp\left[-c_1c^2(p + t^2)\right]
		\leq 2 \exp(-{t^2}),\label{eq: pineq1}
	\end{equation}
	where the second inequality is by the choice $c=\sqrt{\frac{\log 9}{c_1}}$. 
Next we note that
	\begin{align*}
		&P[\Vert n^{-1}A'A-\bar{\Sigma}_n\Vert_2<8K_2^2\max(\delta, \delta^2) \Vert \bar{\Sigma}_n\Vert_2]\\
		&=P[\Vert n^{-1}A'A-\bar{\Sigma}_n\Vert_2<\epsilon \Vert \bar{\Sigma}_n\Vert_2]\\
	   &=P[\Vert \bar{\Sigma}_n^{1/2}(n^{-1}\bar{\Sigma}_n^{-1/2}A'A\bar{\Sigma}_n^{-1/2}-I_p)\bar{\Sigma}_n^{1/2}\Vert_2<\epsilon \Vert \bar{\Sigma}_n\Vert_2]\\
		&\geq P[\Vert \bar{\Sigma}_n^{1/2}\Vert_2\Vert n^{-1}\bar{\Sigma}_n^{-1/2}A'A\bar{\Sigma}_n^{-1/2}-I_p\Vert_2\Vert \bar{\Sigma}_n^{1/2}\Vert_2<\epsilon\Vert \bar{\Sigma}_n\Vert_2 ]\\
		&=  P\left[\Vert n^{-1} \bar{\Sigma}_n^{-1/2}A'A\bar{\Sigma}_n^{-1/2}-I_p\Vert_2<\epsilon\right]\\
		&\geq 	P \left[ \max_{x \in \mathcal{N}}\left\vert {n}^{-1}\sum_{i=1}^n (Z_i^2 - \E[Z_i^2])\right\vert < \frac{\epsilon}{2} \right]\\
		&\geq 1-2 \exp(-t^2),
	\end{align*}	
where the first two equalities hold trivially, the first inequality is by the Cauchy-Schwarz inequality, the second equality holds by the definition of the spectral norm and $\bar{\Sigma}_n$ is symmetric, the second inequality holds by \eqref{eq: ineq1} and the last is by \eqref{eq: ineq1}. This completes the proof.

\subsection{Lemma \ref{lem: HI rate}}\label{subsec: proof HI rate}
By the triangle inequality for spectral norm, 
\begin{equation}
	\Vert\hat{H}(\hat{\theta})-H(\theta_0)\Vert_2
	\leq \Vert\hat{H}(\hat{\theta})-\hat{H}(\theta_0)\Vert_2
	+\Vert\hat{H}({\theta}_0)-{H}(\theta_0)\Vert_2. \label{eq: Hcon0}
\end{equation}
Let $A_i=x_i\sqrt{w_i\ddot{g}(y_i,x_i'\theta_0)}$
and 
$t=s\sqrt{p}$ in Lemma \ref{lem: covm}. 
Since $x_i$ is sub-Gaussian and $\sqrt{w_i\ddot{g}(y_i,x_i'\theta_0)}\leq C_u$ a.s. by 
Assumption \ref{A: AsyValid}\ref{AsyValid max} and \ref{AsyValid Lip} (the condition \eqref{A: dg2}), 
using Assumption \ref{A: AsyValid}\ref{AsyValid max} once again 
\begin{equation}
	\Vert A_i\Vert_{\psi_2}=\sup_{\Vert b\Vert=1}\sup_{m\geq 1}m^{-1/2}
	\left(\E\left[\left|x_i'b \sqrt{w_i\ddot{g}(y_i,x_i'\theta_0)}\right|^m\right]\right)^{1/m}\leq C_u^2.
\end{equation}
Then, 
$n^{-1}A'A=n^{-1}\sum_{i=1}^nA_iA_i'=\hat{H}(\theta_0)$, $\bar{\Sigma}_n=H(\theta_0)$ and  
$\delta=c\left(\sqrt{\frac{p}{n}} + \frac{t}{\sqrt{n}}\right)=(s+1)c\sqrt{\frac{p}{n}}$, and Lemma \ref{lem: covm} gives 
\begin{align*}
	P\left[\Vert n^{-1}A'A-\bar{\Sigma}_n\Vert_2<8K_2^2(s+1)c \sqrt{\frac{p}{n}} \lambda_u\right]
	&\geq 1-2 \exp(-s^2p).
\end{align*}
Therefore, $\Vert n^{-1}A'A-\bar{\Sigma}_n\Vert_2=O_p\left(\sqrt{\frac{p}{n}}\right)$ or equivalently 
\begin{equation}\label{eq: Hcon1}
	\Vert\hat{H}({\theta}_0)-H(\theta_0)\Vert_2=O_p\left(\sqrt{\frac{p}{n}}\right).
\end{equation}
By Lemma \ref{lem: covm} and Assumption 
\ref{A: AsyValid}\ref{AsyValid eval}, $\Vert n^{-1}X'X-\E[n^{-1}X'X]\Vert_2
\conp 0$ and 
\begin{equation}
	\lambda_{\max}(n^{-1}X'X)-\lambda_{\max}(\E[n^{-1}X'X])\leq \Vert n^{-1}X'X-\E[n^{-1}X'X]\Vert_2
	\conp 0.
\end{equation}
Hence 
\begin{equation}\label{eq: lambda max Op1}
	\lambda_{\max}(n^{-1}X'X)=O_p(1),
\end{equation}
Furthermore, letting $W(\theta)\equiv -\mathrm{diag}(w_1\ddot{g}(y_1, x_1'\theta),\dots, w_n\ddot{g}(y_n, x_n'\theta))$
\begin{align}
	\Vert \hat{H}(\hat{\theta})-\hat{H}(\theta_0)\Vert_2
	&=\Vert n^{-1}X'(W(\hat{\theta})-W(\theta_0))X\Vert_2\notag\\
	&\leq n^{-1}\Vert X'\Vert_2\Vert X\Vert_2 \Vert W(\hat{\theta})-W(\theta_0)\Vert_2\notag\\
	&= \lambda_{\max}(n^{-1}X'X)\Vert W(\hat{\theta})-W(\theta_0)\Vert_2\notag\\
	&\leq \lambda_{\max}(n^{-1}X'X)\max_{i}\vert w_i\vert \vert\ddot{g}(y_i, x_i'\hat{\theta})-\ddot{g}(y_i, x_i'\theta_0)\vert\notag\\
	&\leq \lambda_{\max}(n^{-1}X'X)C_uL_0\max_{i}\vert x_i(\hat{\theta}-\theta_0)\vert\notag\\
	&\leq \lambda_{\max}(n^{-1}X'X)C_u^2L_0\Vert\hat{\theta}-\theta_0)\Vert_1\notag\\
	&=O_p(m_0\lambda),\label{eq: Hcon2}
\end{align}
where the last equality uses Lemma \ref{lem: consistency}.
Thus, combining \eqref{eq: Hcon1} and \eqref{eq: Hcon2} with \eqref{eq: Hcon0} gives 
\begin{equation}\label{eq: Hcon3}
	\Vert\hat{H}(\hat{\theta})-H(\theta_0)\Vert_2=O_p\left(\sqrt{\frac{p}{n}}+m_0\lambda\right).
\end{equation}
To show \eqref{eq: Hhat inverse consistency}, note first that by Cauchy-Schwarz inequality for spectral norm \citep{Hansen(2022b)} 
\begin{align}
	\Vert\hat{H}(\hat{\theta})^{-1}-H(\theta_0)^{-1}\Vert_2
	&=\Vert \hat{H}(\hat{\theta})^{-1}(\hat{H}(\hat{\theta})-H(\theta_0))H(\theta_0)^{-1}\Vert_2\notag\\
	&\leq \Vert \hat{H}(\hat{\theta})^{-1}\Vert_2 \Vert\hat{H}(\hat{\theta})-H(\theta_0)\Vert_2\Vert H(\theta_0)^{-1}\Vert_2. \label{eq: 3 terms}
\end{align}
For the third term on the right-hand side of \eqref{eq: 3 terms}, by Assumption \ref{A: AsyValid}\ref{AsyValid eval}
\begin{equation}\label{eq: 3terms1}
	\Vert H(\theta_0)^{-1}\Vert_2=1/\lambda_{\max}(H(\theta_0))=O(1).
\end{equation}
Finally consider the third factor in \eqref{eq: 3 terms}. 
By Weyl's inequality (see \cite{Eaton-Tyler(1991)}, Lemma 2.1), 
$\lambda_{\min}(\hat{H}(\hat{\theta})-H(\theta_0))\leq \lambda_{\min}(\hat{H}(\hat{\theta}))-\lambda_{\min}(H(\theta_0))\leq \lambda_{\max}(\hat{H}(\hat{\theta})-H(\theta_0))$. Combining this with the fact that $$\Vert\hat{H}(\hat{\theta})-H(\theta_0)\Vert_2=\max\{-\lambda_{\min}(\hat{H}(\hat{\theta})-H(\theta_0)), \lambda_{\max}(\hat{H}(\hat{\theta})-H(\theta_0))\},$$ we obtain  
\begin{equation}\label{eq: eval diff}
	|\lambda_{\min}(\hat{H}(\hat{\theta}))-\lambda_{\min}(H(\theta_0))|\leq \Vert\hat{H}(\hat{\theta})-H(\theta_0)\Vert_2.
\end{equation}
Fix $0<\epsilon<\lambda_l$.
Since $\Vert (\hat{H}(\hat{\theta}))^{-1}\Vert_2=1/\lambda_{\min}(\hat{H}(\hat{\theta}))$, 
using \eqref{eq: eval diff}
\begin{align*}
	P\left[\Vert \hat{H}(\hat{\theta})^{-1}\Vert_2 \geq \frac{1}{\lambda_{\min}(H(\theta_0))-\epsilon}\right]
	&=P\left[\frac{1}{\lambda_{\min}(\hat{H}(\hat{\theta}))} \geq \frac{1}{\lambda_{\min}(H(\theta_0))-\epsilon}\right]\\
	&=P\left[\lambda_{\min}(H(\theta_0))-\lambda_{\min}(\hat{H}(\hat{\theta}))\geq \epsilon\right]\\
	&\leq P\left[|\lambda_{\min}(H(\theta_0))-\lambda_{\min}(\hat{H}(\hat{\theta}))|\geq \epsilon\right]\\
	&\leq P\left[\Vert H(\theta_0)-\hat{H}(\hat{\theta})\Vert_2\geq \epsilon\right]\\
	&\to 0,
\end{align*}
where the last line follows from \eqref{eq: Hcon3}. 
Thus, 
\begin{equation}\label{eq: hess Op1}
	\Vert \hat{H}(\hat{\theta})^{-1}\Vert_2=O_p(1).
\end{equation}
Combining \eqref{eq: Hcon3}, \eqref{eq: 3terms1} and \eqref{eq: hess Op1} in \eqref{eq: 3 terms}, we obtain \eqref{eq: Hhat inverse consistency}. 
The convergence results in \eqref{eq: Ihat consistency} 
and \eqref{eq: Ihat inverse consistency} follow similarly by setting 
$A_i=x_iw_i\dot{g}(y_i,x_i'\theta_0)$ in Lemma \ref{lem: covm} and repeating the argument above. 

\subsection{Proposition \ref{prop: DB}}\label{subsec: proof DB}
By the mean value expansion, 
\begin{equation}\label{eq: mve}
S(\theta_0)=S(\hat{\theta})+\hat{H}({\theta}^{*})(\hat{\theta}-\theta_0)=S(\hat{\theta})+\hat{H}(\hat{\theta})(\hat{\theta}-\theta_0)+R,
\end{equation}
where ${\theta}^{*}$ is the mean-value between $\hat{\theta}$ and $\theta_0$, and $R=[R_1,\dots, R_{p+1}]'$ with 
\begin{equation}
	R_j\equiv n^{-1}\sum_{i=1}^n(\ddot{g}(y_i, x_i'\theta^{*})-\ddot{g}(y_i, x_i'\hat{\theta}))
	w_ix_{ij}x_i'(\theta_0-\hat{\theta}).
\end{equation}
Note that since $\dot{\rho}(\theta)$ is locally Lipschitz in a neighborhood of $\theta_0$, 
with probability approaching 1
$\Vert\dot{\rho}(\bar{\theta})-\dot{\rho}(\hat{\theta})\Vert \leq B_0\Vert \bar{\theta}-\hat{\theta}\Vert$ for some $B_0=O(1)$. 
Also, since 
\begin{equation}\label{eq: DB rho mve}
	{n}^{1/2}(\rho(\hat{\theta})-\rho({\theta}_0))=	
	\dot{\rho}(\bar{\theta})'{n}^{1/2}(\hat{\theta}-\theta_0),
\end{equation} 
where $\bar{\theta}$ is a mean-value between $\hat{\theta}$ 
and $\theta_0$, we have 
\begin{align}
n^{1/2}\Vert \dot{\rho}({\hat{\theta}})-\dot{\rho}(\bar{\theta})\Vert \Vert\hat{\theta}-\theta_0\Vert
&=n^{1/2}B_0\Vert\hat{\theta}-\bar{\theta}\Vert \Vert\hat{\theta}-\theta_0\Vert=O_p(n^{1/2}m_0\lambda^2)\notag\\
&=o_p(1),\label{eq: DB rho2}
\end{align} 
where the last line is by $n^{1/2}m_0\lambda^2= n^{-1/2}m_0C^2\log p\leq C^2m_0 (p/n)^{1/2}\log p=o(1)$. Then, 
\begin{align*}
&n^{1/2}(\tilde{\rho}-\rho(\theta_0))\\
&=
n^{1/2}(\rho(\hat{\theta})-\rho(\theta_0))
+\dot{\rho}(\hat{\theta})'\hat{H}(\hat{\theta})^{-1}n^{1/2}S(\hat{\theta})\\
&=
n^{1/2}\dot{\rho}(\bar{\theta})'(\hat{\theta}-\theta_0)
+n^{1/2}\dot{\rho}(\hat{\theta})'\hat{H}(\hat{\theta})^{-1}S(\theta_0)
-n^{1/2}\dot{\rho}(\hat{\theta})'(\hat{\theta}-\theta_0)
-n^{1/2}\dot{\rho}(\hat{\theta})'\hat{H}(\hat{\theta})^{-1}R\\
&=n^{1/2}\dot{\rho}(\hat{\theta})'\hat{H}(\hat{\theta})^{-1}S(\theta_0)
-n^{1/2}\dot{\rho}(\hat{\theta})'\hat{H}(\hat{\theta})^{-1}R+o_p(1),
\end{align*}
where the first equality is by the definition of $\tilde{\rho}$, the second equality is by \eqref{eq: mve} and \eqref{eq: DB rho mve},
and the third is by \eqref{eq: DB rho2}. Below, the proof will be completed in three steps: the first two steps establish
\begin{align}
&\dot{\rho}(\hat{\theta})'\hat{H}(\hat{\theta})^{-1}\hat{I}(\hat{\theta})\hat{H}(\hat{\theta})^{-1}
\dot{\rho}(\hat{\theta})-\dot{\rho}({\theta}_0)'{H}({\theta}_0)^{-1}{I}({\theta}_0){H}({\theta}_0)^{-1}
\dot{\rho}({\theta}_0)=o_p(1),\notag\\
&n^{1/2}\dot{\rho}(\hat{\theta})'\hat{H}(\hat{\theta})^{-1}S(\theta_0)-n^{1/2}\dot{\rho}({\theta}_0)'{H}({\theta}_0)^{-1}S(\theta_0)=o_p(1),
\end{align}
and the third step verifies $n^{1/2}\dot{\rho}(\hat{\theta})'\hat{H}(\hat{\theta})^{-1}R=o_p(1)$. 
It will then follow that 
\begin{align*}
&\left[\dot{\rho}(\hat{\theta})'\hat{H}(\hat{\theta})^{-1}\hat{I}(\hat{\theta})\hat{H}(\hat{\theta})^{-1}
\dot{\rho}(\hat{\theta})\right]^{-1/2}n^{1/2}(\tilde{\rho}-\rho(\theta_0))\\
&=\left[\dot{\rho}(\hat{\theta})'\hat{H}(\hat{\theta})^{-1}\hat{I}(\hat{\theta})\hat{H}(\hat{\theta})^{-1}\dot{\rho}(\hat{\theta})\right]^{-1/2}
\left[n^{1/2}\dot{\rho}(\hat{\theta})'\hat{H}(\hat{\theta})^{-1}S(\theta_0)-n^{1/2}\dot{\rho}(\hat{\theta})'\hat{H}(\hat{\theta})^{-1}R+o_p(1)\right]
\\
&=\left[\dot{\rho}(\theta_0)'{H}({\theta}_0)^{-1}{I}({\theta}_0){H}({\theta}_0)^{-1}\dot{\rho}(\theta_0)\right]^{-1/2}n^{1/2}\dot{\rho}(\theta_0)'{H}({\theta}_0)^{-1}S(\theta_0)
+o_p(1).
\end{align*}
Finally, applying Lemma \ref{lem: AD1} and Slutsky's lemma give the desired result. 
\paragraph*{Step 1: $\dot{\rho}(\hat{\theta})'\hat{H}(\hat{\theta})^{-1}\hat{I}(\hat{\theta})\hat{H}(\hat{\theta})^{-1}
	\dot{\rho}(\hat{\theta})-\dot{\rho}({\theta}_0)'{H}({\theta}_0)^{-1}{I}({\theta}_0){H}({\theta}_0)^{-1}
	\dot{\rho}({\theta}_0)=o_p(1)$.}~\\
First, by the triangle inequality
\begin{align}
&\Vert \dot{\rho}(\hat{\theta})'\hat{H}(\hat{\theta})^{-1}\hat{I}(\hat{\theta})\hat{H}(\hat{\theta})^{-1}
\dot{\rho}(\hat{\theta})-\dot{\rho}({\theta}_0)'{H}({\theta}_0)^{-1}{I}({\theta}_0){H}({\theta}_0)^{-1}
\dot{\rho}({\theta}_0)\Vert_2\notag\\
&\leq \Vert \dot{\rho}(\hat{\theta})'\left[\hat{H}(\hat{\theta})^{-1}\hat{I}(\hat{\theta})\hat{H}(\hat{\theta})^{-1}
-{H}({\theta}_0)^{-1}{I}({\theta}_0){H}({\theta}_0)^{-1}\right]\dot{\rho}(\hat{\theta})\Vert_2\notag\\
&\quad+\Vert \dot{\rho}(\hat{\theta})'{H}({\theta}_0)^{-1}{I}({\theta}_0){H}({\theta}_0)^{-1}(\dot{\rho}(\hat{\theta})-\dot{\rho}(\theta_0))\Vert_2\notag\\
&\quad+\Vert (\dot{\rho}(\hat{\theta})-\dot{\rho}(\theta_0))'{H}({\theta}_0)^{-1}{I}({\theta}_0){H}({\theta}_0)^{-1}\dot{\rho}(\theta_0)\Vert_2.\label{eq: db tr ineq}
\end{align}
Consider the first term on the right-hand side of \eqref{eq: db tr ineq}. By Cauchy-Schwarz inequality, 
\begin{align}
&\Vert \dot{\rho}(\hat{\theta})'\left[\hat{H}(\hat{\theta})^{-1}\hat{I}(\hat{\theta})\hat{H}(\hat{\theta})^{-1}
	-{H}({\theta}_0)^{-1}{I}({\theta}_0){H}({\theta}_0)^{-1}\right]\dot{\rho}(\hat{\theta})\Vert_2\notag\\
&\leq \Vert\hat{H}(\hat{\theta})^{-1}\hat{I}(\hat{\theta})\hat{H}(\hat{\theta})^{-1}
-{H}({\theta}_0)^{-1}{I}({\theta}_0){H}({\theta}_0)^{-1}\Vert_2 \Vert\dot{\rho}(\hat{\theta})\Vert_2^2,\label{eq: HIH negl}
\end{align}
After rearranging and using the triangle and Cauchy-Schwarz inequalities
\begin{align}
&\Vert \hat{H}(\hat{\theta})^{-1}\hat{I}(\hat{\theta})\hat{H}(\hat{\theta})^{-1}
-{H}({\theta}_0)^{-1}{I}({\theta}_0){H}({\theta}_0)^{-1}\Vert_2\notag\\
&=\Vert(\hat{H}(\hat{\theta})^{-1}-{H}({\theta}_0)^{-1})\hat{I}(\hat{\theta})\hat{H}(\hat{\theta})^{-1}
+{H}({\theta}_0)^{-1}(\hat{I}(\hat{\theta})\hat{H}(\hat{\theta})^{-1}-{I}({\theta}_0){H}({\theta}_0)^{-1})\Vert_2,\notag\\
&\leq \Vert\hat{H}(\hat{\theta})^{-1}-{H}({\theta}_0)^{-1}\Vert_2\Vert\hat{I}(\hat{\theta})\Vert_2\Vert\hat{H}(\hat{\theta})^{-1}\Vert_2
+\Vert{H}({\theta}_0)^{-1}\Vert_2\Vert \hat{I}(\hat{\theta})\hat{H}(\hat{\theta})^{-1}-{I}({\theta}_0){H}({\theta}_0)^{-1}\Vert_2.\label{eq: HI diff1}
\end{align}
For the first summand of \eqref{eq: HI diff1}, by Lemma \ref{lem: HI rate}
\begin{equation}
	\Vert\hat{H}(\hat{\theta})^{-1}-{H}({\theta}_0)^{-1}\Vert_2\Vert\hat{I}(\hat{\theta})\Vert_2\Vert\hat{H}(\hat{\theta})^{-1}\Vert_2=o_p(1).
\end{equation}
For the second factor in the second summand of \eqref{eq: HI diff1}, using the triangle and Cauchy-Schwarz inequalities 
\begin{align}
&\Vert\hat{I}(\hat{\theta})\hat{H}(\hat{\theta})^{-1}-I(\theta_0){H}({\theta}_0)^{-1}\Vert_2\notag\\
&=\Vert (\hat{I}(\hat{\theta})-{I}({\theta}_0))(\hat{H}(\hat{\theta})^{-1}-{H}({\theta}_0)^{-1})
+(\hat{I}(\theta_0)-I(\theta_0)){H}({\theta}_0)^{-1}
+I(\theta_0)(\hat{H}(\theta_0)^{-1}-H(\theta_0)^{-1})\Vert_2\notag\\
&\leq \Vert\hat{I}(\hat{\theta})-{I}({\theta}_0)\Vert_2\Vert\hat{H}(\hat{\theta})^{-1}-{H}({\theta}_0)^{-1}\Vert_2
+\Vert\hat{I}(\theta_0)-I(\theta_0)\Vert_2\Vert{H}({\theta}_0)^{-1}\Vert_2\notag\\
&\quad+\Vert I(\theta_0)\Vert_2\Vert\hat{H}(\theta_0)^{-1}-H(\theta_0)^{-1}\Vert_2\notag\\
&\conp 0,
\end{align}
where the last line is by Lemma \ref{lem: HI rate} and the CMT. 
From Lemma \ref{lem: consistency}, 
$\Vert\hat{\theta}-\theta_0\Vert=O_p(m_0^{1/2}\lambda)=O_p\left(\left(\frac{m_0\log p}{n}\right)^{1/2}\right)=o_p(1)$.
Since $\dot{\rho}(\theta)$ is locally Lipschitz in a neighborhood of $\theta_0$, 
 with probability approaching 1, we have for $B_0=O(1)$ 
$\Vert\dot{\rho}(\hat{\theta})-\dot{\rho}({\theta}_0)\Vert \leq B_0\Vert \hat{\theta}-{\theta}_0\Vert$.
Thus, 
\begin{equation}\label{eq: rhohat order}
\Vert\dot{\rho}(\hat{\theta})-\dot{\rho}({\theta}_0)\Vert_2\leq {r}^{1/2}\Vert\dot{\rho}(\hat{\theta})-\dot{\rho}({\theta}_0)\Vert=O_p\left(\left(\frac{m_0\log p}{n}\right)^{1/2}\right).
\end{equation} 
By the triangle inequality and \eqref{eq: rhohat order}
\begin{equation}\label{eq: rhohat order2}
\Vert\dot{\rho}(\hat{\theta})\Vert_2\leq \Vert\dot{\rho}(\hat{\theta})-\dot{\rho}({\theta}_0)\Vert_2
+\Vert \dot{\rho}({\theta}_0)\Vert_2=O_p(1).
\end{equation}
Therefore, the quantity in \eqref{eq: HIH negl} is $o_p(1)$.
Consider the second term on the right-hand side of \eqref{eq: db tr ineq}. By the triangle inequality and \eqref{eq: rhohat order}, 
\begin{align*}
&\Vert \dot{\rho}(\hat{\theta})'{H}({\theta}_0)^{-1}{I}({\theta}_0){H}({\theta}_0)^{-1}(\dot{\rho}(\hat{\theta})-\dot{\rho}(\theta_0))\Vert_2\\
&\leq \Vert \dot{\rho}(\hat{\theta})\Vert_2\Vert{H}({\theta}_0)^{-1}{I}({\theta}_0){H}({\theta}_0)^{-1}\Vert_2\Vert \dot{\rho}(\hat{\theta})-\dot{\rho}(\theta_0)\Vert_2\\
&\conp 0.
\end{align*}
Similarly, for the third term on the right-hand side of \eqref{eq: db tr ineq}
\begin{align*}
&\Vert (\dot{\rho}(\hat{\theta})-\dot{\rho}(\theta_0))'{H}({\theta}_0)^{-1}{I}({\theta}_0){H}({\theta}_0)^{-1}\dot{\rho}(\theta_0)\Vert_2\\
&\leq \Vert \dot{\rho}(\hat{\theta})-\dot{\rho}(\theta_0)\Vert_2\Vert{H}({\theta}_0)^{-1}{I}({\theta}_0){H}({\theta}_0)^{-1}\Vert_2\Vert\dot{\rho}(\theta_0)\Vert_2\\
&\conp 0.
\end{align*}
\paragraph*{Step 2: $n^{1/2}\dot{\rho}(\hat{\theta})'\hat{H}(\hat{\theta})^{-1}S(\theta_0)-n^{1/2}\dot{\rho}(\theta_0)'{H}({\theta}_0)^{-1}S(\theta_0)=o_p(1)$.}~\\
Remark that from Assumption \ref{A: AsyValid}, $|\dot{g}(y_i, x_i' \theta_0)| \leq C_u$, $|w_i|\leq C_u$ and $\Vert x_i\Vert^2\leq (p+1)C_u^2$ a.s. for all $i$. 
Using the independence assumption, 
\begin{align*}
\E[\Vert S(\theta_0)\Vert_2^2]=\E[\Vert S(\theta_0)\Vert^2]=
n^{-2}\E\left[\sum_{i=1}^nw_i^2\Vert x_i\Vert^2 \dot{g}(y_i, x_i' \theta_0)^2\right]\leq n^{-1}(p+1)C_u^6.
\end{align*}
By Markov's inequality,
\begin{equation}\label{eq: score order}
\Vert S(\theta_0)\Vert_2=O_p\left(\sqrt{\frac{p}{n}}\right).
\end{equation}
Now rewrite 
\begin{align}
	&n^{1/2}\dot{\rho}(\hat{\theta})'\hat{H}(\hat{\theta})^{-1}S(\theta_0)-n^{1/2}\dot{\rho}(\theta_0)'{H}({\theta}_0)^{-1}S(\theta_0)\notag\\
	&=n^{1/2}(\dot{\rho}(\hat{\theta})-\dot{\rho}(\theta_0))'\hat{H}(\hat{\theta})^{-1}S(\theta_0)+ n^{1/2}\left(\dot{\rho}(\theta_0)'\hat{H}(\hat{\theta})^{-1}S(\theta_0)-\dot{\rho}(\theta_0)'{H}({\theta}_0)^{-1}S(\theta_0)\right).
	\label{eq: rem}
\end{align}
For the first term of \eqref{eq: rem}, 
\begin{align}
\Vert n^{1/2}(\dot{\rho}(\hat{\theta})-\dot{\rho}(\theta_0))'\hat{H}(\hat{\theta})^{-1}S(\theta_0)\Vert_2
&\leq n^{1/2}\Vert \dot{\rho}(\hat{\theta})-\dot{\rho}(\theta_0)\Vert_2\Vert\hat{H}(\hat{\theta})^{-1}\Vert_2\Vert S(\theta_0)\Vert_2\notag\\
&=n^{1/2}O_p\left(\sqrt{\frac{m_0\log p}{n}}\right)O_p(1)\,O_p\left(\sqrt{\frac{p}{n}}\right)\notag\\
&=O_p\left(\sqrt{\frac{p\,m_0\log p}{n}}\right)\notag\\
&=o_p(1),\label{eq: db step3b}
\end{align}
where the first inequality is by Cauchy-Schwarz, the first equality uses \eqref{eq: hess Op1}, \eqref{eq: rhohat order} and \eqref{eq: score order}, and 
the last equality holds because $m_0(\log p)p/n\leq m_0(\log p) (p/n)^{1/2}(p^2/n)^{1/2}\to 0$  by the assumption of the proposition. For the second term of \eqref{eq: rem}, we have 
\begin{align}
n^{1/2}\Vert\dot{\rho}(\theta_0)'\hat{H}(\hat{\theta})^{-1}S(\theta_0)-\dot{\rho}(\theta_0)'{H}({\theta}_0)^{-1}S(\theta_0)\Vert_2
&\leq 
{n}^{1/2} \Vert \dot{\rho}(\theta_0) \Vert_2 \Vert \hat{H}(\hat{\theta})^{-1}-{H}({\theta}_0)^{-1}\Vert_2\Vert S(\theta_0)\Vert_2\notag\\
&=n^{1/2}
O_p\left(\sqrt{\frac{p}{n}} + m_0 \lambda \right)O_p\left(\sqrt{\frac{p}{n}}\right)\notag\\
&=
O_p\left(\sqrt{\frac{p^2}{n}} + \sqrt{p}\,m_0 \lambda \right)\notag\\
&=o_p(1),\label{eq: db step3c}
\end{align}
where the first inequality is by Cauchy-Schwarz, the first equality is by Lemma \ref{lem: HI rate} and  \eqref{eq: score order}, and the last equality holds because $p^2/n\to 0$ and 
$p^{1/2}\,m_0 \lambda=Cm_0 (p/n)^{1/2}(\log p)^{1/2}\leq Cm_0 (p/n)^{1/2} 2\log p\to 0$ by the assumption of the proposition. It follows from \eqref{eq: rem}, \eqref{eq: db step3b} and 
\eqref{eq: db step3c} that 
$$n^{1/2}\dot{\rho}(\hat{\theta})'\hat{H}(\hat{\theta})^{-1}S(\theta_0)-n^{1/2}\dot{\rho}(\theta_0)'{H}({\theta}_0)^{-1}S(\theta_0)=o_p(1).$$
\paragraph*{Step 3: $n^{1/2}\dot{\rho}(\hat{\theta})'\hat{H}(\hat{\theta})^{-1}R=o_p(1)$.}~\\
By Cauchy-Schwarz, $n^{1/2}\Vert\dot{\rho}(\hat{\theta})'\hat{H}(\hat{\theta})^{-1}R \Vert_2\leq {n}^{1/2} \Vert\dot{\rho}(\hat{\theta})\Vert_2\Vert \hat{H}(\hat{\theta})^{-1}R\Vert_2$. Remark from \eqref{eq: rhohat order2} that $\Vert\dot{\rho}(\hat{\theta})\Vert_2=O_p(1)$. To show ${n}^{1/2}\Vert \hat{H}(\hat{\theta})^{-1}R\Vert_2 =o_p(1)$, note that 
\begin{align}
\max_{1\leq j\leq p+1}\vert R_j\vert  
&\leq
{n}^{-1}\sum_{i=1}^n  \vert \ddot{g}(y_i, x_i'\theta^*) - \ddot{g}(y_i, x_i'\hat{\theta})\vert |w_i| \max_{1\leq j\leq p+1}|x_{ij}| | x_i'(\theta_0 - \hat{\theta})|\notag\\
		& \leq {n}^{-1} \sum_{i=1}^n L_0| x_i(\theta^*-\hat{\theta})| C_u^2 | x_i'(\theta_0 - \hat{\theta})|\notag\\
		& \leq L_0  C_u^2{n}^{-1} \sum_{i=1}^n  | x_i'(\theta_0 - \hat{\theta})|^2\notag\\ 
		&=L_0C_u^2O_p(m_0 \lambda^2)\notag\\
		&=O_p(m_0\lambda^2),\label{eq: Rmax}
\end{align}
where the first inequality is by Assumption \ref{A: AsyValid}\ref{AsyValid Lip}, and the first equality uses Lemma \ref{lem: consistency}. 
Since $\Vert{H}(\theta_0)\Vert={O}(1)$ and $\Vert\hat{H}(\hat{\theta}) -{H}(\theta_0)\Vert=o_p(1)$, $\Vert\hat{H}(\hat{\theta})\Vert=O_p(1)$. Therefore, 

\begin{align}
{n}^{1/2}\Vert\hat{H}(\hat{\theta})^{-1}R\Vert_2 
& \leq {n}^{1/2}\Vert\hat{H}(\hat{\theta} )^{-1}\Vert_2\Vert R \Vert_2\notag\\
& \leq{n}^{1/2}\hat{H}(\hat{\theta})^{-1} (p+1)^{1/2} \Vert R\Vert_{\infty}\notag\\
&= O_p((n(p+1))^{1/2} m_0 \lambda^2)\notag\\
&=o_p(1),
\end{align}
where the first equality holds by using \eqref{eq: Rmax} and the second equality follows on noting that 
$(n(p+1))^{1/2}m_0\lambda^2 = (n(p+1))^{1/2}m_0C^2(\log p)/n\leq (2p/n)^{1/2}m_0C^2\log p=o(1)$.
\subsection{Proposition \ref{prop: Ca}}\label{subsec: Proof Calpha}
Similarly to \eqref{eq: mve}, by the mean value expansion 
\begin{equation}\label{eq: mve2}
	S(\theta_0)=S(\tilde{\theta}^{*})+\hat{H}({\theta}^{*})(\tilde{\theta}^{*}-\theta_0)=S(\tilde{\theta}^{*})+\hat{H}(\tilde{\theta}^{*})(\tilde{\theta}^{*}-\theta_0)+R^{*},
\end{equation}
where ${\theta}^{*}$ is a mean-value between $\tilde{\theta}^{*}$ and $\theta_0$, and $R^{*}=[R_1^{*},\dots, R_{p+1}^{*}]'$ with 
\begin{equation*}
R_j^{*}\equiv n^{-1}\sum_{i=1}^n(\ddot{g}(y_i, x_i'\theta^{*})-\ddot{g}(y_i, x_i'\tilde{\theta}^{*}))
	w_ix_{ij}x_i'(\theta_0-\tilde{\theta}^{*}).
\end{equation*}
Proceeding similarly to Steps 1, 2 and 3 in the proof of Proposition \ref{prop: DB}, we obtain 
\begin{align}
&\left(\dot{\rho}(\tilde{\theta}^{*})'\hat{H}(\tilde{\theta}^{*})^{-1}\hat{I}(\tilde{\theta}^{*})\hat{H}(\tilde{\theta}^{*})^{-1}\dot{\rho}(\tilde{\theta}^{*})\right)^{-1/2}-\left(\dot{\rho}({\theta_0})'{H}(\theta_0)^{-1}{I}(\theta_0){H}(\theta_0)^{-1}\dot{\rho}({\theta_0})\right)^{-1/2}
=o_p(1),\label{eq: Ca 1}\\
&n^{1/2}\dot{\rho}({\tilde{\theta}^{*}})'\hat{H}(\tilde{\theta}^{*})^{-1}S(\theta_0)
={n}^{1/2}\dot{\rho}({\theta_0})'H(\theta_0)^{-1}S(\theta_0)+o_{p}(1),\label{eq: Ca 2}\\
&{n}^{1/2}\dot{\rho}({\tilde{\theta}^{*}})'\hat{H}(\tilde{\theta}^{*})^{-1}R^{*}
=o_p(1).\label{eq: Ca 3}
\end{align}
By the assumption that $\rho(\tilde{\theta}^{*})=\rho(\theta_0)$ and the mean value expansion
\begin{equation}\label{eq: rho mve}
0={n}^{1/2}(\rho(\tilde{\theta}^{*})-\rho({\theta}_0))=	
\dot{\rho}(\bar{\theta})'{n}^{1/2}(\tilde{\theta}^{*}-\theta_0),
\end{equation} 
where $\bar{\theta}$ is a mean-value between $\tilde{\theta}^{*}$ 
and $\theta_0$. Next, we will show that $\dot{\rho}(\tilde{\theta}^{*})'{n}^{1/2}(\tilde{\theta}^{*}-\theta_0)=o_p(1)$. 
Since $\dot{\rho}(\theta)$ is locally Lipschitz in a neighborhood of $\theta_0$, 
with probability approaching 1
$\Vert\dot{\rho}(\bar{\theta})-\dot{\rho}(\tilde{\theta}^{*})\Vert \leq B_0\Vert \bar{\theta}-\tilde{\theta}^{*}\Vert$ for some $B_0=O(1)$. Thus, 
using \eqref{eq: rho mve} 
\begin{align*}
\Vert n^{1/2}\dot{\rho}({\tilde{\theta}^{*}})'(\tilde{\theta}^{*}-\theta_0)\Vert 
&=\Vert n^{1/2}(\dot{\rho}({\tilde{\theta}^{*}})-\dot{\rho}(\bar{\theta}))'(\tilde{\theta}^{*}-\theta_0)\Vert\notag\\
&\leq n^{1/2}\Vert \dot{\rho}({\tilde{\theta}^{*}})-\dot{\rho}(\bar{\theta})\Vert \Vert\tilde{\theta}^{*}-\theta_0\Vert\\
&=n^{1/2}B_0\Vert\tilde{\theta}^{*}-\bar{\theta}\Vert \Vert\tilde{\theta}^{*}-\theta_0\Vert\\
&=O_p(n^{1/2}m_0\lambda^2).
\end{align*} 
Since $n^{1/2}m_0\lambda^2= n^{-1/2}m_0C^2\log p\leq C^2m_0 (p/n)^{1/2}\log p=o(1)$,
\begin{equation}\label{eq: Ca 4}
n^{1/2}\dot{\rho}({\tilde{\theta}^{*}})'(\tilde{\theta}^{*}-\theta_0)=o_p(1).
\end{equation}
Using \eqref{eq: mve2}, \eqref{eq: Ca 2}, \eqref{eq: Ca 3} and \eqref{eq: Ca 4}, we have 
\begin{align}
{n}^{1/2}\dot{\rho}({\tilde{\theta}^{*}})'\hat{H}(\tilde{\theta}^{*})^{-1}{S}(\tilde{\theta}^{*})
&=	{n}^{1/2}\dot{\rho}({\tilde{\theta}^{*}})'\hat{H}(\tilde{\theta}^{*})^{-1}\left[{S}(\theta_0)-\hat{H}(\tilde{\theta}^{*})(\tilde{\theta}^{*}-\theta_0)-R^{*}\right]\notag\\
&=n^{1/2}\dot{\rho}({\tilde{\theta}^{*}})'\hat{H}(\tilde{\theta}^{*})^{-1}S(\theta_0)
-{n}^{1/2}\dot{\rho}({\tilde{\theta}^{*}})'(\tilde{\theta}^{*}
-\theta_0)-{n}^{1/2}\dot{\rho}({\tilde{\theta}^{*}})'\hat{H}(\tilde{\theta}^{*})^{-1}R^{*}\notag\\ 	&={n}^{1/2}\dot{\rho}({\theta_0})'H(\theta_0)^{-1}S(\theta_0)+o_{p}(1).\label{eq: Ca 5}
\end{align}
By Lemma \ref{lem: AD1}
\begin{equation}\label{eq: Ca AN}
\left(\dot{\rho}({\theta_0})'{H}(\theta_0)^{-1}{I}(\theta_0){H}(\theta_0)^{-1}\dot{\rho}({\theta_0})\right)^{-1/2}	{n}^{1/2}\dot{\rho}(\theta_0)'{H}({\theta}_0)^{-1}S(\theta_0)
\cond N(0, I_r).
\end{equation} 
Then, 
\begin{align}
&\left(\dot{\rho}(\tilde{\theta}^{*})'\hat{H}(\tilde{\theta}^{*})^{-1}\hat{I}(\tilde{\theta}^{*})\hat{H}(\tilde{\theta}^{*})^{-1}\dot{\rho}(\tilde{\theta}^{*})\right)^{-1/2}{n}^{1/2}\dot{\rho}({\tilde{\theta}^{*}})'\hat{H}(\tilde{\theta}^{*})^{-1}{S}(\tilde{\theta}^{*})\notag\\
&=\left(\dot{\rho}({\theta_0})'{H}(\theta_0)^{-1}{I}(\theta_0){H}(\theta_0)^{-1}\dot{\rho}({\theta_0})\right)^{-1/2}
{n}^{1/2}\dot{\rho}({\theta_0})'H(\theta_0)^{-1}S(\theta_0)+o_{p}(1)\notag\\
&\cond N(0, I_r),\label{eq: Ca AN2}
\end{align}
where the equality holds by \eqref{eq: Ca 1} and \eqref{eq: Ca 5}, and the convergence follows from \eqref{eq: Ca AN} and 
Slutsky's lemma. Finally, from \eqref{eq: Ca AN2} and the CMT
\begin{equation*}
C_{\alpha}(\rho_0)\cond \chi^2_{r}.
\end{equation*}

\section{Supplementary lemmas}
We first prove the following lemma that establishes 
the asymptotic distribution of a studentized quantity with the expected Hessian and information matrices and 
the score function evaluated at the true parameters. 
\begin{lemma}\label{lem: AD1}
Let Assumption \ref{A: AsyValid} hold and $p^{1+\delta_0}/n\to 0$ for some $0<\delta_0\leq 1$. 
Then, as $n\to\infty$
\begin{equation*}
{\left(\dot{\rho}(\theta_0)'H(\theta_0)^{-1}I(\theta_0)H(\theta_0)^{-1}\dot{\rho}(\theta_0)\right)^{-1/2}}\dot{\rho}(\theta_0)'H(\theta_0)^{-1}n^{1/2}S(\theta_0)
\cond N(0,I_r).
\end{equation*}
\end{lemma}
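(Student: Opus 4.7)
}
The plan is to combine the Cram\'er--Wold device with Lyapunov's central limit theorem for independent non-identically distributed summands. Set $V_n \equiv \dot{\rho}(\theta_0)'H(\theta_0)^{-1}I(\theta_0)H(\theta_0)^{-1}\dot{\rho}(\theta_0)$, which under Assumption \ref{A: AsyValid}\ref{AsyValid eval} together with the lower eigenvalue bound on $\dot{\rho}(\theta_0)'\dot{\rho}(\theta_0)$ inherited from Propositions \ref{prop: DB} and \ref{prop: Ca} is bounded above and below in the spectral sense, uniformly in $n$. By Cram\'er--Wold, it suffices to show that for every $a \in \mathbb{R}^r$ with $\Vert a\Vert = 1$,
\begin{equation*}
T_n \equiv a'V_n^{-1/2}\dot{\rho}(\theta_0)'H(\theta_0)^{-1}n^{1/2}S(\theta_0) \cond N(0,1).
\end{equation*}
Introducing $b_n \equiv H(\theta_0)^{-1}\dot{\rho}(\theta_0)V_n^{-1/2}a \in \mathbb{R}^{p+1}$ and $Z_{ni} \equiv -w_i (b_n'x_i)\dot{g}(y_i,x_i'\theta_0)$, the formula for $S(\theta_0)$ in \eqref{eq: scoref} gives $T_n = n^{-1/2}\sum_{i=1}^n Z_{ni}$.

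The first step would verify the basic structural properties of the $Z_{ni}$. Independence follows from Assumption \ref{A: AsyValid}\ref{AsyValid max}, and $\E[Z_{ni}]=0$ from the first-order condition $\E[\dot{g}(y_i,x_i'\theta_0)\mid x_i]=0$ that the correctly specified GLM satisfies at $\theta_0$ (note that the deterministic survey weights do not affect this conditional moment). By construction of $b_n$, the variance is already normalized:
\begin{equation*}
\sum_{i=1}^n \V(Z_{ni}/\sqrt{n}) = b_n'I(\theta_0)b_n = a'V_n^{-1/2}V_nV_n^{-1/2}a = 1.
\end{equation*}

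The hard part is verifying Lyapunov's condition $\sum_{i=1}^n \E[\vert Z_{ni}/\sqrt{n}\vert^{2+\delta_0}]\to 0$. Assumption \ref{A: AsyValid} provides the uniform bounds $\vert w_i\vert,\,\vert\dot{g}(y_i,x_i'\theta_0)\vert\leq C_u$, yielding $\vert Z_{ni}\vert\leq C_u^2\vert b_n'x_i\vert$. I would bound $\vert b_n'x_i\vert$ using Cauchy--Schwarz together with $\Vert x_i\Vert\leq\sqrt{p+1}\,C_u$ (which follows from $\Vert x_i\Vert_\infty\leq C_u$), and control $\Vert b_n\Vert_2$ via $\Vert H(\theta_0)^{-1}\Vert_2\leq 1/\lambda_l$, $\Vert V_n^{-1/2}\Vert_2 = O(1)$, and the operator-norm bound on $\dot{\rho}(\theta_0)$. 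Combined with the identity $\E[\vert Z_{ni}\vert^{2+\delta_0}]\leq (\max_j\vert Z_{nj}\vert)^{\delta_0}\E[Z_{ni}^2]$ and $\sum_i \E[Z_{ni}^2]=n$, these estimates give
\begin{equation*}
\sum_{i=1}^n \E\bigl[\vert Z_{ni}/\sqrt{n}\vert^{2+\delta_0}\bigr] \leq \bigl(\max_i\vert Z_{ni}\vert/\sqrt{n}\bigr)^{\delta_0}\cdot n^{-1}\sum_{i=1}^n\E[Z_{ni}^2] = O\bigl((p^{1+\delta_0}/n)^{\delta_0/2}\bigr),
\end{equation*}
which vanishes by the hypothesis $p^{1+\delta_0}/n\to 0$. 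Lyapunov's CLT then gives $T_n\cond N(0,1)$, and Cram\'er--Wold upgrades this to the vector-valued conclusion. The main obstacle is the bookkeeping of the dimension dependence in the uniform bound on $\vert Z_{ni}\vert$: the eigenvalue bounds on $H(\theta_0)$ and $V_n$ must be leveraged against the $\sqrt{p}$ growth arising from $\Vert x_i\Vert$ to ensure the Lyapunov ratio decays at the rate dictated by $p^{1+\delta_0}/n\to 0$.
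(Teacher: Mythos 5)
Your proposal is correct and follows essentially the same route as the paper: both apply an i.n.i.d.\ Lindeberg--Feller/Lyapunov CLT to the linear functional $\dot{\rho}(\theta_0)'H(\theta_0)^{-1}S(\theta_0)$ of the score, verify mean zero from the GLM first-order condition, bound the relevant eigenvalues via Assumption \ref{A: AsyValid}\ref{AsyValid eval} (plus the lower bound on $\dot{\rho}(\theta_0)'\dot{\rho}(\theta_0)$ imported from the propositions), and control the higher moments through the a.s.\ bound $\Vert x_i\Vert\leq \sqrt{p+1}\,C_u$; your Cram\'er--Wold reduction is just the scalar form of the paper's multivariate Lindeberg--Feller argument. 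The only substantive difference is bookkeeping: by peeling off only $\delta_0$ powers of the uniform bound against the exact variance normalization you in fact obtain the sharper rate $O\bigl((p/n)^{\delta_0/2}\bigr)$ (your displayed $O\bigl((p^{1+\delta_0}/n)^{\delta_0/2}\bigr)$ is a valid but loose restatement of it), so your argument only needs $p/n\to 0$, whereas the paper's cruder estimate $\sum_{i}\E\Vert X_{ni}\Vert^{2+2/\delta_0}=O\bigl(((p+1)^{1+\delta_0}/n)^{1/\delta_0}\bigr)$ genuinely uses the stated hypothesis $p^{1+\delta_0}/n\to 0$.
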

\begin{proof}[Proof of Lemma \ref{lem: AD1}]
Let $s_i(\theta_0)\equiv w_ix_i\dot{g}(y_i, x_i'\theta_0)$, $X_{ni}\equiv n^{-1/2}\dot{\rho}(\theta_0)'H(\theta_0)^{-1}s_i(\theta_0)$
and $\Sigma_n\equiv \V[\sum_{i=1}^nX_{ni}]=\dot{\rho}(\theta_0)'{H}(\theta_0)^{-1}
{I}(\theta_0){H}(\theta_0)^{-1}\dot{\rho}(\theta_0)$. Let $\nu_n\equiv \lambda_{\min}(\Sigma_n)$.  
We will verify the conditions of the multivariate Lindeberg-Feller CLT (see e.g. Theorem 9.3 of \cite{Hansen(2022)}). 
First note that $\E[X_{ni}]=0$ because $\E[s_i(\theta_0)\vert x_i]=-\E[x_iw_i(y_i-\dot{a}(x_i'\theta_0))\vert x_i]=0$. 
Moreover, we have  
\begin{align*}
\nu_n&=\min_{\tau\in\mathbb{R}^r\setminus \{0\}}\frac{\tau'\dot{\rho}(\theta_0)'{H}(\theta_0)^{-1}
{I}(\theta_0){H}(\theta_0)^{-1}\dot{\rho}(\theta_0)\tau}
{\tau'\tau}\notag\\
&\geq \min_{\tau\in\mathbb{R}^r\setminus \{0\}}\frac{\tau'\dot{\rho}(\theta_0)'{H}(\theta_0)^{-1}
{I}(\theta_0){H}(\theta_0)^{-1}\dot{\rho}(\theta_0)\tau}
{\tau'\dot{\rho}(\theta_0)'\dot{\rho}(\theta_0)\tau}\min_{\tau\in\mathbb{R}^r\setminus \{0\}}\frac
{\tau'\dot{\rho}(\theta_0)'\dot{\rho}(\theta_0)\tau}{\tau'\tau}\\
&\geq \lambda_{\min}({H}(\theta_0)^{-1}
{I}(\theta_0){H}(\theta_0)^{-1})\lambda_{\min}(\dot{\rho}(\theta_0)'\dot{\rho}(\theta_0))\\
&\geq \lambda_{\min}({H}(\theta_0)^{-1})
\lambda_{\min}({I}(\theta_0))\lambda_{\min}({H}(\theta_0)^{-1})\lambda_{\min}(\dot{\rho}(\theta_0)'\dot{\rho}(\theta_0))\\
&=\frac{\lambda_{\min}({I}(\theta_0))}{(\lambda_{\max}({H}(\theta_0))^2}\lambda_{\min}(\dot{\rho}(\theta_0)'\dot{\rho}(\theta_0))\\
&\geq \lambda_l^2/\lambda_u^2.
\end{align*}
where the first inequality follows from the extremal property of $\lambda_{\min}(\cdot)$, the second inequality is the eigenvalue product inequality (\cite{Hansen(2022b)}) and the last inequality is by  
Assumption \ref{A: AsyValid}\ref{AsyValid eval}. Next, 
we will verify the Lindeberg condition: for $\delta=\frac{2}{\delta_0}>0$ and any $\epsilon>0$
\begin{align}\label{eq: Lyap}
\frac{1}{\nu_n^{2}}\sum_{i=1}^n\E[\Vert X_{ni}\Vert^{2}1(\Vert X_{ni}\Vert\geq (\epsilon\nu_n^2)^{1/2})]	\leq \frac{1}{\nu_n^{2+\delta}\epsilon^{\delta/2}}\sum_{i=1}^n\E[\Vert X_{ni}\Vert^{2+\delta}]\to 0.
\end{align}
First, note that 
\begin{align}
\Vert \dot{\rho}(\theta_0)'{H}(\theta_0)^{-1}x_i\Vert ^{2+\delta}
&\leq \Vert\dot{\rho}(\theta_0)\Vert^{2+\delta}\left(\Vert {H}(\theta_0)^{-1}x_i\Vert^2\right)^{1+\delta/2}\notag\\
&\leq r^{1+\delta/2}\Vert\dot{\rho}(\theta_0)\Vert_2^{2+\delta}\left(\lambda_{\max}({H}(\theta_0)^{-1}{H}(\theta_0)^{-1})\Vert x_i\Vert^2\right)^{1+\delta/2}\notag\\
&\leq r^{1+\delta/2}\lambda_u^{2+\delta}\left(\frac{\Vert x_i\Vert^2}{(\lambda_{\min}({H}(\theta_0)))^2}\right)^{1+\delta/2}\notag\\
&\leq r^{1+\delta/2}\lambda_u^{2+\delta}\frac{(p+1)^{1+\delta/2}C_u^{2+\delta}}{\lambda_l^{2+\delta}}.\label{eq: Lyap2}
\end{align}
where the first inequality is by Cauchy-Schwarz, the second inequality is by the inequality 
$\Vert\dot{\rho}(\theta_0)\Vert\leq r^{1/2}\Vert\dot{\rho}(\theta_0)\Vert_2$ and 
the extremal property of $\lambda_{\max}(\cdot)$, the third inequality is by the eigenvalue product inequality (\cite{Hansen(2022b)}, Appendix B), and the last inequality is by Assumption \ref{A: AsyValid}\ref{AsyValid max} and \ref{AsyValid eval}.
Thus, using $\vert w_i\vert^{2+\delta}\vert \dot{g}(y_i, x_i'\theta_0)\vert^{2+\delta}
\leq C_u^{4+2\delta}$ and \eqref{eq: Lyap2}, we have 
\begin{align}
\sum_{i=1}^n\Vert X_{ni}\Vert^{2+\delta}
&\leq \frac{1}{n^{1+\delta/2}}\sum_{i=1}^n
\Vert \dot{\rho}(\theta_0)'{H}(\theta_0)^{-1}x_i\Vert ^{2+\delta}
\vert w_i\vert^{2+\delta}\vert \dot{g}(y_i, x_i'\theta_0)\vert^{2+\delta}\notag\\
&\leq \frac{1}{n^{\delta/2}}r^{1+\delta/2}\lambda_u^{2+\delta}\frac{(p+1)^{1+\delta/2}C_u^{2+\delta}}{\lambda_l^{2+\delta}}C_u^{4+2\delta}\notag\\
&\leq \left(\frac{(p+1)^{1+\delta_0}}{n}\right)^{1/\delta_0}r^{1+\delta/2}\lambda_u^{2+\delta}\frac{C_u^{6+3\delta}}{\lambda_l^{2+\delta}}\notag\\
&\to 0.\notag
\end{align}
This verifies \eqref{eq: Lyap} and the result follows. 
\end{proof}

Next, we present several lemmas to establish the consistency of the survey GLM Lasso estimator and  
confirm that the convergence rate obtained with i.i.d. data in the literature also holds with i.n.i.d. data.\par
  
To obtain the convergence rate of the Lasso estimator, 
following \cite{Buhlmann-vandeGeer(2011)} we define 
the empirical process associated with the negative 
log-likelihood, its local supremum, and the excess risk as:
\begin{align}
	v_n(\theta)
	&\equiv n^{-1}\sum_{i=1}^n\left(w_ig(y_i, x_i'\theta)-\E[w_ig(y_i, x_i'\theta)]\right),\quad \theta\in\mathbb{R}^{p+1},\label{def: ep}\\
	\mathbf{Z}_R
	&\equiv \sup_{\Vert \theta-\theta_0\Vert_{1}\leq R}
	\vert v_n(\theta)-v_n(\theta_0)\vert,\label{def: ZM}\\
	\mathcal{E}(\theta)\
	&\equiv \E\left[n^{-1}\sum_{i=1}^n(w_ig(y_i, x_i'\theta)-w_ig(y_i, x_i'\theta_0))\right].\label{def: ER}
\end{align} 
By Jensen's inequality, 
\begin{align*}
	\mathcal{E}(\theta)
	&=\E\left[n^{-1}\sum_{i=1}^n(w_ig(y_i, x_i'\theta)-w_ig(y_i, x_i'\theta_0))\right]=n^{-1}\sum_{i=1}^n\E\left[w_i\log \frac{f(y_i\vert x_i, \theta)}{f(y_i\vert x_i, \theta_0)}\right]\\
	&\geq
	n^{-1}\sum_{i=1}^nw_i\log \E\left[\frac{f(y_i\vert x_i, \theta)}{f(y_i\vert x_i, \theta_0)}\right]=0.
\end{align*}
Therefore, 
\begin{equation}\label{eq: theta argmin}
	\theta_0=\arg \min_{\theta\in\mathbb{R}^{p+1}}	\mathcal{E}(\theta)=\arg \min_{\theta\in\mathbb{R}^{p+1}}\E\left[n^{-1}\sum_{i=1}^nw_ig(y_i, x_i'\theta)\right].
\end{equation}
The following lemma shows that $\mathbf{Z}_R$ is 
proportional to $R$ and follows from Lemma 14.20 of \cite{Buhlmann-vandeGeer(2011)}.

\begin{lemma}[Concentration inequality]\label{lem: concentration}
	Let Assumption \ref{A: AsyValid} hold. Then, for all $R\leq \eta/C_u$
	\begin{equation}\label{ineq: concentration}
		\E[\mathbf{Z}_R]\leq 4R a_n,\quad a_n\equiv C_u^2(C_u^2+0.5\eta C_u^2) \left(\frac{2\log(2(p+1))}{n}\right)^{1/2}.
	\end{equation}
\end{lemma}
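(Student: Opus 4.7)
The plan is to follow the classical symmetrization--contraction template of Lemma 14.20 in \cite{Buhlmann-vandeGeer(2011)}, while verifying that neither the non-identical distribution of the summands nor the presence of the deterministic survey weights $w_i$ disrupts any step.

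First I would apply the standard symmetrization inequality for averages of centered independent summands (Lemma 14.14 of \cite{Buhlmann-vandeGeer(2011)}) to obtain
\begin{equation*}
\E[\mathbf{Z}_R] \;\le\; 2\,\E\!\left[\sup_{\Vert\theta-\theta_0\Vert_1\le R}\left|n^{-1}\sum_{i=1}^n \epsilon_i\, w_i\bigl(g(y_i,x_i'\theta)-g(y_i,x_i'\theta_0)\bigr)\right|\right],
\end{equation*}
with i.i.d. Rademacher $\{\epsilon_i\}$ independent of the data. The hypothesis $R\le \eta/C_u$ combined with $\Vert x_i\Vert_\infty\le C_u$ (Assumption \ref{A: AsyValid}\ref{AsyValid max}) ensures that on the feasible set $|x_i'(\theta-\theta_0)|\le C_u R\le \eta$, so $x_i'\theta$ stays in the local window where \eqref{A: dg1}--\eqref{A: dg2} apply.

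Next, for each $i$ set $\phi_i(s) := w_i[g(y_i,x_i'\theta_0+s)-g(y_i,x_i'\theta_0)]$ for $|s|\le \eta$. Then $\phi_i(0)=0$, and a second-order Taylor expansion, together with the bounds $|w_i|\le C_u$, $|\dot g(y_i,x_i'\theta_0)|\le C_u$ and $\sup_{|s|\le\eta}|\ddot g(y_i,x_i'\theta_0+s)|\le C_u$, gives a Lipschitz constant on $[-\eta,\eta]$ proportional to $C_u(C_u+0.5\eta C_u)$. Extending $\phi_i$ affinely outside this interval leaves the sup unchanged and produces a globally Lipschitz function, to which I would apply the Ledoux--Talagrand contraction principle (Theorem 14.4 of \cite{Buhlmann-vandeGeer(2011)}) conditional on the data. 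This reduces the symmetrized supremum to a constant multiple of the linear Rademacher process
\begin{equation*}
\E\!\left[\sup_{\Vert u\Vert_1\le R}\left|n^{-1}\sum_{i=1}^n \epsilon_i\, x_i'u\right|\right] \;\le\; R\,\E\!\left[\left\Vert n^{-1}\sum_{i=1}^n \epsilon_i x_i\right\Vert_{\infty}\right],
\end{equation*}
by Hölder's inequality. A standard Hoeffding-based maximal inequality for the maximum of $(p+1)$ bounded, $C_u$-sub-Gaussian averages delivers $\E\bigl[\Vert n^{-1}\sum_{i=1}^n \epsilon_i x_i\Vert_\infty\bigr]\le C_u \sqrt{2\log(2(p+1))/n}$, and assembling all constants yields the stated bound $4R a_n$.

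The delicate point is the localization underlying the contraction step: the Lipschitz properties of $g$ are only guaranteed on the $\eta$-neighborhood of $x_i'\theta_0$, so Ledoux--Talagrand cannot be applied to $\phi_i$ directly on $\mathbb{R}$. The condition $R\le \eta/C_u$ is precisely what forces the sup to be taken over arguments contained in $[-\eta,\eta]$, so truncating $\phi_i$ outside this interval (where its values never enter the sup) yields a globally Lipschitz function with the same constant. Beyond this, the i.n.i.d. setting is benign: symmetrization, contraction and the maximal inequality for bounded variables only require independence, and the weights $w_i$ are absorbed into the Lipschitz constant via $|w_i|\le C_u$, so the template of \cite{Buhlmann-vandeGeer(2011)} transfers with only bookkeeping changes.
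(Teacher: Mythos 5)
Your proof is correct and follows essentially the same route as the paper: the paper verifies the same local Lipschitz bound for $s\mapsto w_i g(y_i,s)$ on the $\eta$-window (using $|x_i'(\theta-\theta_0)|\le C_u R\le\eta$) and then invokes Lemma 14.20 of B\"uhlmann and van de Geer directly, whereas you additionally unpack that lemma's internals (symmetrization, Ledoux--Talagrand contraction, and the Rademacher maximal inequality). Your explicit treatment of the localization issue for the contraction step is a sound elaboration of what the citation to Lemma 14.20 leaves implicit.
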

\begin{proof}[Proof of Lemma \ref{lem: concentration}]
	Let $\gamma(y_i, w_i, s)=w_ig(y_i,s), i=1,\dots,n,$ in Lemma 14.20 of \cite{Buhlmann-vandeGeer(2011)}. 
	Note that $|x_i'(\theta-\theta_0)|\leq \max_{1\leq j\leq p+1}|x_{ij}|\Vert\theta-\theta_0\Vert_1\leq C_uR \leq C_u\eta/C_u=\eta$. 
By the second-order Taylor expansion and Assumption \ref{A: AsyValid}
\begin{align}
w_ig(y_i, x_i'\theta)-w_ig(y_i,x_i'\theta_0)
&=w_i\dot{g}(y_i,x_i'\theta_0)x_i'(\theta-\theta_0)+0.5 w_ix_i'(\theta-\theta_0)\ddot{g}(y_i,x_i'\theta^{*})
x_i'(\theta-\theta_0),
\end{align}
where $\theta^{*}$ is between $\theta$ and $\theta_0$. 
By the triangle inequality and Assumption \ref{A: AsyValid},
\begin{align}
\vert w_i(g(y_i, x_i'\theta)-g(y_i,x_i'\theta_0))\vert
&\leq \left\vert \left(w_i\dot{g}(y_i,x_i'\theta_0)+0.5 w_ix_i'(\theta-\theta_0)\ddot{g}(y_i,x_i'\theta^{*})\right)
x_i'(\theta-\theta_0)\right\vert\notag\\
&\leq (C_u^2+0.5R C_u^3) |x_i'(\theta-\theta_0)|.\label{eq: 2Taylor}
\end{align}	
Hence $\gamma(y_i, w_i, s)=w_ig(y_i,s)$ is Lipschitz, and Lemma 14.20 of \cite{Buhlmann-vandeGeer(2011)} and Assumption \ref{A: AsyValid} yield 
	\begin{align*}
		\E[\mathbf{Z}_R]
		&\leq 4R (C_u^2+0.5R C_u^3)\left(\frac{2\log(2(p+1))}{n}\right)^{1/2}\E\left[\max_{1\leq j\leq p+1}n^{-1}\sum_{i=1}^nx_{ij}^2\right]\\
		&\leq 4R C_u^2(C_u^2+0.5R C_u^3) \left(\frac{2\log(2(p+1))}{n}\right)^{1/2}.
	\end{align*}
\end{proof}

We first recall the compatibility condition for a subset of indices $M\subseteq\{1,\dots, p+1\}$ which 
represents the compatibility between a positive definite matrix (of the expected Hessian-type) and the sparsity of the model coefficients. 
\begin{assumption}[Compatibility Condition (CC)]\label{as: CC}
	For a subset of indices $M\subseteq\{1,\dots, p+1\}$, there exists $\kappa(M)>0$ such that for all $\theta\in\mathbb{R}^{p+1}$ satisfying $\Vert\theta_{-M}\Vert_1\leq 3\Vert\theta_{M}\Vert_1$ it holds that $\Vert\theta_{M}\Vert_1^2\leq (\theta'H\theta)|M|/\kappa^2(M)$ as $n\to\infty$, where $H$ is a positive definite fixed matrix. 
\end{assumption}

Related to the CC are the \emph{restricted eigenvalue condition} \citep[Chapter 29]{Hansen(2022b)}
and the \emph{restricted isometry condition} \citep{Negahban-etal(2012)}. For detailed discussions, we refer to p.129 and Sections 6.12 and 6.13 of \cite{Buhlmann-vandeGeer(2011)}. The next assumption concerns the 
the quadratic behaviour of the excess risk around the true parameter. 
\begin{assumption}[Quadratic Margin Condition (QMC)]\label{as: QMC}
	There exist constants $\eta>0$, $c>0$ and a positive definite matrix $H$ such that 
	$\mathcal{E}(\theta)\geq c\Vert H^{1/2}(\theta-\theta_0)\Vert^2$ for all $\theta$ 
	satisfying $\Vert X(\theta-\theta_0)\Vert_\infty\leq \eta$. 
\end{assumption}
\noindent For $c>0$ in Assumption \ref{as: QMC}, define the oracle parameter vector $\theta^{*}$ as 
\begin{equation}\label{eq: oracle}
	\theta^{*}\equiv \arg \min_{\theta: M_\theta\subseteq\{1,\dots, p+1\}}
	\left(3\,\mathcal{E}(\theta)+\frac{8\lambda^2 m_\theta}{\kappa^2(M_\theta)c}\right),
\end{equation}
where $M_\theta\equiv \{1\}\cup\{j:\beta_{j}\neq 0\}$ and $m_\theta\equiv |M_\theta|$ denotes the cardinality of the subset $M_\theta$. 
Moreover, let  
\begin{equation}\label{eq: epsstar}
	\eps^{*}\equiv \frac{3}{2}\mathcal{E}(\theta^{*})+\frac{8\lambda^2m_{*}}{2\kappa_{*}^2c},
\end{equation}
where $m_{*}=\vert M_{\theta^{*}}\vert$ and $\kappa_{*}=\kappa(M_{\theta^{*}})$. 
\begin{assumption}[$\Vert\cdot\Vert_\infty$ neighborhood]\label{as: nbhd}
For $\theta^{*}$ and $\eps^{*}$ defined in \eqref{eq: oracle} and \eqref{eq: epsstar}, 
assume that $\Vert X(\theta^{*}-\theta_0)\Vert_\infty\leq \eta$ and 
$\Vert X(\theta-\theta_0)\Vert_\infty\leq \eta$ for all $\Vert \theta-\theta^{*}\Vert_1\leq R$, where $R\equiv \frac{\eps^{*}}{\lambda_0}$ for some $\lambda_0>0$, and $\eta>0$ is given in Assumption \ref{as: QMC}. 
\end{assumption}
\noindent Next, we recall Theorem 6.4 of \cite{Buhlmann-vandeGeer(2011)} (see also Corollary 6.6 therein) to derive the consistency 
and rate of convergence of the GLM Lasso estimator. The key condition for the result, in addition to Assumptions \ref{as: CC}--\ref{as: nbhd}, is the convexity of 
the loss function (i.e. the convexity of $\rho_f$ in $f$ in \cite{Buhlmann-vandeGeer(2011)}'s notation) which holds because $wg(y, t)$ is convex in $t$. 
\begin{proposition}[Theorem 6.4 of \cite{Buhlmann-vandeGeer(2011)}]\label{thm: Lasso Consistency}
Suppose that there exist $\eta>0$, $c>0$ and a positive definite matrix $H$ such that 
\begin{enumerate}[label=(\alph*)]
\item 
Assumption \ref{as: CC} holds for all subsets of indices $M\subseteq\{1,\dots, p+1\}$; 
\item 
Assumptions \ref{as: QMC} and \ref{as: nbhd} hold;
\item 
The function $g(y, t)$ is convex in $t$ for all $y$;
\item $\lambda$ satisfies
 $\lambda\geq 8\lambda_0$. 
 \end{enumerate}
Then on the set 
	\begin{equation}\label{eq: event F}
		\mathcal{F}=\{\mathbf{Z}_{R}\leq \lambda_0 R\}= \{\mathbf{Z}_{R}\leq \eps^{*}\},
	\end{equation}
	where $\mathbf{Z}_{R}$ is defined in \eqref{def: ZM}, it holds that 
	\begin{equation}\label{eq: oracle inequality}
		\mathcal{E}(\hat{\theta})+\lambda \Vert \hat{\theta}-\theta^{*}\Vert_1
		\leq 6\,\mathcal{E}({\theta}^{*})+\frac{16\lambda^2 m_{*}}{c\kappa_{*}^{2}}.
	\end{equation}
\end{proposition}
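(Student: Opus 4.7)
The plan is to adapt the classical oracle-inequality argument for $\ell_1$-penalized GLMs (Bühlmann and van de Geer, 2011, Theorem 6.4) to the present weighted-likelihood setting. Since Lemma \ref{lem: concentration} already delivers the required empirical-process bound and Assumptions \ref{as: CC}--\ref{as: nbhd} mirror those of Bühlmann and van de Geer, the adaptation is largely cosmetic; I sketch the main steps.

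The starting point is the basic inequality. Because $\hat{\theta}$ minimizes $-L(\theta) + \lambda\|\beta\|_1$, we have $-L(\hat{\theta}) + \lambda\|\hat{\beta}\|_1 \leq -L(\theta^*) + \lambda\|\beta^*\|_1$. Writing $-L(\theta) = v_n(\theta) + \E[-L(\theta)]$ and subtracting $\E[-L(\theta_0)]$ on both sides yields
$$\mathcal{E}(\hat{\theta}) + \lambda\|\hat{\beta}\|_1 \leq \mathcal{E}(\theta^*) + \bigl(v_n(\theta^*) - v_n(\hat{\theta})\bigr) + \lambda\|\beta^*\|_1.$$
On the event $\mathcal{F}$, Lemma \ref{lem: concentration} together with a rescaling/convexity argument on $R \mapsto \E[\mathbf{Z}_R]/R$ and the triangle inequality yield the Lipschitz-in-$\ell_1$ control $|v_n(\theta) - v_n(\theta^*)| \leq \lambda_0\|\theta - \theta^*\|_1$ for $\theta$ in the neighborhood of $\theta^*$ governed by Assumption \ref{as: nbhd}. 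If $\hat{\theta}$ escapes this neighborhood, one replaces it by a convex combination $\tilde{\theta} = t\hat{\theta} + (1-t)\theta^*$ with $\|\tilde{\theta} - \theta^*\|_1 = R$; convexity of $g$ (hence of the penalized loss) ensures $\tilde{\theta}$ inherits the basic inequality, and a contradiction argument eventually transfers the conclusion back to $\hat{\theta}$ itself.

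Next, split the penalty across the active set $M_* := M_{\theta^*}$ and its complement. The triangle inequality $\|\beta^*\|_1 - \|\hat{\beta}\|_1 \leq \|\hat{\beta}_{M_*} - \beta^*_{M_*}\|_1 - \|\hat{\beta}_{-M_*}\|_1$, combined with the empirical-process bound above and the choice $\lambda \geq 8\lambda_0$, produces the standard cone condition $\|\hat{\theta}_{-M_*} - \theta^*_{-M_*}\|_1 \leq 3\|\hat{\theta}_{M_*} - \theta^*_{M_*}\|_1$. On this cone, Assumption \ref{as: CC} gives
$$\|\hat{\theta}_{M_*} - \theta^*_{M_*}\|_1^2 \leq (\hat{\theta}-\theta^*)' H (\hat{\theta}-\theta^*)\, m_*/\kappa_*^2,$$
and Assumption \ref{as: QMC} lower-bounds $(\hat{\theta}-\theta^*)' H (\hat{\theta}-\theta^*)$ in terms of $\mathcal{E}(\hat{\theta}) + \mathcal{E}(\theta^*)$ (via a triangle-type inequality on $\|H^{1/2}(\cdot - \theta_0)\|$). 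Applying the elementary bound $2ab \leq a^2/\alpha + \alpha b^2$ absorbs the $\ell_1$-contribution into the $\mathcal{E}(\hat{\theta})$ term on the left, leaving the bound with the stated constants $6$ and $16$.

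The main obstacle is Step 2: controlling the empirical process when $\hat{\theta}$ could lie outside the $R$-ball around $\theta^*$. Without convexity of $g$, the difference $v_n(\hat{\theta}) - v_n(\theta^*)$ could be arbitrarily large, and the Lipschitz-in-$\ell_1$ bound would fail. The convexity of the penalized objective, in conjunction with Assumption \ref{as: nbhd}, is precisely what enables the convex-combination reduction to the bounded case, after which the rest is algebraic manipulation.
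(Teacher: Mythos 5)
The paper offers no proof of this proposition: it is stated as a direct import of Theorem 6.4 of B\"uhlmann and van de Geer (2011) (the only thing the authors verify is that the weighted loss $w\,g(y,t)$ remains convex in $t$), and your sketch is a faithful reconstruction of that source's standard argument\textemdash basic inequality, convex-combination reduction to the $R$-ball, cone condition from $\lambda\geq 8\lambda_0$, compatibility plus margin condition, and the $2ab\leq a^2/\alpha+\alpha b^2$ absorption. One minor imprecision: the event $\mathcal{F}=\{\mathbf{Z}_{R}\leq \lambda_0 R\}$ delivers only the uniform bound $\vert v_n(\tilde{\theta})-v_n(\theta^{*})\vert\leq \lambda_0 R=\eps^{*}$ over the ball, not the pointwise Lipschitz control $\vert v_n(\theta)-v_n(\theta^{*})\vert\leq \lambda_0\Vert\theta-\theta^{*}\Vert_1$ you assert, but the uniform bound is all the argument requires once the convex-combination step has confined $\tilde{\theta}$ (and ultimately $\hat{\theta}$) to that ball.
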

In the following lemma, we obtain the rate convergence of the Lasso estimator 
by verifying the conditions of Proposition \ref{thm: Lasso Consistency}. 
\begin{lemma}\label{lem: consistency}
	Under Assumption \ref{A: AsyValid} and the conditions of Proposition \ref{prop: DB}, 
	$\Vert\hat{\theta}-\theta_0\Vert_1=O_p(m_0\lambda)$, $\Vert \hat{\theta}-\theta_0\Vert^2=O_p(m_0\lambda^2)$ and 
	$n^{-1}\Vert X(\hat{\theta}-\theta_0)\Vert_2^2=O_p(m_0\lambda^2)$. 
\end{lemma}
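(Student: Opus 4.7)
The plan is to invoke Proposition \ref{thm: Lasso Consistency} with the oracle choice $\theta^{*}=\theta_0$, so that $m_{*}=m_0$, $\kappa_{*}=\kappa(\{j:\theta_{0(j)}\neq 0\})$ and $\mathcal{E}(\theta^{*})=0$. On the event $\mathcal{F}=\{\mathbf{Z}_R\leq \lambda_0 R\}$ with $R=\eps^{*}/\lambda_0$, the oracle inequality \eqref{eq: oracle inequality} reduces to
\begin{equation*}
\mathcal{E}(\hat\theta)+\lambda\,\Vert \hat\theta-\theta_0\Vert_1\leq \frac{16\lambda^2 m_0}{c\kappa_{*}^{2}},
\end{equation*}
which yields the $\ell_1$ rate immediately. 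To make this rigorous I would verify the four hypotheses of Proposition \ref{thm: Lasso Consistency} and then translate the resulting oracle inequality into the three claimed rates.

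\emph{Step 1 (Compatibility).} Using the matrix $H$ from Assumption \ref{A: AsyValid}\ref{AsyValid Lip}, Assumption \ref{A: AsyValid}\ref{AsyValid eval} gives $\lambda_{\min}(H)\geq \lambda_l>0$, so for every $M\subseteq\{1,\dots,p+1\}$ and every $\theta\in\mathbb{R}^{p+1}$,
\begin{equation*}
\theta'H\theta\geq \lambda_l\Vert\theta\Vert^2\geq \lambda_l\Vert\theta_M\Vert^2\geq \frac{\lambda_l}{|M|}\Vert\theta_M\Vert_1^2,
\end{equation*}
so Assumption \ref{as: CC} holds uniformly with $\kappa^2(M)\geq \lambda_l$.

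\emph{Step 2 (Quadratic margin and $\ell_\infty$ neighborhood).} Assumption \ref{A: AsyValid}\ref{AsyValid Lip} gives \eqref{A: Strong Con} with $c=1$, which is exactly Assumption \ref{as: QMC} on the set $\{\Vert X(\theta-\theta_0)\Vert_\infty\leq \eta\}$. Since $\Vert X(\theta-\theta_0)\Vert_\infty\leq \max_i\Vert x_i\Vert_\infty \Vert\theta-\theta_0\Vert_1\leq C_u\Vert\theta-\theta_0\Vert_1$, Assumption \ref{as: nbhd} holds whenever $\Vert\theta^{*}-\theta_0\Vert_1$ and $R$ are at most $\eta/C_u$. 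For $\theta^{*}=\theta_0$ the first is trivial, while with the choices below $R=\eps^{*}/\lambda_0 = O(m_0\log p/n)\to 0$, so the neighborhood condition is met for $n$ large.

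\emph{Step 3 (Choice of $\lambda_0$ and high-probability event).} Set $\lambda_0=C_1\sqrt{\log p/n}$ for $C_1$ sufficiently large so that $\lambda\geq 8\lambda_0$, which is feasible by assumption on $\lambda$. By Lemma \ref{lem: concentration}, $\E[\mathbf{Z}_R]\leq 4R a_n$ with $a_n=O(\sqrt{\log p/n})$. Upgrading this first-moment bound to a high-probability bound via Massart's bounded-differences/concentration inequality (exactly as in the proof of Lemma 14.20 of \cite{Buhlmann-vandeGeer(2011)}, adapted to i.n.i.d.\ summands using the uniform envelopes $|w_i(g(y_i,x_i'\theta)-g(y_i,x_i'\theta_0))|\leq (C_u^2+0.5\eta C_u^3)|x_i'(\theta-\theta_0)|$ from \eqref{eq: 2Taylor}), one obtains $P(\mathbf{Z}_R\leq \lambda_0 R)\to 1$, so $P(\mathcal{F})\to 1$.

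\emph{Step 4 (Deducing the three rates).} On $\mathcal{F}$ the oracle inequality with $\theta^{*}=\theta_0$ gives $\Vert\hat\theta-\theta_0\Vert_1\leq 16\lambda m_0/(c\kappa_{*}^2)\leq 16\lambda m_0/(c\lambda_l)$, which is the first claim. The same inequality gives $\mathcal{E}(\hat\theta)=O_p(\lambda^2 m_0)$. Next, since $\Vert\hat\theta-\theta_0\Vert_1=o_p(1)$ we may apply Assumption \ref{as: QMC} at $\hat\theta$ to obtain
\begin{equation*}
\lambda_l \Vert\hat\theta-\theta_0\Vert^2\leq \Vert H^{1/2}(\hat\theta-\theta_0)\Vert^2\leq \mathcal{E}(\hat\theta)=O_p(m_0\lambda^2),
\end{equation*}
giving the $\ell_2$ rate. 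Finally, using $\lambda_{\max}(n^{-1}X'X)=O_p(1)$, which was established in \eqref{eq: lambda max Op1} of the proof of Lemma \ref{lem: HI rate},
\begin{equation*}
n^{-1}\Vert X(\hat\theta-\theta_0)\Vert_2^2\leq \lambda_{\max}(n^{-1}X'X)\Vert\hat\theta-\theta_0\Vert^2=O_p(m_0\lambda^2).
\end{equation*}

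\emph{Main obstacle.} The only non-routine point is Step 3: Proposition \ref{thm: Lasso Consistency} is stated on the set $\mathcal{F}$, whereas Lemma \ref{lem: concentration} only controls $\E[\mathbf{Z}_R]$. Promoting the expectation bound to a high-probability bound for independent non-identically distributed summands requires a concentration inequality for the supremum of an empirical process, which I would obtain either via Massart's inequality applied to the Lipschitz contractions $\theta\mapsto w_i g(y_i,x_i'\theta)$ or via a direct peeling argument exploiting boundedness of $w_i$, $\Vert x_i\Vert_\infty$ and $\ddot g$ (all supplied by Assumption \ref{A: AsyValid}). Once this is in hand the remaining algebra is standard.
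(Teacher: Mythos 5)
Your proposal is correct and follows essentially the same route as the paper: both verify the hypotheses of Proposition \ref{thm: Lasso Consistency} with the oracle choice $\theta^{*}=\theta_0$, upgrade the expectation bound of Lemma \ref{lem: concentration} to a high-probability bound via a Talagrand-type concentration inequality (the paper uses Bousquet's inequality where you invoke Massart's), and then read off the three rates from the oracle inequality, the quadratic margin condition and $\lambda_{\max}(n^{-1}X'X)=O_p(1)$. The only slip is that $R=\eps^{*}/\lambda_0=O(m_0\lambda)=O\bigl(m_0\sqrt{\log p/n}\bigr)$ rather than $O(m_0\log p/n)$, but this still tends to zero under the stated rate conditions, so nothing breaks.
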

\begin{proof}[Proof of Lemma \ref{lem: consistency}]
Following the remark of \cite{Buhlmann-vandeGeer(2011)} preceding 
Corollary 6.6 therein, let us set 
$M_\theta=\{1\}\cup \widetilde{M}_\theta$, where $\widetilde{M}_\theta\subseteq \{2,\dots, p+1\}$ in the definition of the oracle 
\eqref{eq: oracle}. As a result, the unpenalized intercept $\alpha$ is kept in the oracle. When $\widetilde{M}_{\theta}=\widetilde{M}_{\theta_0}$, that is, $M_\theta=\{1\}\cup \widetilde{M}_{\theta_0}$, we have 
$\theta^{*}=\theta_0$, $\mathcal{E}(\theta^{*})=\mathcal{E}(\theta_0)=0$ and $\varepsilon^{*}=\frac{4\lambda^2m_{0}}{\kappa_{0}^2c}$.\par 
The proof consists of three steps. The first step verifies the assumption of Proposition 
Proposition \ref{thm: Lasso Consistency}. The second step provides a lower bound for $P[\mathcal{F}]$, where   
the event $\mathcal{F}$ is defined in \eqref{eq: event F}. 
The final step completes the proof.

\paragraph*{Step 1: Verifying the assumptions of Proposition \ref{thm: Lasso Consistency}.}~\\
	We will verify that the conditions of Proposition \ref{thm: Lasso Consistency} hold under Assumption \ref{A: AsyValid}. 
	Since $\lambda_{\min}(H)>\lambda_l>0$, by Lemma 6.23 of \cite{Buhlmann-vandeGeer(2011)} the adaptive restricted eigenvalue condition holds. The latter, in turn, implies that Assumption \ref{as: CC} holds for all index sets 
	$M\subset\{1,\dots, p+1\}$ (see \cite{Buhlmann-vandeGeer(2011)}, p.162). 
Assumption \ref{as: QMC} holds by the condition in \eqref{A: Strong Con} in Assumption \ref{A: AsyValid}. Next, we verify 
Assumption \ref{as: nbhd}. Let $\lambda_0=\frac{\lambda}{8}=\frac{C}{8}\sqrt{\frac{\log p}{n}}$. 
If $\Vert\theta-\theta_0\Vert_1\leq R$,  since $m_0\lambda\to 0$ from the rate assumption in Proposition \ref{prop: DB}, for $n$ large
\begin{equation}
\Vert X(\theta-\theta^{*})\Vert_\infty=\Vert X(\theta-\theta_0)\Vert_\infty\leq C_u\Vert\theta-\theta_0\Vert_1\leq C_uR=\frac{4C_u\lambda^2m_0}{\kappa_0^2c\lambda_0}
=\frac{32C_u\lambda m_0}{\kappa_0^2c}\leq \eta.
\end{equation} 
The conditions of Proposition \ref{thm: Lasso Consistency} are therefore satisfied, and \eqref{eq: oracle inequality} implies that 
	\begin{equation}\label{eq: oracle inequality2}
		\mathcal{E}(\hat{\theta})+\lambda \Vert \hat{\theta}-\theta_0\Vert_1
		\leq \frac{16\lambda^2 m_{0}}{c\kappa_{0}^{2}},
	\end{equation}
hence on $\mathcal{F}$
\begin{equation}\label{eq: oracle inequality3}
\lambda \Vert \hat{\theta}-\theta_0\Vert_1\leq \frac{16\lambda^2 m_{0}}{c\kappa_{0}^{2}}.
\end{equation}	
\paragraph*{Step 2: Bounding $P[\mathcal{F}]$ for $\mathcal{F}$ defined in \eqref{eq: event F}. }~\\
Set in Theorem A.1 of \cite{vandeGeer(2008)} that 
$\gamma(Z_i)=w_i[g(y_i, x_i'\theta)-g(y_i, x_i'\theta_0)]$. Following \eqref{eq: 2Taylor} for $\Vert \theta-\theta_0\Vert_1\leq R$ 
\begin{align}
	\vert w_i(g(y_i, x_i'\theta)-g(y_i,x_i'\theta_0))\vert
	&\leq \left\vert \left(w_i\dot{g}(y_i,x_i'\theta_0)+0.5 w_ix_i'(\theta-\theta_0)\ddot{g}(y_i,x_i'\theta^{*})\right)
	x_i'(\theta-\theta_0)\right\vert\notag\\
	&\leq (C_u^2+0.5R C_u^3) |x_i'(\theta-\theta_0)|\notag\\
	&\leq \eta(C_u^2+0.5\eta C_u^2).\notag
\end{align}	 
Therefore, 
\begin{align}
\Vert \gamma\Vert_\infty
&\leq \eta(C_u^2+0.5\eta C_u^2)\equiv b_n,\label{eq: bn def}\\
n^{-1}\sum_{i=1}^n\V[\gamma(Z_i)]
&\leq n^{-1}\sum_{i=1}^n\E[\gamma(Z_i)^2]
\leq \eta^2(C_u^2+0.5\eta C_u^2)^2= b_n^2.\notag
\end{align}
Then, by the Bousquet's inequality (see Theorem A.1 of \cite{vandeGeer(2008)}) followed by Lemma \ref{lem: concentration}
\begin{align*}
e^{-nt^2}
&\geq P\left[\mathbf{Z}_{R}\geq
\E[\mathbf{Z}_{R}]+t\sqrt{2(b_n^2+2b_n\E[\mathbf{Z}_{R}])}+2t^2b_n/3\right]\notag\\
&\geq P\left[\mathbf{Z}_{R}\geq
4Ra_n+t\sqrt{2(b_n^2+8a_nb_nR)}+2t^2b_n/3\right],\notag
\end{align*}
where $a_n$ is defined in \eqref{ineq: concentration}. Replacing $t$ by $4Rt$ in the above inequality yields 
\begin{align}
P\left[\mathbf{Z}_{R}\leq 4R(a_n+t\sqrt{2(b_n^2+8a_nb_nR)}+8b_nR^{2}t^2/3)\right]
&=P\left[\mathbf{Z}_{R}\leq \lambda_0R\right]\notag\\
& \geq 1-e^{-16R^{2}nt^2},\label{eq: mathcal prob}
\end{align}
where 
$\lambda_0=4(a_n+t\sqrt{2(b_n^2+8a_nb_nR)}+8b_nR^{2}t^2/3)$. 
\paragraph*{Step 3: Completing the proof. }~\\
Since $nR^2=O(m_0^2\log p)$, with a suitable choice of $t$ (hence with a suitable choice of $C$ in $\lambda_0$ and $\lambda$), 
we obtain from \eqref{eq: oracle inequality3} and 
\eqref{eq: mathcal prob} that 
$\Vert\hat{\theta}-\theta_0\Vert_1=O_p(m_0\lambda)$.
By Corollary 6.4 of 
\cite{Buhlmann-vandeGeer(2011)}, 
	$\Vert \hat{\theta}-\theta_0\Vert_2^2=\Vert \hat{\theta}-\theta_0\Vert^2=O_p(m_0\lambda^2)$. 
	Combining the latter with \eqref{eq: lambda max Op1}, we obtain 
\begin{equation} 
	\Vert X(\hat{\theta}-\theta_0)\Vert_2^2/n	
\leq \lambda_{\max}(n^{-1}X'X)\Vert \hat{\theta}-\theta_0\Vert_2^2=O_p(m_0\lambda^2).
	\end{equation}
\end{proof}
\end{document}